%% Template Paper ISIT 2013
%%
%% October 2012, Stefan M. Moser
%% based on various earlier templates
%%
%% Please note that your paper must be no more than five pages in
%% the IEEEtran conference style as presented here (including figures,
%% references, etc.!)

\documentclass[peerreview,a4paper,12pt]{IEEEtran}

%% Conference papers do not typically use \thanks and this command
%% is locked out in conference mode. If really needed, such as for
%% the acknowledgment of grants, uncomment the following:
%\IEEEoverridecommandlockouts

\usepackage{amsthm}
\usepackage{amsmath}
\usepackage{amsfonts,amssymb,amsmath}
\usepackage{graphicx}
\usepackage{epsfig}
\usepackage[applemac]{inputenc}
\usepackage{latexsym}
\usepackage{dsfont}
\usepackage{xr}
\externaldocument{paper1}

\newtheorem{mydef}{Definition}
\newtheorem{mycor}{Corollary}
\newtheorem{myprop}{Proposition}
\newtheorem{mythe}{Theorem}
\newtheorem{mylem}{Lemma}
\newtheorem{mynot}{Notation}
\newtheorem{myrem}{Remark}

\newtheorem{myconj}{Conjecture}
\newtheorem{myex}{Example}

\DeclareMathOperator*{\scon}{scon}

\DeclareMathOperator{\proj}{Proj}

\begin{document}

\sloppy

\title{Ergodic Theory Meets Polarization. II: \\A Foundation of Polarization Theory} 

\author{
Rajai Nasser\\
School of Computer and Communication Sciences, EPFL\\
Lausanne, Switzerland\\
Email: rajai.nasser@epfl.ch
\thanks{This paper was presented in part at the IEEE International Symposium on Information Theory, Hong Kong, June 2015.}
}

\maketitle

\setcounter{page}{1}

\begin{abstract}
An open problem in polarization theory is to determine the binary operations that always lead to polarization (in the general multilevel sense) when they are used in Ar{\i}kan style constructions. This paper, which is presented in two parts, solves this problem by providing a necessary and sufficient condition for a binary operation to be polarizing. This (second) part provides a foundation of polarization theory based on the ergodic theory of binary operations which we developed in the first part \cite{RajErgI}. We show that a binary operation is polarizing if and only if it is uniformity preserving and its right-inverse is strongly ergodic. The rate of polarization of single user channels is studied. It is shown that the exponent of any polarizing operation cannot exceed $\frac{1}{2}$, which is the exponent of quasigroup operations. We also study the polarization of multiple access channels (MAC). In particular, we show that a sequence of binary operations is MAC-polarizing if and only if each binary operation in the sequence is polarizing. It is shown that the exponent of any MAC-polarizing sequence cannot exceed $\frac{1}{2}$, which is the exponent of sequences of quasigroup operations.
\end{abstract}

\section{Introduction}

The problem of finding a characterization for polarizing operations was discussed in the introduction of Part I of this paper \cite{RajErgI}. The first operation that was shown to be polarizing was the XOR operation in $\mathbb{F}_2$ (Ar{\i}kan \cite{Arikan}). \c{S}a\c{s}o\u{g}lu et al. generalized Ar{\i}kan's result and showed that if $q$ is prime, then the addition modulo $q$ in $\mathbb{F}_q$ is polarizing \cite{SasogluTelAri}. Park and Barg showed that if $q=2^r$ with $r>0$, then addition modulo $q$ in $\mathbb{Z}_q$ is polarizing \cite{ParkBarg}. Sahebi and Pradhan generalized these results and showed that all Abelian group operations are polarizing \cite{SahebiPradhan}. \c{S}a\c{s}o\u{g}lu showed that any alphabet can be endowed with a special quasigroup operation which is polarizing \cite{SasS}. The author and Telatar showed that all quasigroup operations are polarizing \cite{RajTel}.

In the context of multiple access channels (MAC), \c{S}a\c{s}o\u{g}lu et al. showed that if $q$ is prime, then addition modulo $q$ is MAC-polarizing for 2-user MACs, i.e., if $W$ is a 2-user MAC where the two users have $\mathbb{F}_q$ as the input alphabet, then using the addition modulo $q$ for the two users lead to a polarization phenomenon \cite{SasogluTelYeh}. Abbe and Telatar used Matroid theory to show that for binary input MACs with $m\geq 2$ users, using the XOR operation for each user is MAC-polarizing \cite{AbbeTelatar}. The author and Telatar showed that if $q_1,\ldots,q_m$ is a sequence of prime numbers and if $W$ is an $m$-user MAC with input alphabets $\mathbb{F}_{q_1}$,\ldots,$\mathbb{F}_{q_m}$, then using addition modulo $q_i$ for the $i^{th}$ user is MAC-polarizing \cite{RajTel}. This fact was used to construct polar codes for arbitrary MACs \cite{RajTelA}.

The ergodic theory of binary operations was developed in Part I \cite{RajErgI}. This part provides a foundation of polarization theory based on the results established therein. In section II we provide a formal definition of polarizing operations and MAC-polarizing sequences of binary operations. Section III proves that a binary operation is polarizing (in the general multilevel sense) if and only if it is uniformity preserving and its right-inverse is strongly ergodic. The exponent of polarizing operations is studied in section IV. It is shown that the exponent of every polarizing operation is at most $\frac{1}{2}$, which is the exponent of quasigroup operations. The polarization theory for MACs is studied in section V. We show that a sequence of binary operations is MAC-polarizing if and only if each operation in the sequence is polarizing. The exponent of every MAC-polarizing sequence is shown to be at most $\frac{1}{2}$ which is the exponent of sequences of quasigroup operations.

\section{Preliminaries}

Throughout this (second) part of the paper, we assume that the reader is familiar with the concepts of the ergodic theory of binary operations which were introduced in Part I \cite{RajErgI}.

All the sets that are considered in this paper are finite.

\subsection{Easy channels}

\begin{mynot}
A channel $W$ with input alphabet $\mathcal{X}$ and output alphabet $\mathcal{Y}$ is denoted by $W:\mathcal{X}\longrightarrow\mathcal{Y}$. The transition probabilities of $W$ are denoted by $W(y|x)$, where $x\in\mathcal{X}$ and $y\in\mathcal{Y}$. The probability of error of the ML decoder of $W$ for uniformly distributed input is denoted by $\mathbb{P}_e(W)$. The symmetric capacity of $W$, denoted $I(W)$, is the mutual information $I(X;Y)$, where $X$ and $Y$ are jointly distributed as $\mathbb{P}_{X,Y}(x,y)=\frac{1}{|\mathcal{X}|}W(y|x)$ (i.e., $X$ is uniform in $\mathcal{X}$ and it is used as input to the channel $W$ while $Y$ is the output).
\end{mynot}

\begin{mydef}
\label{defeasy}
A channel $W:\mathcal{X}\longrightarrow\mathcal{Y}$ is said to be $\delta$-\emph{easy} if there exists an integer $L\leq |\mathcal{X}|$ and a random code $\mathcal{B}$ of block length 1 and rate $\log L$ (i.e., $\mathcal{B}\in\mathcal{S}:=\{C\subset \mathcal{X}:\; |C|=L\}$), which satisfy the following:
\begin{enumerate}
\item $|I(W)-\log L|<\delta$.
\item For every $x\in\mathcal{X}$, we have $\displaystyle\sum_{C\in\mathcal{S}}\frac{1}{L}\mathbb{P}_{\mathcal{B}}(C)\mathds{1}_{x\in C}=\frac{1}{|\mathcal{X}|}$. In other words, if $C\in\mathcal{S}$ is chosen according to the distribution of $\mathcal{B}$ and $X$ is chosen uniformly in $C$, then the marginal distribution of $X$ as a random variable in $\mathcal{X}$ is uniform.
\item If for each $C\in\mathcal{S}$ we fix a bijection $f_{C}:\{1,...,L\}\rightarrow C$, then $I(W_{\mathcal{B}})>\log L - \delta$, where $W_{\mathcal{B}}: \{1,...,L\}\rightarrow \mathcal{Y}\times \mathcal{S}$ is the channel defined by:
$$W_{\mathcal{B}}(y,C|a)=W(y|f_C(a)).\mathbb{P}_{\mathcal{B}}(C).$$
Note that the value of $I(W_{\mathcal{B}})$ does not depend on the choice of the bijections $(f_C)_{C\in\mathcal{S}}$.
\end{enumerate}
If we also have $\mathbb{P}_e(W_{\mathcal{B}})<\epsilon$, we say that $W$ is $(\delta,\epsilon)$-\emph{easy}.
\end{mydef}
If $W$ is $\delta$-easy for a small $\delta$, then we can reliably transmit information near the symmetric capacity of $W$ using a code of blocklength 1 (hence the easiness; there is no need to use codes of large blocklengths): we choose a random code according to $\mathcal{B}$, we reveal this code to the receiver, and then we transmit information using this code. The rate of this code is equal to $\log L$ which is close to the symmetric capacity $I(W)$. On the other hand, the fact that $I(W_{\mathcal{B}})>\log L - \delta$ means that $W_{\mathcal{B}}$ is almost perfect, which ensures that our simple coding scheme has a low probability of error.

Note that we added (2) to our definition in order to induce a uniform distribution on the input. This is important for the polarization process (see the definition of $W^-$ and $W^+$ in Definition \ref{defdef11}: the distribution of $U_1$ and $U_2$ are assumed to be uniform in $\mathcal{X}$).

\begin{mynot}
An $m$-user multiple access channel (MAC) $W$ with input alphabets $\mathcal{X}_1,\ldots,\mathcal{X}_m$ and output alphabet $\mathcal{Y}$ is denoted by $W:\mathcal{X}_1\times\ldots\times\mathcal{X}_m\longrightarrow\mathcal{Y}$. The transition probabilities of $W$ are denoted by $W(y|x_1,\ldots,x_m)$, where $x_1\in\mathcal{X}_1,\;\ldots,\;x_m\in\mathcal{X}_m$ and $y\in\mathcal{Y}$. The probability of error of the ML decoder of $W$ for uniformly distributed input is denoted by $\mathbb{P}_e(W)$. The symmetric sum-capacity of $W$, denoted $I(W)$, is the mutual information $I(X_1,\ldots,X_m;Y)$, where $X_1,\ldots,X_m,Y$ are jointly distributed as $\mathbb{P}_{X_1,\ldots,X_m,Y}(x_1,\ldots,x_m,y)=\frac{1}{|\mathcal{X}_1|\cdots |\mathcal{X}_m|}W(y|x_1,\ldots,x_m)$ (i.e., $X_1,\ldots,X_m$ are independent and uniform in $\mathcal{X}_1,\ldots,\mathcal{X}_m$ respectively and they are used as input to the MAC $W$ while $Y$ is the output).
\end{mynot}

\begin{mydef}
\label{defeasyMAC}
An $m$-user MAC $W:\mathcal{X}_1\times\ldots\times\mathcal{X}_m\longrightarrow\mathcal{Y}$ is said to be $\delta$-\emph{easy} if there exist $m$ integers $L_1\leq |\mathcal{X}_1|,\ldots,L_m\leq |\mathcal{X}_m|$, and $m$ independent random codes $\mathcal{B}_1,\ldots,\mathcal{B}_m$ taking values in the sets $\mathcal{S}_1=\{C_1\subset \mathcal{X}_1:\; |C_1|=L_1\}$, \ldots, $\mathcal{S}_m=\{C_m\subset \mathcal{X}_m:\; |C_m|=L_m\}$ respectively, which satisfy the following:
\begin{itemize}
\item $|I(W)-\log L|<\delta$, where $L = L_1\cdots L_m$.
\item For every $1\leq i\leq m$ and every $x_i\in\mathcal{X}_i$, we have $\displaystyle\sum_{C_i\in\mathcal{S}_i}\frac{1}{L_i}\mathbb{P}_{\mathcal{B}_i}(C_i)\mathds{1}_{x_i\in C_i}=\frac{1}{|\mathcal{X}_i|}$. In other words, if $C_i\in\mathcal{S}_i$ is chosen according to the distribution of $\mathcal{B}_i$ and $X_i$ is chosen uniformly in $C_i$, then the marginal distribution of $X_i$ as a random variable in $\mathcal{X}_i$ is uniform.
\item If for each $1\leq i\leq m$ and each $C_i\in\mathcal{S}_i$ we fix a bijection $f_{i,C_i}:\{1,...,L_i\}\rightarrow C_i$, then $I(W_{\mathcal{B}_1,\ldots,\mathcal{B}_m})>\log L - \delta$, where $W_{\mathcal{B}_1,\ldots,\mathcal{B}_m}: \{1,...,L_1\}\times\ldots\times \{1,...,L_m\}\rightarrow \mathcal{Y}\times \mathcal{S}_1\times\ldots\times\mathcal{S}_m$ is the MAC defined by:
$$W_{\mathcal{B}_1,\ldots,\mathcal{B}_m}(y,C_1,\ldots,C_m|a_1,\ldots,a_m)=W(y|f_{1,C_1}(a_1),\ldots,f_{m,C_m}(a_m)).\prod_{i=1}^m\mathbb{P}_{\mathcal{B}_i}(C_i).$$
Note that the value of $I(W_{\mathcal{B}_1,\ldots,\mathcal{B}_m})$ does not depend on the choice of the bijections $(f_{i,C_i})_{1\leq i\leq m,\; C_i\in\mathcal{S}_i}$.
\end{itemize}
If we also have $\mathbb{P}_e(W_{\mathcal{B}_1,\ldots,\mathcal{B}_m})<\epsilon$, we say that $W$ is $(\delta,\epsilon)$-\emph{easy}.
\end{mydef}

If $W$ is a $\delta$-easy MAC for a small $\delta$, then we can reliably transmit information near the symmetric sum-capacity of $W$ using a code of blocklength 1 (hence the easiness; there is no need to use codes of large blocklengths): we choose a random MAC-code according to $\mathcal{B}_1,\ldots,\mathcal{B}_m$, we reveal this code to the receiver, and then we transmit information using this code. The sum-rate of this code is equal to $\log L_1+\ldots+\log L_m=\log L$ which is close to the sum-capacity $I(W)$. On the other hand, the fact that $I(W_{\mathcal{B}_1,\ldots,\mathcal{B}_m})>\log L - \delta$ means that $W_{\mathcal{B}_1,\ldots,\mathcal{B}_m}$ is almost perfect, which ensures that our simple coding scheme has a low probability of error.

\subsection{Polarization process}

In this subsection, we consider an ordinary (single user) channel $W$ and a binary operation $\ast$ on its input alphabet.

\begin{mydef}
Let $\mathcal{X}$ be an arbitrary set and $\ast$ be a binary operation on $\mathcal{X}$. Let $W:\mathcal{X}\longrightarrow\mathcal{Y}$ be a channel. We define the two channels $W^-:\mathcal{X}\longrightarrow\mathcal{Y}\times\mathcal{Y}$ and $W^+:\mathcal{X}\longrightarrow\mathcal{Y}\times\mathcal{Y}\times \mathcal{X}$ as follows:
$$W^-(y_1,y_2|u_1)=\frac{1}{|\mathcal{X}|}\sum_{u_2\in \mathcal{X}}W(y_1|u_1\ast u_2)W(y_2|u_2),$$
$$W^+(y_1,y_2,u_1|u_2)=\frac{1}{|\mathcal{X}|}W(y_1|u_1\ast u_2)W(y_2|u_2).$$
For every $s=(s_1,\ldots,s_n)\in\{-,+\}^n$, we define $W^s$ recursively as: $$W^s:=((W^{s_1})^{s_2}\ldots)^{s_n}.$$
\label{defdef11}
\end{mydef}

\begin{mydef}
\label{def1}
Let $(B_n)_{n\geq1}$ be i.i.d. uniform random variables in $\{-,+\}$. For each channel $W$ with input alphabet $\mathcal{X}$, we define the channel-valued process $(W_n)_{n\geq0}$ recursively as follows:
\begin{align*}
W_0 &:= W,\\
W_{n} &:=W_{n-1}^{B_n}\;\forall n\geq1.
\end{align*}
\end{mydef}

\begin{mydef}
\label{defPola}
A binary operation $\ast$ is said to be \emph{polarizing} if we have the following two properties:
\begin{itemize}
\item Conservation property: for every channel $W$ with input alphabet $\mathcal{X}$, we have $I(W^-)+I(W^+)=2I(W)$.
\item Polarization property: for every channel $W$ with input alphabet $\mathcal{X}$ and every $\delta>0$, $W_n$ almost surely becomes $\delta$-easy, i.e.,
$$\displaystyle\lim_{n\rightarrow\infty}\mathbb{P}\big[W_n\;\text{is}\;\delta\text{-}\text{easy}\big]=1.$$
\end{itemize}
\end{mydef}

\begin{mynot}
\label{notnotnot}
Throughout this paper, we will write $(U_1,U_2)\stackrel{f_{\ast}}{\longrightarrow}(X_1,X_2)\stackrel{W}{\longrightarrow}(Y_1,Y_2)$ to denote the following:
\begin{itemize}
\item $U_1$ and $U_2$ are two independent random variables uniformly distributed in $\mathcal{X}$.
\item $X_1=U_1\ast U_2$ and $X_2=U_2$.
\item The conditional distribution $(Y_1,Y_2)|(X_1,X_2)$ is given by: $$\mathbb{P}_{Y_1,Y_2|X_1,X_2}(y_1,y_2|x_1,x_2)=W(y_1|x_1)W(y_2|x_2).$$ I.e., $Y_1$ and $Y_2$ are the outputs of two independent copies of the channel $W$ with inputs $X_1$ and $X_2$ respectively.
\item $(U_1,U_2)-(X_1,X_2)-(Y_1,Y_2)$ is a Markov chain.
\end{itemize}
Note that since $X_1=U_1\ast U_2$ and $X_2=U_2$, the chain $(X_1,X_2)-(U_1,U_2)-(Y_1,Y_2)$ is also a Markov chain.
\end{mynot}

\begin{myrem}
\label{rem1}
Let $(U_1,U_2)\stackrel{f_{\ast}}{\longrightarrow}(X_1,X_2)\stackrel{W}{\longrightarrow}(Y_1,Y_2)$. From the definition of $W^-$ and $W^+$, it is easy to see that we have $I(W^-)=I(U_1;Y_1,Y_2)$ and $I(W^+)=I(U_2;Y_1,Y_2,U_1)$. Therefore,
\begin{align*}
I(W^-)+I(W^+)&=I(U_1;Y_1,Y_2)+I(U_2;Y_1,Y_2,U_1)\\
&=I(U_1,U_2;Y_1,Y_2) \stackrel{(a)}{=}I(X_1,X_2;Y_1,Y_2),
\end{align*}
where (a) follows from the fact that both $(U_1,U_2)-(X_1,X_2)-(Y_1,Y_2)$ and $(X_1,X_2)-(U_1,U_2)-(Y_1,Y_2)$ are Markov chains. We have the following:
\begin{itemize}
\item If $\ast$ is not uniformity preserving, then $(X_1,X_2)$ is not uniform in $\mathcal{X}^2$. If $W$ is a perfect channel, i.e., $I(W)=\log|\mathcal{X}|$, we have 
\begin{equation}
I(W^-)+I(W^+)=I(X_1,X_2;Y_1,Y_2)\leq H(X_1,X_2)\stackrel{(a)}{<}2\log|\mathcal{X}|=2I(W),
\label{equnifabcjafuas}
\end{equation}
where (a) follows from the fact that $(X_1,X_2)$ is not uniform in $\mathcal{X}^2$. \eqref{equnifabcjafuas} means that $\ast$ does not satisfy the conservation property of Definition \ref{defPola}. Therefore, every polarizing operation must be uniformity preserving.
\item If $\ast$ is uniformity preserving, then $(X_1,X_2)$ is uniform in $\mathcal{X}^2$, i.e., $X_1$ and $X_2$ are independent and uniform in $\mathcal{X}$. Thus,
\begin{align*}
I(W^-)+I(W^+)&=I(X_1,X_2;Y_1,Y_2)=I(X_1;Y_1)+I(X_2;Y_2)=2I(W).
\end{align*}
Therefore, uniformity preserving operations satisfy the conservation property. \end{itemize}
We conclude that a binary operation $\ast$ satisfies the conservation property if and only if it is uniformity preserving.

%Moreover, $$I(W^+)=I(U_2;Y_1,Y_2,U_1)\geq I(U_2;Y_2)=I(X_2;Y_2)=I(W),$$ which implies that $I(W^-)\leq I(W)\leq I(W^+)$. 
\end{myrem}

\begin{mydef}
Let $\ast$ be a polarizing operation on a set $\mathcal{X}$. We say that $\beta\geq 0$ is a $\ast$-\emph{achievable exponent} if for every $\delta>0$ and every channel $W$ with input alphabet $\mathcal{X}$, $W_n$ almost surely becomes $(\delta,2^{-2^{\beta n}})$-easy, i.e.,
$$\lim_{n\rightarrow\infty}\mathbb{P}\big[W_n\;\text{is}\;(\delta,2^{-2^{\beta n}})\text{-}\text{easy}\big]=1.$$
We define the \emph{exponent} of $\ast$ as:
$$E_\ast:=\sup\{\beta\geq 0:\;\beta\;\text{is\;a}\;\text{$\ast$-achievable\;exponent}\}.$$
\end{mydef}

Note that $E_{\ast}$ depends only on $\ast$ and it does not depend on any particular channel $W$. The definition of a $\ast$-achievable exponent ensures that it is achievable for every channel $W$ with input alphabet $\mathcal{X}$.

\begin{myrem}
If $\ast$ is a polarizing operation of exponent $E_{\ast}>0$ on the set $\mathcal{X}$, then for every channel $W$ with input alphabet $\mathcal{X}$, every $\beta< E_{\ast}$ and every $\delta>0$, there exists $n_0=n_0(W,\beta,\delta,\ast)>0$ such that for every $n\geq n_0$, there exists a polar code of blocklength $N=2^n$ and of rate at least $I(W)-\delta$ such that the probability of error of the successive cancellation decoder is at most $2^{-N^{\beta}}$. (The polar code construction in section V of \cite{RajTelA} can be applied here to get such a code).
\end{myrem}

\begin{myex}
If $\mathcal{X}=\mathbb{F}_2=\{0,1\}$ and $\ast$ is the addition modulo 2, then $E_{\ast}=\frac{1}{2}$ (see \cite{ArikanTelatar}).
\end{myex}

\subsection{Polarization process for MACs}

\begin{mydef}
Let $\mathcal{X}_1,\ldots,\mathcal{X}_m$ be $m$ arbitrary sets. Let $\ast_1,\ldots,\ast_m$ be $m$ binary operations on $\mathcal{X}_1,\ldots,\mathcal{X}_m$ respectively, and let $W:\mathcal{X}_1\times\ldots\times\mathcal{X}_m\longrightarrow\mathcal{Y}$ be an $m$-user MAC. We define the two MACs $W^-:\mathcal{X}_1\times\ldots\times \mathcal{X}_m\longrightarrow\mathcal{Y}\times\mathcal{Y}$ and $W^+:\mathcal{X}_1\times\ldots\times\mathcal{X}_m\longrightarrow \mathcal{Y}\times\mathcal{Y}\times \mathcal{X}_1\times\ldots\times\mathcal{X}_m$ as follows:
\begin{align*}
W^-(y_1,y_2|u_{1,1},\ldots,u_{1,m})=\frac{1}{|\mathcal{X}_1|\cdots |\mathcal{X}_m|}\sum_{\substack{u_{2,1}\in \mathcal{X}_1\\\vdots\\u_{2,m}\in \mathcal{X}_m}}W(y_1|u_{1,1}\ast_1 &u_{2,1},\ldots,u_{1,m}\ast_m u_{2,m})\\
&\times W(y_2|u_{2,1},\ldots,u_{2,m}),
\end{align*}
\begin{align*}
W^+(y_1,y_2,u_{1,1},\ldots,u_{1,m}|u_{2,1},\ldots,u_{2,m})=\frac{1}{|\mathcal{X}_1|\cdots|\mathcal{X}_m|}W(y_1|u_{1,1}\ast_1 &u_{2,1},\ldots,u_{1,m}\ast_m u_{2,m})\\
&\times W(y_2|u_{2,1},\ldots,u_{2,m}).
\end{align*}
For every $s=(s_1,\ldots,s_n)\in\{-,+\}^n$, we define $W^s$ recursively as: $$W^s:=((W^{s_1})^{s_2}\ldots)^{s_n}.$$
\label{defdef11MAC}
\end{mydef}

\begin{mydef}
\label{def1MAC}
Let $(B_n)_{n\geq1}$ be i.i.d. uniform random variables in $\{-,+\}$. For each MAC $W$  with input alphabets $\mathcal{X}_1,\ldots,\mathcal{X}_m$, we define the MAC-valued process $(W_n)_{n\geq0}$ recursively as follows:
\begin{align*}
W_0 &:= W,\\
W_{n} &:=W_{n-1}^{B_n}\;\forall n\geq1.
\end{align*}
\end{mydef}

\begin{mydef}
\label{defPolaMac}
A sequence of $m$ binary operations $(\ast_1,\ldots,\ast_m)$ on the sets $\mathcal{X}_1,\ldots,\mathcal{X}_m$ is said to be \emph{MAC-polarizing} if we have the following two properties:
\begin{itemize}
\item Conservation property: for every MAC $W$ with input alphabets $\mathcal{X}_1,\ldots,\mathcal{X}_m$ we have $$I(W^-)+I(W^+)=2I(W).$$
\item Polarization property: for every MAC $W$ with input alphabets $\mathcal{X}_1,\ldots,\mathcal{X}_m$ and every $\delta>0$, $W_n$ almost surely becomes $\delta$-easy, i.e.,
$$\lim_{n\rightarrow\infty}\mathbb{P}\big[W_n\;\text{is}\;\delta\text{-}\text{easy}\big]=1.$$
\end{itemize}
\end{mydef}

Notice that in the conservation property we only ask for the sum-capacity to be preserved and we do not ask for the whole capacity region to be preserved. The reason for this is because MAC polarization sometimes induces a loss in the capacity region (see \cite{SasogluTelYeh}, \cite{AbbeTelatar} and \cite{RajTelA}). There are, however, polar coding techniques that achieve the whole capacity region (e.g., \cite{ArikanMAC} and \cite{Onay}) but those techniques are not based on MAC polarization; they are based on monotone chain rules and single user channel polarization. In the above definition, we are only interested in the MAC polarization phenomenon itself. We note, however, that monotone chain rules can be used together with the general single user polarization theory that is developed here in order to construct MAC codes that achieve the whole capacity region.

\begin{myrem}
\label{remPreservMac}
As in Remark \ref{rem1}, a sequence of binary operations satisfies the conservation property if and only if every operation in the sequence is uniformity preserving.
\end{myrem}

\begin{mydef}
Let $(\ast_1,\ldots,\ast_m)$ be a MAC-polarizing sequence on the sets $\mathcal{X}_1,\ldots,\mathcal{X}_m$. We say that $\beta\geq 0$ is a $(\ast_1,\ldots,\ast_m)$-\emph{achievable exponent} if for every $\delta>0$ and every MAC $W$ with input alphabets $\mathcal{X}_1,\ldots,\mathcal{X}_m$, $W_n$ almost surely becomes $(\delta,2^{-2^{\beta n}})$-easy, i.e.,
$$\lim_{n\rightarrow\infty}\mathbb{P}\big[W_n\;\text{is}\;(\delta,2^{-2^{\beta n}})\text{-}\text{easy}\big]=1.$$
We define the \emph{exponent} of $(\ast_1,\ldots,\ast_m)$ as:
$$E_{\ast_1,\ldots,\ast_m}:=\sup\{\beta\geq 0:\;\beta\;\text{is\;a}\;\text{$(\ast_1,\ldots,\ast_m)$-achievable\;exponent}\}.$$
\end{mydef}

\begin{myrem}
If $(\ast_1,\ldots,\ast_m)$ is a MAC-polarizing sequence of exponent $E_{\ast_1,\ldots,\ast_m}>0$ on the sets $\mathcal{X}_1,\ldots,\mathcal{X}_m$, then for every MAC $W$ with input alphabets $\mathcal{X}_1,\ldots,\mathcal{X}_m$, every $\beta< E_{\ast_1,\ldots,\ast_m}$ and every $\delta>0$, there exists $n_0=n_0(W,\beta,\delta,\ast)>0$ such that for every $n\geq n_0$, there exists a polar code of blocklength $N=2^n$ and of sum-rate at least $I(W)-\delta$ such that the probability of error of the successive cancellation decoder is at most $2^{-N^{\beta}}$.
\end{myrem}

\begin{myrem}
\label{remnecmac}
For each $1\leq i\leq m$ and each ordinary single user channel $W_i:\mathcal{X}_i\longrightarrow\mathcal{Y}$ with input alphabet $\mathcal{X}_i$, consider the MAC $W:\mathcal{X}_1\times\ldots\times\mathcal{X}_m\longrightarrow\mathcal{Y}$ defined as $W(y|x_1,\ldots,x_m)=W_i(y|x_i)$. Let $(W_{i,n})_{n\geq 0}$ be the single user channel valued process obtained from $W_i$ as in Definition \ref{def1}, and let $(W_n)_{n\geq 0}$ be the MAC-valued process obtained from $W$ as in Definition \ref{def1MAC}. It is easy to see that $W_{i,n}$ is $\delta$-easy if and only if $W_n$ is $\delta$-easy. This shows that if the sequence $(\ast_1,\ldots,\ast_m)$ is MAC-polarizing then $\ast_i$ is polarizing for each $1\leq i\leq m$. Moreover, $W_{i,n}$ is $(\delta,\epsilon)$-easy if and only if $W_n$ is $(\delta,\epsilon)$-easy. This implies that $E_{\ast_1,\ldots,\ast_m}\leq E_{\ast_i}$ for each $1\leq i\leq m$. Therefore, $E_{\ast_1,\ldots,\ast_m}\leq \min\{E_{\ast_1},\ldots,E_{\ast_m}\}$.
\end{myrem}

\section{Polarizing operations}

\subsection{Necessary condition}
In this subsection, we show that if $\ast$ is polarizing, then $\ast$ is uniformity preserving and $/^{\ast}$ (the right-inverse of $\ast$) is strongly ergodic. In order to prove this, we need the following two lemmas:

\begin{mylem}
\label{lemBijectNec}
Let $\ast$ be an ergodic operation on a set $\mathcal{X}$. Let $\mathcal{H}$ be a stable partition of $\mathcal{X}$ such that $\mathcal{K}_{\mathcal{H}}\neq\mathcal{H}$, where $\mathcal{K}_{\mathcal{H}}$ is the first residue of $\mathcal{H}$ with respect to $\ast$. Define $\mathcal{A}=\mathcal{H}\cup\mathcal{K}_{\mathcal{H}}$. We have:
\begin{enumerate}
\item For every $A_1,A_2\in\mathcal{A}$, we have:
\begin{itemize}
\item $(A_1\in\mathcal{K}_{\mathcal{H}}\text{ and }A_2\in \mathcal{K}_{\mathcal{H}})\text{ if and only if }(A_1\ast A_2\in {\mathcal{K}_{\mathcal{H}}}^\ast\text{ and }A_2\in \mathcal{K}_{\mathcal{H}})$.
\item $(A_1\in\mathcal{K}_{\mathcal{H}}\text{ and }A_2\in \mathcal{H})\text{ if and only if }(A_1\ast A_2\in {\mathcal{K}_{\mathcal{H}}}^\ast\text{ and }A_2\in \mathcal{H})$.
\item $(A_1\in\mathcal{H}\text{ and }A_2\in \mathcal{K}_{\mathcal{H}})\text{ if and only if }(A_1\ast A_2\in\mathcal{H}^\ast\text{ and }A_2\in \mathcal{K}_{\mathcal{H}})$.
\item $(A_1\in\mathcal{H}\text{ and }A_2\in \mathcal{H})\text{ if and only if }(A_1\ast A_2\in \mathcal{H}^\ast\text{ and }A_2\in \mathcal{H})$.
\end{itemize}
\item For every $u_1,u_2\in \mathcal{X}$ and every $A_1,A_2\in\mathcal{A}$, we have $$(u_1\in A_1\ast A_2\text{ and }u_2\in A_2)\text{ if and only if }(u_1/^{\ast}u_2\in A_1\text{ and }u_2\in A_2).$$
\end{enumerate}
\end{mylem}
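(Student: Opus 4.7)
The plan is to reduce both parts to the structural properties of stable partitions and first residues established in Part I \cite{RajErgI}. Recall that when $\mathcal{H}$ is a stable partition of $(\mathcal{X},\ast)$, the first residue $\mathcal{K}_{\mathcal{H}}$ is itself a stable partition that refines $\mathcal{H}$, and the decorations $\mathcal{H}^{\ast}$ and $\mathcal{K}_{\mathcal{H}}^{\ast}$ refer to the image partitions under the block-operation, with $\mathcal{K}_{\mathcal{H}}^{\ast}$ refining $\mathcal{H}^{\ast}$. The single fact that does most of the work is: for any $H_1,H_2\in\mathcal{H}$ the set-product $H_1\ast H_2$ is exactly one block of $\mathcal{H}^{\ast}$, and analogously for $\mathcal{K}_{\mathcal{H}}$.

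For Part 1, I will verify the four equivalences case by case. Note that the clause on $A_2$ (either $A_2\in\mathcal{K}_{\mathcal{H}}$ or $A_2\in\mathcal{H}$) appears identically on both sides, so the real content of each line is a biconditional linking the ``type'' of $A_1$ to the ``type'' of $A_1\ast A_2$. The forward direction in each case is an immediate consequence of the block-product being single-valued on the appropriate stable partition: writing $A_1\subseteq H_1$ for the unique $H_1\in\mathcal{H}$ containing $A_1$, applying stability to $H_1\ast H_2$ and to $A_1\ast A_2$ shows that $A_1\ast A_2$ sits inside a single block of $\mathcal{K}_{\mathcal{H}}^{\ast}$ precisely when $A_1$ is already a block of $\mathcal{K}_{\mathcal{H}}$. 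The reverse directions are obtained by contradiction: if, for example, $A_1\in\mathcal{H}\setminus\mathcal{K}_{\mathcal{H}}$ and yet $A_1\ast A_2\in\mathcal{K}_{\mathcal{H}}^{\ast}$, then a strictly coarser block of $\mathcal{H}$ would map into a single block of $\mathcal{K}_{\mathcal{H}}^{\ast}$, contradicting the defining property of the first residue.

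For Part 2, the central step is to establish the set equality $A_1\ast A_2 = A_1\ast u_2 := \{a_1\ast u_2 : a_1\in A_1\}$ for every fixed $u_2\in A_2$. Once this is in hand, the equivalence is immediate from the definition of $/^{\ast}$: $u_1\in A_1\ast u_2$ iff $u_1=a_1\ast u_2$ for some $a_1\in A_1$, which is exactly $u_1/^{\ast}u_2\in A_1$. To obtain the set equality, I use the stability of the partition in which $A_2$ lies (either $\mathcal{H}$ or $\mathcal{K}_{\mathcal{H}}$): stability says that the translate $A_1\ast u_2$ does not depend on the choice of $u_2$ within $A_2$, so the union $A_1\ast A_2 = \bigcup_{u_2\in A_2}(A_1\ast u_2)$ collapses to that single translate.

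The main obstacle lies in Part 1 in the mixed cases where one of $A_1,A_2$ is a block of $\mathcal{H}$ and the other a block of $\mathcal{K}_{\mathcal{H}}$, since the block-operation is a priori defined only within a single stable partition. I plan to handle this by passing down to the common refinement $\mathcal{K}_{\mathcal{H}}$ to compute the set-product and then projecting up to either $\mathcal{H}^{\ast}$ or $\mathcal{K}_{\mathcal{H}}^{\ast}$; the projection is well-defined precisely because $\mathcal{K}_{\mathcal{H}}^{\ast}$ refines $\mathcal{H}^{\ast}$. With this compatibility in place, both parts reduce to direct verification against the definitions imported from Part I.
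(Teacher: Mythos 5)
Your overall scaffolding---reduce everything to structural facts from Part I, treat the four cases symmetrically, and handle the reverse directions by exclusivity---matches the paper's proof in outline, but there are real gaps in how you fill it in, centered on the mixed cases you yourself flag as the main obstacle.

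In Part~1, for case~(iii) ($A_1\in\mathcal{H}$, $A_2\in\mathcal{K}_{\mathcal{H}}$), ``stability'' does not directly tell you that $A_1\ast A_2\in\mathcal{H}^{\ast}$; $A_2$ is a block of the \emph{finer} partition, so the block-product is not single-valued in the sense you invoke. What you need is the inclusion $A_1\ast A_2\subset A_1\ast H$ (where $H\in\mathcal{H}$ contains $A_2$), combined with the cardinality bound $|A_1\ast A_2|\geq|A_1|=\|\mathcal{H}\|=\|\mathcal{H}^{\ast}\|=|A_1\ast H|$, to force equality $A_1\ast A_2=A_1\ast H\in\mathcal{H}^{\ast}$. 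Your proposed ``pass down to the common refinement and project up'' does not replace this argument, because projecting up from $\mathcal{K}_{\mathcal{H}}^{\ast}$ to $\mathcal{H}^{\ast}$ only tells you that $A_1\ast A_2$ is \emph{contained} in some $\mathcal{H}^{\ast}$-block, not that it \emph{is} one. Your reverse-direction-by-contradiction sketch is sound in spirit, but the clean way to make it rigorous (and the route the paper takes) is to first observe that $\|\mathcal{K}_{\mathcal{H}}\|<\|\mathcal{H}\|$, hence $\mathcal{K}_{\mathcal{H}}\cap\mathcal{H}=\emptyset$ and ${\mathcal{K}_{\mathcal{H}}}^{\ast}\cap\mathcal{H}^{\ast}=\emptyset$, so that exactly one of the four hypotheses and exactly one of the four conclusions can hold at a time; the converses then follow for free from the forward implications.

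In Part~2, the justification you give for $A_1\ast A_2=A_1\ast u_2$ is misattributed. You invoke ``stability of the partition in which $A_2$ lies,'' but stability of $\mathcal{H}$ (say) only constrains how $\mathcal{H}$-blocks act on one another; it says nothing directly about $K\ast u_2$ for $K\in\mathcal{K}_{\mathcal{H}}$ and $u_2$ in an $\mathcal{H}$-block. The fact that $A_1\ast u_2$ is constant over $u_2\in A_2$ is a \emph{consequence}, not a starting axiom: it follows from (i) $A_1\ast u_2\subset A_1\ast A_2$, (ii) $|A_1\ast u_2|=|A_1|$ because $\ast$ is uniformity preserving, and (iii) $|A_1\ast A_2|=|A_1|$, where (iii) is established case-by-case using Part~1 to identify which partition $A_1\ast A_2$ lives in. In particular Part~2 genuinely depends on Part~1, a dependency your write-up does not surface. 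Without the cardinality bookkeeping, the step ``the union collapses to a single translate'' is unsupported in the two mixed cases, which is precisely where a proof of this lemma must do real work.
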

\begin{proof}
1) We have $\mathcal{A}=\mathcal{H}\cup\mathcal{K}_\mathcal{H}$. Therefore, for every $A_1,A_2\in\mathcal{A}$, one of the following four conditions holds true:
\begin{itemize}
\item[(i)] $A_1\in\mathcal{K}_{\mathcal{H}}\text{ and }A_2\in \mathcal{K}_{\mathcal{H}}$.
\item[(ii)] $A_1\in\mathcal{K}_{\mathcal{H}}\text{ and }A_2\in \mathcal{H}$.
\item[(iii)] $A_1\in\mathcal{H}\text{ and }A_2\in \mathcal{K}_{\mathcal{H}}$.
\item[(iv)] $A_1\in\mathcal{H}\text{ and }A_2\in \mathcal{H}$.
\end{itemize}

Now since $\mathcal{K}_{\mathcal{H}}\neq\mathcal{H}$ and $\mathcal{K}_{\mathcal{H}}\preceq\mathcal{H}$, we have $||\mathcal{K}_{\mathcal{H}}||<||\mathcal{H}||$. Therefore, for every $K\in\mathcal{K}_{\mathcal{H}}$ and every $H\in\mathcal{H}$, we have $|K|=||\mathcal{K}_{\mathcal{H}}||<||\mathcal{H}||=|H|$. This implies that $K\neq H$ for every $K\in\mathcal{K}_{\mathcal{H}}$ and every $H\in\mathcal{H}$, hence $\mathcal{K}_{\mathcal{H}}\cap \mathcal{H}=\o$. Similarly, ${\mathcal{K}_{\mathcal{H}}}^\ast\cap {\mathcal{H}}^\ast=\o$. We conclude that for every $A_1,A_2\in\mathcal{A}$, the following four conditions are mutually exclusive:
\begin{itemize}
\item[(a)] $A_1\ast A_2\in {\mathcal{K}_{\mathcal{H}}}^\ast\text{ and }A_2\in \mathcal{K}_{\mathcal{H}}$.
\item[(b)] $A_1\ast A_2\in {\mathcal{K}_{\mathcal{H}}}^\ast\text{ and }A_2\in \mathcal{H}$.
\item[(c)] $A_1\ast A_2\in\mathcal{H}^\ast\text{ and }A_2\in \mathcal{K}_{\mathcal{H}}$.
\item[(d)] $A_1\ast A_2\in \mathcal{H}^\ast\text{ and }A_2\in \mathcal{H}$.
\end{itemize}

We have:
\begin{itemize}
\item If $A_1\in\mathcal{K}_{\mathcal{H}}$ and $A_2\in\mathcal{K}_{\mathcal{H}}$, then $A_1\ast A_2\in{\mathcal{K}_{\mathcal{H}}}^\ast$. Therefore, (i) implies (a).
\item If $A_1\in\mathcal{K}_{\mathcal{H}}$ and $A_2\in\mathcal{H}$, then $A_1\ast A_2\in{\mathcal{K}_{\mathcal{H}}}^\ast$ (see Theorem \ref{theres} of Part I \cite{RajErgI}). Therefore, (ii) implies (b).
\item If $A_1\in\mathcal{H}$ and $A_2\in\mathcal{K}_{\mathcal{H}}$, let $H\in\mathcal{H}$ be such that $A_2\subset H$. (Note that there is no contradiction here between $A_2\subset H\in\mathcal{H}$, $A_2\in\mathcal{K}_{\mathcal{H}}$ and $\mathcal{H}\cap\mathcal{K}_{\mathcal{H}}=\o$.) We have $A_1\ast A_2\subset A_1\ast H$ and $|A_1\ast A_2|\geq |A_1|=||\mathcal{H}||=||\mathcal{H}^{\ast}||=|A_1\ast H|$. Therefore, $A_1\ast A_2=A_1\ast H\in\mathcal{H}^{\ast}$. Hence (iii) implies (c).
\item If $A_1\in\mathcal{H}$ and $A_2\in\mathcal{H}$, then $A_1\ast A_2\in\mathcal{H}^{\ast}$. Therefore, (iv) implies (d).
\end{itemize}

Now let $A_1,A_2\in\mathcal{A}$ and suppose that (a) holds true (i.e., $A_1\ast A_2\in {\mathcal{K}_{\mathcal{H}}}^\ast\text{ and }A_2\in \mathcal{K}_{\mathcal{H}}$). Since $A_1\in\mathcal{A}$ then either $A_1\in\mathcal{K}_\mathcal{H}$ or $A_1\in\mathcal{H}$. But $A_2\in \mathcal{K}_{\mathcal{H}}$, so either (i) or (iii) holds true. On the other hand, we have shown that (iii) implies (c), and (c) contradicts (a), so (iii) cannot be true. Therefore, (i) must be true. We conclude that (a) implies (i). Similarly, we can show that (b) implies (ii), (c) implies (iii), and (d) implies (iv).

\vspace*{2mm}
2) Fix $A_1,A_2\in\mathcal{A}$. We have:
\begin{itemize}
\item If $A_1\in\mathcal{K}_{\mathcal{H}}\text{ and }A_2\in \mathcal{K}_{\mathcal{H}}$, then $|A_1\ast A_2|=||{\mathcal{K}_{\mathcal{H}}}^\ast||=||\mathcal{K}_{\mathcal{H}}||=|A_1|$.
\item If $A_1\in\mathcal{K}_{\mathcal{H}}\text{ and }A_2\in \mathcal{H}$, then from 1) we have $A_1\ast A_2\in {\mathcal{K}_{\mathcal{H}}}^\ast$. Therefore, $|A_1\ast A_2|=||{\mathcal{K}_{\mathcal{H}}}^\ast||=||\mathcal{K}_{\mathcal{H}}||=|A_1|$.
\item If $A_1\in\mathcal{H}\text{ and }A_2\in \mathcal{K}_{\mathcal{H}}$ then from 1) we have $A_1\ast A_2\in\mathcal{H}^\ast$. Therefore, $|A_1\ast A_2|=||\mathcal{H}^\ast||=||\mathcal{H}||=|A_1|$.
\item If $A_1\in\mathcal{H}\text{ and }A_2\in \mathcal{H}$, then $|A_1\ast A_2|=||\mathcal{H}^\ast||=||\mathcal{H}||=|A_1|$.
\end{itemize}
We conclude that in all cases, we have $|A_1\ast A_2|=|A_1|$.

For every $u_1,u_2\in \mathcal{X}$, we have:
\begin{itemize}
\item If $u_1/^{\ast}u_2\in A_1$ and $u_2\in A_2$, then $u_1=(u_1/^{\ast}u_2)\ast u_2\in A_1\ast A_2$.
\item If $u_1\in A_1\ast A_2$ and $u_2\in A_2$, we have $A_1\ast u_2\subset A_1\ast A_2$. On the other hand, we have $|A_1\ast A_2|=|A_1|=|A_1\ast u_2|$ (where the last equality holds true because $\ast$ is uniformity preserving). We conclude that $A_1\ast A_2=A_1\ast u_2$. Therefore, $(A_1\ast A_2)/^{\ast} u_2=A_1$ which implies that $u_1/^{\ast}u_2\in A_1$.
\end{itemize}
\end{proof}

\begin{mydef}
A channel $W:\mathcal{X}\longrightarrow\mathcal{Y}$ is said to be degraded from a channel $V:\mathcal{X}\longrightarrow\mathcal{Z}$ if there exists a channel $P:\mathcal{Z}\longrightarrow\mathcal{Y}$ such that for every $x\in\mathcal{X}$ and every $y\in\mathcal{Y}$ we have:
$$W(y|x)=\sum_{z\in\mathcal{Z}}V(z|x)P(y|z).$$
If $W$ is degraded from $V$ and $V$ is degraded from $W$, we say that $W$ is equivalent to $V$.
\end{mydef}

\begin{mylem}
Let $\ast$ be a uniformity preserving operation on a set $\mathcal{X}$, and let $W:\mathcal{X}\longrightarrow\mathcal{Y}$. If $I(W^-)=I(W)$ then $W^+$ is equivalent to $W$.
\label{PlusEquiv}
\end{mylem}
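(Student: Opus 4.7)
The plan is to show that the extra output of $W^+$ beyond what $W$ itself provides carries no information about the input, so the two channels are stochastic degradations of each other.

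First I would set up the standard picture $(U_1,U_2)\stackrel{f_{\ast}}{\longrightarrow}(X_1,X_2)\stackrel{W}{\longrightarrow}(Y_1,Y_2)$ of Notation \ref{notnotnot}, so that $I(W^+)=I(U_2;Y_1,Y_2,U_1)$ and the channel $W^+$ has input $U_2$ and output $(Y_1,Y_2,U_1)$. Since $\ast$ is uniformity preserving, Remark \ref{rem1} gives the conservation identity $I(W^-)+I(W^+)=2I(W)$, so the hypothesis $I(W^-)=I(W)$ forces $I(W^+)=I(W)$.

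Next, because $U_2$ is uniform on $\mathcal{X}$ and $Y_2$ is the output of $W$ fed with $U_2$, we have $I(U_2;Y_2)=I(W)$. Combined with $I(U_2;Y_1,Y_2,U_1)=I(W)$, the chain-rule decomposition
\begin{equation*}
I(U_2;Y_1,Y_2,U_1)=I(U_2;Y_2)+I(U_2;Y_1,U_1\mid Y_2)
\end{equation*}
yields $I(U_2;Y_1,U_1\mid Y_2)=0$. Hence $(Y_1,U_1)\perp U_2\mid Y_2$, i.e., $U_2-Y_2-(Y_1,U_1)$ is a Markov chain.

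The final step is to turn this Markov property into a mutual degradation. In one direction, $W$ is trivially a degradation of $W^+$: just discard the components $Y_1$ and $U_1$ from the output $(Y_1,Y_2,U_1)$ of $W^+$. In the other direction, the Markov chain tells us that $\mathbb{P}_{Y_1,U_1\mid U_2,Y_2}(y_1,u_1\mid u_2,y_2)=\mathbb{P}_{Y_1,U_1\mid Y_2}(y_1,u_1\mid y_2)$ does not depend on $u_2$, so the stochastic kernel $y_2\mapsto(y_1,y_2,u_1)$ defined by this conditional distribution degrades $W$ into $W^+$. This pair of degradations is exactly the notion of equivalence used throughout the paper, completing the proof.

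The only subtle point will be writing out the degradation kernel cleanly and confirming that the resulting channel obtained by cascading $W$ with it reproduces the transition law of $W^+$ from Definition \ref{defdef11}; this is a direct computation using $X_1=U_1\ast U_2$, $X_2=U_2$, and the independence of $U_1,U_2$, with no real obstacle beyond bookkeeping.
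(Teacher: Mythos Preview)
Your proof is correct and follows essentially the same route as the paper: both use the conservation identity to get $I(W^+)=I(W)$, then the chain rule $I(U_2;Y_1,Y_2,U_1)=I(U_2;Y_2)+I(U_2;Y_1,U_1\mid Y_2)$ to conclude that $(Y_1,U_1)$ carries no extra information about $U_2$ given $Y_2$. The only cosmetic difference is that the paper phrases the conclusion as ``$Y_2$ is a sufficient statistic'' and stops there, whereas you spell out the mutual degradation explicitly.
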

\begin{proof}
Since $I(W^+)+I(W^-)=2I(W)$ and since $I(W^-)=I(W)$, we have $I(W^+)=I(W)$. Let $(U_1,U_2)\stackrel{f_{\ast}}{\longrightarrow}(X_1,X_2)\stackrel{W}{\longrightarrow}(Y_1,Y_2)$ (See Notation \ref{notnotnot}). We have:
$$I(W)=I(W^+)=I(U_2;Y_1,Y_2,U_1)=I(U_2;Y_2)+I(U_2;Y_1,U_1|Y_2)=I(W)+I(U_2;Y_1,U_1|Y_2).$$

This shows that $I(U_2;Y_1,U_1|Y_2)=0$. This means that $Y_2$ is a sufficient statistic for the channel $U_2\longrightarrow(Y_1,Y_2,U_1)$ (which is equivalent to $W^+$). We conclude that $W^+$ is equivalent to the channel $U_2\longrightarrow Y_2$, which is equivalent to $W$.
\end{proof}

\begin{myprop}
\label{propnec}
Let $\ast$ be a binary operation on a set $\mathcal{X}$. If $\ast$ is polarizing then $\ast$ is uniformity preserving and $/^{\ast}$ is strongly ergodic.
\end{myprop}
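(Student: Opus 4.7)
The uniformity-preservation conclusion is immediate from Remark \ref{rem1}, since polarization entails the conservation property $I(W^-)+I(W^+)=2I(W)$ for all channels, and that remark shows this conservation is equivalent to uniformity preservation. It therefore remains only to show that $/^{\ast}$ is strongly ergodic, which I would prove by contrapositive: assuming $/^{\ast}$ is not strongly ergodic, I exhibit a channel $W$ whose polarization process $(W_n)_{n\geq0}$ almost surely fails to become $\delta$-easy for some fixed $\delta>0$, contradicting Definition \ref{defPola}.

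The structure theory of Part I \cite{RajErgI} should yield, from the failure of strong ergodicity of $/^{\ast}$, a stable partition $\mathcal{H}$ of $\mathcal{X}$ under $\ast$ with $\mathcal{K}_{\mathcal{H}}\neq\mathcal{H}$, which is precisely the hypothesis of Lemma \ref{lemBijectNec}. Let $\mathcal{A}=\mathcal{H}\cup\mathcal{K}_{\mathcal{H}}$ (a disjoint union, as observed in the proof of that lemma since $||\mathcal{H}||>||\mathcal{K}_{\mathcal{H}}||$) and fix $\lambda\in(0,1)$ to be chosen shortly. Define $W:\mathcal{X}\longrightarrow\mathcal{A}$ so that, on input $x$, the output is the unique $H\in\mathcal{H}$ containing $x$ with probability $\lambda$ and the unique $K\in\mathcal{K}_{\mathcal{H}}$ containing $x$ with probability $1-\lambda$. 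A direct calculation gives $I(W)=\lambda\log|\mathcal{H}|+(1-\lambda)\log|\mathcal{K}_{\mathcal{H}}|$, which is a strictly monotone continuous function of $\lambda\in(0,1)$ because $|\mathcal{K}_{\mathcal{H}}|>|\mathcal{H}|$. Pick $\lambda$ so that $I(W)$ has positive distance $\delta_0>0$ from the finite set $\{\log L:\, 1\leq L\leq|\mathcal{X}|\}$; this makes $W$ itself not $\delta_0$-easy, since the first bullet of Definition \ref{defeasy} already fails.

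The heart of the argument is the identity $I(W^-)=I(W)$. Part~2 of Lemma \ref{lemBijectNec} provides the bijection $(u_1,u_2)\mapsto(u_1\ast u_2,u_2)$, which by part~1 respects the $\mathcal{A}$-block structure: knowing the $\mathcal{A}$-block of $X_1=U_1\ast U_2$ together with that of $X_2=U_2$ is equivalent to knowing the $\mathcal{A}$-blocks of $U_1$ and $U_2$. Conditioning on the two independent coin flips used by the two copies of $W$ in $W^-$, a short case analysis (one case per pair in $\{\mathcal{H},\mathcal{K}_{\mathcal{H}}\}^2$) shows that $(Y_1,Y_2)$ always determines the $\mathcal{A}$-block of $U_2$ and, through the bijection, the $\mathcal{A}$-block of $U_1$. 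Computing $H(U_1\mid Y_1,Y_2)$ term by term yields $\lambda\log||\mathcal{H}||+(1-\lambda)\log||\mathcal{K}_{\mathcal{H}}||$, hence $I(W^-)=\log|\mathcal{X}|-H(U_1\mid Y_1,Y_2)=\lambda\log|\mathcal{H}|+(1-\lambda)\log|\mathcal{K}_{\mathcal{H}}|=I(W)$. Lemma \ref{PlusEquiv} then gives $W^+\equiv W$ at once.

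The same case analysis should also deliver $W^-\equiv W$: conditional on $Y_1\in\mathcal{H}$ (resp.\ $Y_1\in\mathcal{K}_{\mathcal{H}}$), the pair $(Y_1,Y_2)$ reveals to the receiver exactly the $\mathcal{H}$-block (resp.\ $\mathcal{K}_{\mathcal{H}}$-block) of $U_1$, which is the same information $W$ carries about its input. The main technical obstacle I anticipate is making this last equivalence rigorous: I must verify, through a careful sufficient-statistic argument, that the extra information in $(Y_1,Y_2)$ about $U_2$ is conditionally independent of $U_1$ given the observed $\mathcal{A}$-block of $U_1$, so that it may be discarded without loss. Once $W^-\equiv W$ and $W^+\equiv W$ are in place, an immediate induction along the trajectory $(B_n)$ shows $W_n\equiv W$ almost surely for every $n$; since $W$ is not $\delta_0$-easy, neither is any $W_n$, contradicting the polarization property of $\ast$. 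Therefore $/^{\ast}$ must be strongly ergodic, completing the proof.
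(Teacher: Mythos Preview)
Your overall strategy for the strongly-ergodic step---build a channel from a stable partition $\mathcal{H}$ with $\mathcal{K}_{\mathcal{H}}\neq\mathcal{H}$, show $I(W^-)=I(W)$, invoke Lemma~\ref{PlusEquiv} for the plus branch, and iterate---is exactly the paper's main construction. But two genuine gaps remain.

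First, you have collapsed three cases into one. The characterization in Theorem~\ref{thestrong} of Part~I, which produces a stable partition $\mathcal{H}$ with $\mathcal{K}_{\mathcal{H}}\neq\mathcal{H}$, applies only to operations already known to be \emph{ergodic}; if $/^{\ast}$ is not even ergodic, no such $\mathcal{H}$ is handed to you. The paper therefore first rules out, with two separate counterexample channels, the possibility that $\ast$ fails to be irreducible and then that $\ast$ fails to be ergodic; only after securing ergodicity of $\ast$ (hence of $/^{\ast}$) does the stable-partition channel enter. Your single construction cannot cover the non-irreducible and the periodic cases, and ``the structure theory of Part~I should yield'' does not bridge this.

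Second, the partition lives on the wrong operation, and the fixed-point claim $W^-\equiv W$ is too strong. The failure of strong ergodicity of $/^{\ast}$ yields a stable partition $\mathcal{H}$ of $(\mathcal{X},/^{\ast})$, with first residue $\mathcal{K}_{\mathcal{H}}$ taken \emph{with respect to $/^{\ast}$}; Lemma~\ref{lemBijectNec} must then be applied with $/^{\ast}$ in the role of the ergodic operation (so that its right inverse $/^{/^{\ast}}=\ast$ appears in part~2, matching the $U_1\ast U_2$ in the minus channel). With everything oriented correctly one does not get $W^-\equiv W$: the sufficient statistic $Y_1/^{\ast}Y_2$ lands in $\mathcal{H}^{(i+1)/^{\ast}}\cup\mathcal{K}_{\mathcal{H}}^{(i+1)/^{\ast}}$, so $W_i^-$ is equivalent to the \emph{shifted} channel $W_{i+1}$ in a family $\{W_i\}_{i\geq0}$, all sharing the same symmetric capacity $(1-\epsilon')\log|\mathcal{K}_{\mathcal{H}}|+\epsilon'\log|\mathcal{H}|$ and all non-$\delta$-easy. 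The ``technical obstacle'' you flag is thus not a sufficient-statistic subtlety but a structural feature: you must track an indexed family rather than a single fixed channel.
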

\begin{proof}
If $\ast$ is polarizing then $\ast$ must be uniformity preserving (see Remark \ref{rem1}).

We first prove that $\ast$ is irreducible. Suppose to the contrary that $\ast$ is not irreducible. Proposition \ref{lemerg} of Part I \cite{RajErgI} shows that there exist two disjoint non-empty subsets $A_1$ and $A_2$ of $\mathcal{X}$ such that $A_1\cup A_2=\mathcal{X}$, $A_1\ast\mathcal{X}=A_1$ and $A_2\ast\mathcal{X}=A_2$. This means that for every $u_1,u_2\in\mathcal{X}$ and every $y\in\{1,2\}$, we have $u_1\in A_y$ if and only if $u_1\ast u_2\in A_y$.

For each $\epsilon>0$ define the channel $W_{\epsilon}:\mathcal{X}\longrightarrow \{1,2,e\}$ as follows:
\begin{equation*}
W_{\epsilon}(y|x)=
\begin{cases}
1-\epsilon\;&\text{if}\;y\in\{1,2\}\;\text{and}\;x\in A_y,\\
0\;&\text{if}\;y\in\{1,2\}\;\text{and}\;x\notin A_y,\\
\epsilon\;&\text{if}\; y=e.
\end{cases}
\end{equation*}
$I(W_{\epsilon})=(1-\epsilon)h_2\big(\frac{|A_1|}{|\mathcal{X}|}\big)$, so there exists $\epsilon'>0$ such that $I(W_{\epsilon'})$ is not the logarithm of any integer. For such $\epsilon'$, there exists $\delta>0$ such that $W_{\epsilon'}$ is not $\delta$-easy.

Let $(U_1,U_2)\stackrel{f_{\ast}}{\longrightarrow}(X_1,X_2)\stackrel{W_{\epsilon'}}{\longrightarrow}(Y_1,Y_2)$ (See Notation \ref{notnotnot}). Consider the channel $U_1\longrightarrow (Y_1,Y_2)$ which is equivalent to $W_{\epsilon'}^-$. We have:
\begin{equation}
\begin{aligned}
\mathbb{P}_{Y_1,Y_2|U_1}(y_1,y_2|u_1)&=\frac{1}{|\mathcal{X}|}\sum_{u_2\in\mathcal{X}} W_{\epsilon'}(y_1|u_1\ast u_2)W_{\epsilon'}(y_2|u_2)\stackrel{(a)}{=}\frac{1}{|\mathcal{X}|}\sum_{u_2\in\mathcal{X}} W_{\epsilon'}(y_1|u_1)W_{\epsilon'}(y_2|u_2)\\
&\stackrel{(b)}{=}\sum_{u_2\in\mathcal{X}} W_{\epsilon'}(y_1|u_1)\mathbb{P}_{Y_2|U_2}(y_2|u_2)\mathbb{P}_{U_2}(u_2)= W_{\epsilon'}(y_1|u_1)\mathbb{P}_{Y_2}(y_2),
\end{aligned}
\label{sndchbwuy8723cty}
\end{equation}
where (a) follows from the fact that if $y_1=e$ then $W_{\epsilon'}(y_1|u_1\ast u_2)=W_{\epsilon'}(y_1|u_1)=\epsilon'$ and if $y_1\in\{1,2\}$ then $u_1\in A_{y_1}$ if and only if $u_1\ast u_2\in A_{y_1}$, which implies that $W_{\epsilon'}(y_1|u_1\ast u_2)=W_{\epsilon'}(y_1|u_1)$. (b) follows from the fact that the channel $U_2\longrightarrow Y_2$ is equivalent to $W_{\epsilon'}$ and the fact that $U_2$ is uniform in $\mathcal{X}$. 

\eqref{sndchbwuy8723cty} implies that $Y_1$ is a sufficient statistic for the channel $U_1\longrightarrow (Y_1,Y_2)$ (which is equivalent to $W_{\epsilon'}^-$). Moreover, since $\mathbb{P}_{Y_1,Y_2|U_1}(y_1,y_2|u_1)=W_{\epsilon'}(y_1|u_1)\mathbb{P}_{Y_2}(y_2)$,  we conclude that the channel $W_{\epsilon'}^-$ is equivalent to $W_{\epsilon'}$. This implies that $I(W_{\epsilon'}^-)=I(W_{\epsilon'})$. Now Lemma \ref{PlusEquiv} implies that $W_{\epsilon'}^+$ is equivalent to $W_{\epsilon'}$. Therefore, for any $l>0$ and any $s\in\{-,+\}^l$, $W_{\epsilon'}^s$ is equivalent to $W_{\epsilon'}$ which is not $\delta$-easy. This contradicts the fact that $\ast$ is polarizing. We conclude that $\ast$ must be irreducible.

Suppose that $\ast$ is not ergodic. Proposition \ref{lemerg} of Part I \cite{RajErgI} shows that there exists a partition $\{H_0,\ldots,H_{n-1}\}$ of $\mathcal{X}$ such that $H_i\ast \mathcal{X}=H_{i+1\bmod n}$ for all $0\leq i<n$ and $|H_0|=\ldots=|H_{n-1}|$. This means that for every $u_1,u_2\in\mathcal{X}$ and every $y\in\{0,\ldots,n-1\}$, we have $u_1\ast u_2\in H_y$ if and only if $u_1\in H_{y-1\bmod n}$ .

For each $0\leq i<n$ and each $0<\epsilon<1$, define the channel $W_{i,\epsilon}:\mathcal{X}\longrightarrow \{0,\ldots,n-1,e\}$ as follows:
\begin{equation*}
W_{i,\epsilon}(y|x)=
\begin{cases}
1-\epsilon\;&\text{if}\;y\in\{0,\ldots,n-1\}\;\text{and}\;x\in H_{y+i\bmod n},\\
0\;&\text{if}\;y\in\{0,\ldots,n-1\}\;\text{and}\;x\notin H_{y+i\bmod n},\\
\epsilon\;&\text{if}\;y=e.
\end{cases}
\end{equation*}
$I(W_{i,\epsilon})=(1-\epsilon)\log n$ so there exists $\epsilon'>0$ such that $I(W_{i,\epsilon'})$ is not the logarithm of any integer. For such $\epsilon'$, there exists $\delta>0$ such that $W_{i,\epsilon'}$ is not $\delta$-easy for any $0\leq i< n$.

Let $(U_1,U_2)\stackrel{f_{\ast}}{\longrightarrow}(X_1,X_2)\stackrel{W_{i,\epsilon'}}{\longrightarrow}(Y_1,Y_2)$. Consider the channel $U_1\longrightarrow (Y_1,Y_2)$ which is equivalent to $W_{i,\epsilon'}^-$. We have:
\begin{equation}
\begin{aligned}
\mathbb{P}_{Y_1,Y_2|U_1}(y_1,y_2|u_1)&=\frac{1}{|\mathcal{X}|}\sum_{u_2\in\mathcal{X}} W_{i,\epsilon'}(y_1|u_1\ast u_2)W_{i,\epsilon'}(y_2|u_2)\stackrel{(a)}{=}\frac{1}{|\mathcal{X}|}\sum_{u_2\in\mathcal{X}} W_{i-1\bmod n,\epsilon'}(y_1|u_1)W_{i,\epsilon'}(y_2|u_2)\\
&\stackrel{(b)}{=}\sum_{u_2\in\mathcal{X}} W_{i-1\bmod n,\epsilon'}(y_1|u_1)\mathbb{P}_{Y_2|U_2}(y_2|u_2)\mathbb{P}_{U_2}(u_2)= W_{i-1\bmod n,\epsilon'}(y_1|u_1)\mathbb{P}_{Y_2}(y_2),
\end{aligned}
\label{sjsebfjhzxc83y2rwy}
\end{equation}
where (a) follows from the fact that if $y_1=e$ then $W_{i,\epsilon'}(y_1|u_1\ast u_2)=W_{i-1\bmod n,\epsilon'}(y_1|u_1)=\epsilon'$ and if $y_1\in\{0,\ldots,n-1\}$ then $u_1\ast u_2\in H_{y_1+i\bmod n}$ if and only if $u_1\in H_{y_1+i-1\bmod n}$ (which implies that $W_{i,\epsilon'}(y_1|u_1\ast u_2)=W_{i-1\bmod n,\epsilon'}(y_1|u_1)$). (b) follows from the fact that the channel $U_2\longrightarrow Y_2$ is equivalent to $W_{i,\epsilon'}$ and the fact that $U_2$ is uniform in $\mathcal{X}$.

\eqref{sjsebfjhzxc83y2rwy} implies that $Y_1$ is a sufficient statistic for the channel $U_1\longrightarrow (Y_1,Y_2)$ (which is equivalent to $W_{i,\epsilon'}^-$). Moreover, since $\mathbb{P}_{Y_1,Y_2|U_1}(y_1,y_2|u_1)=W_{i-1\bmod n,\epsilon'}(y_1|u_1)\mathbb{P}_{Y_2}(y_2)$,  we conclude that the channel $W_{i,\epsilon'}^-$ is equivalent to $W_{i-1\bmod n,\epsilon'}$. This implies that $I(W_{i,\epsilon'}^-)=I(W_{i-1\bmod n,\epsilon'})=(1-\epsilon')\log n=I(W_{i,\epsilon'})$. Now Lemma \ref{PlusEquiv} implies that $W_{i,\epsilon'}^+$ is equivalent to $W_{i,\epsilon'}$. Therefore, for any $l>0$ and any $s\in\{-,+\}^l$, $W_{i,\epsilon'}^s$ is equivalent to $W_{i-|s|^-\bmod n,\epsilon'}$ (where $|s|^-$ is the number of appearances of the $-$ sign in the sequence $s$) which is not $\delta$-easy. This contradicts the fact that $\ast$ is polarizing. We conclude that $\ast$ must be ergodic.

Since $\ast$ is ergodic, $/^{\ast}$ is ergodic as well. Suppose that $/^{\ast}$ is not strongly ergodic. Theorem \ref{thestrong} of Part I \cite{RajErgI} implies the existence of a stable partition $\mathcal{H}$ of $(\mathcal{X},/^{\ast})$ such that $\mathcal{K}_{\mathcal{H}}\neq \mathcal{H}$ (where $\mathcal{K}_{\mathcal{H}}$ here denotes the first residue of $\mathcal{H}$ with respect to the right-inverse operation $/^{\ast}$). For each $i\geq 0$ and each $\epsilon>0$ define the channel $W_{i,\epsilon}:\mathcal{X}\longrightarrow{\mathcal{K}_{\mathcal{H}}}^{i/^{\ast}}\cup \mathcal{H}^{i/^{\ast}}$ as follows:
\begin{equation*}
W_{i,\epsilon}(y|x)=
\begin{cases}
1-\epsilon\;&\text{if}\;x\in y\;\text{and}\;y\in{\mathcal{K}_{\mathcal{H}}}^{i/^{\ast}},\\
\epsilon\;&\text{if}\;x\in y\;\text{and}\;y\in\mathcal{H}^{i/^{\ast}},\\
0\;&\text{if}\;x\notin y.
\end{cases}
\end{equation*}
We emphasize that $y$ here is a subset of $\mathcal{X}$ and it is not an element of it. We have $$I(W_{i,\epsilon})=(1-\epsilon)\log |{\mathcal{K}_{\mathcal{H}}}^{i/^{\ast}}|+\epsilon\log|\mathcal{H}^{i/^{\ast}}|=(1-\epsilon)\log |\mathcal{K}_{\mathcal{H}}|+\epsilon\log|\mathcal{H}|.$$ Now since $\mathcal{K}_{\mathcal{H}}\neq \mathcal{H}$ and $\mathcal{K}_{\mathcal{H}}\preceq\mathcal{H}$, we have $|\mathcal{H}|\neq |\mathcal{K}_{\mathcal{H}}|$. Therefore, there exists $\epsilon'>0$ such that $I(W_{i,\epsilon'})$ is not the logarithm of any integer. For such $\epsilon'>0$, there exists $\delta>0$ such that $I(W_{i,\epsilon'})$ is not $\delta$-easy for any $i\geq 0$.

Let $(U_1,U_2)\stackrel{f_{\ast}}{\longrightarrow}(X_1,X_2)\stackrel{W_{i,\epsilon'}}{\longrightarrow}(Y_1,Y_2)$. Consider the channel $U_1\longrightarrow (Y_1,Y_2)$, which is equivalent to $W_{i,\epsilon'}^-$. We have:
\begin{equation}
\begin{aligned}
&\mathbb{P}_{Y_1,Y_2|U_1}(y_1,y_2|u_1)\\
&=\frac{1}{|\mathcal{X}|}\sum_{u_2\in\mathcal{X}} W_{i,\epsilon'}(y_1|u_1\ast u_2)W_{i,\epsilon'}(y_2|u_2)\\
&=\frac{1}{|\mathcal{X}|}\sum_{u_2\in\mathcal{X}} \left[\mathds{1}_{u_1\ast u_2\in y_1}\cdot\left((1-\epsilon')\mathds{1}_{y_1\in{\mathcal{K}_{\mathcal{H}}}^{i/^{\ast}}}+\epsilon' \mathds{1}_{y_1\in\mathcal{H}^{i/^{\ast}}} \right)\right]\left[\mathds{1}_{u_2\in y_2}\cdot\left((1-\epsilon')\mathds{1}_{y_2\in{\mathcal{K}_{\mathcal{H}}}^{i/^{\ast}}}+\epsilon' \mathds{1}_{y_2\in\mathcal{H}^{i/^{\ast}}} \right)\right]\\
&=\frac{1}{|\mathcal{X}|}\sum_{u_2\in\mathcal{X}} \mathds{1}_{u_1\ast u_2\in y_1,\;u_2\in y_2}\cdot\left((1-\epsilon')\mathds{1}_{y_1\in{\mathcal{K}_{\mathcal{H}}}^{i/^{\ast}}}+\epsilon' \mathds{1}_{y_1\in\mathcal{H}^{i/^{\ast}}} \right)\left((1-\epsilon')\mathds{1}_{y_2\in{\mathcal{K}_{\mathcal{H}}}^{i/^{\ast}}}+\epsilon' \mathds{1}_{y_2\in\mathcal{H}^{i/^{\ast}}} \right)\\
&\stackrel{(a)}{=}\frac{1}{|\mathcal{X}|}\sum_{u_2\in\mathcal{X}} \mathds{1}_{u_1\in y_1/^{\ast}y_2,\;u_2\in y_2}\cdot\left((1-\epsilon')\mathds{1}_{y_1\in{\mathcal{K}_{\mathcal{H}}}^{i/^{\ast}}}+\epsilon' \mathds{1}_{y_1\in\mathcal{H}^{i/^{\ast}}} \right)\left((1-\epsilon')\mathds{1}_{y_2\in{\mathcal{K}_{\mathcal{H}}}^{i/^{\ast}}}+\epsilon' \mathds{1}_{y_2\in\mathcal{H}^{i/^{\ast}}} \right)\\
&=\frac{1}{|\mathcal{X}|}\sum_{u_2\in\mathcal{X}} \mathds{1}_{u_1\in y_1/^{\ast}y_2,\;u_2\in y_2}
\cdot\Big((1-\epsilon')^2\mathds{1}_{y_1\in{\mathcal{K}_{\mathcal{H}}}^{i/^{\ast}},\;y_2\in{\mathcal{K}_{\mathcal{H}}}^{i/^{\ast}}}+(1-\epsilon')\epsilon' \mathds{1}_{y_1\in{\mathcal{K}_{\mathcal{H}}}^{i/^{\ast}},\;y_2\in\mathcal{H}^{i/^{\ast}}}\\
&\;\;\;\;\;\;\;\;\;\;\;\;\;\;\;\;\;\;\;\;\;\;\;\;\;\;\;\;\;\;\;\;\;\;\;\;\;\;\;\;\;\;\;\;\;\;\;\;\;\;\;\;\;\;\;\;\;\;\;\;\;\;\;\;\;\;\;\;\;\;\;\;\;\;+\epsilon'(1-\epsilon') \mathds{1}_{y_1\in\mathcal{H}^{i/^{\ast}},\;y_2\in{\mathcal{K}_{\mathcal{H}}}^{i/^{\ast}}}+\epsilon'^2 \mathds{1}_{y_1\in\mathcal{H}^{i/^{\ast}},\;y_2\in\mathcal{H}^{i/^{\ast}}} \Big)\\
&\stackrel{(b)}{=}\frac{1}{|\mathcal{X}|}\sum_{u_2\in\mathcal{X}} \mathds{1}_{u_1\in y_1/^{\ast}y_2,\;u_2\in y_2}\cdot
\Big((1-\epsilon')^2\mathds{1}_{y_1/^{\ast}y_2\in{\mathcal{K}_{\mathcal{H}}}^{(i+1)/^{\ast}},\;y_2\in{\mathcal{K}_{\mathcal{H}}}^{i/^{\ast}}}+(1-\epsilon')\epsilon' \mathds{1}_{y_1/^{\ast}y_2\in{\mathcal{K}_{\mathcal{H}}}^{(i+1)/^{\ast}},\;y_2\in\mathcal{H}^{i/^{\ast}}}\\
&\;\;\;\;\;\;\;\;\;\;\;\;\;\;\;\;\;\;\;\;\;\;\;\;\;\;\;\;\;\;\;\;\;\;\;\;\;\;\;\;\;\;\;\;\;\;\;\;\;\;\;\;\;\;\;+\epsilon'(1-\epsilon') \mathds{1}_{y_1/^{\ast}y_2\in\mathcal{H}^{(i+1)/^{\ast}},\;y_2\in{\mathcal{K}_{\mathcal{H}}}^{i/^{\ast}}}+\epsilon'^2 \mathds{1}_{y_1/^{\ast}y_2\in\mathcal{H}^{(i+1)/^{\ast}},\;y_2\in\mathcal{H}^{i/^{\ast}}} \Big)\\
&=\frac{1}{|\mathcal{X}|}\sum_{u_2\in\mathcal{X}} \left[\mathds{1}_{u_1\in y_1/^{\ast}y_2}\cdot\left((1-\epsilon')\mathds{1}_{y_1/^{\ast}y_2\in{\mathcal{K}_{\mathcal{H}}}^{(i+1)/^{\ast}}}+\epsilon' \mathds{1}_{y_1/^{\ast}y_2\in\mathcal{H}^{(i+1)/^{\ast}}} \right)\right]\\
&\;\;\;\;\;\;\;\;\;\;\;\;\;\;\;\;\;\;\;\;\;\;\;\;\;\;\;\;\;\;\;\;\;\;\;\;\;\;\;\;\;\;\;\;\;\;\;\;\;\;\;\;\;\;\;\;\;\;\;\;\;\;\;\;\;\;\;\;\;\;\;\;\;\;\times\left[\mathds{1}_{u_2\in y_2}\cdot\left((1-\epsilon')\mathds{1}_{y_2\in{\mathcal{K}_{\mathcal{H}}}^{i/^{\ast}}}+\epsilon' \mathds{1}_{y_2\in\mathcal{H}^{i/^{\ast}}} \right)\right]\\
&=\frac{1}{|\mathcal{X}|}\sum_{u_2\in\mathcal{X}} W_{i+1,\epsilon'}(y_1/^{\ast}y_2|u_1)W_{i,\epsilon'}(y_2|u_2)\stackrel{(c)}{=}\sum_{u_2\in\mathcal{X}} W_{i+1,\epsilon'}(y_1/^{\ast}y_2|u_1) \mathbb{P}_{Y_2|U_2}(y_2|u_2)\mathbb{P}_{U_2}(u_2)\\
&=W_{i+1,\epsilon'}(y_1/^{\ast}y_2|u_1) \mathbb{P}_{Y_2}(y_2),
\end{aligned}
\label{jdsfsjdbcw74ycr83yc}
\end{equation}
where (a) follows from applying the second point of Lemma \ref{lemBijectNec} on the ergodic operation $/^{\ast}$ and the stable partition $\mathcal{H}^{i/^{\ast}}$. (b) follows from applying the first point of Lemma \ref{lemBijectNec} on the ergodic operation $/^{\ast}$ and the stable partition $\mathcal{H}^{i/^{\ast}}$. (c) follows from the fact that $W_{i,\epsilon'}$ is equivalent to the channel $U_2\longrightarrow Y_2$ and from the fact that $U_2$ is uniform in $\mathcal{X}$.

\eqref{jdsfsjdbcw74ycr83yc} implies that $Y_1/^{\ast}Y_2$ is a sufficient statistic for the channel $U_1\longrightarrow (Y_1,Y_2)$ (which is equivalent to $W_{i,\epsilon'}^-$). Moreover, since $\mathbb{P}_{Y_1,Y_2|U_1}(y_1,y_2|u_1)=W_{i+1,\epsilon'}(y_1/^{\ast}y_2|u_1) \mathbb{P}_{Y_2}(y_2)$, we conclude that the channel $W_{i,\epsilon'}^-$ is equivalent to $W_{i+1,\epsilon'}$. This implies that $I(W_{i,\epsilon'}^-)=I(W_{i+1,\epsilon'})=(1-\epsilon')\log |\mathcal{K}_{\mathcal{H}}|+\epsilon'\log|\mathcal{H}|=I(W_{i,\epsilon'})$. Now Lemma \ref{PlusEquiv} implies that $W_{i,\epsilon'}^+$ is equivalent to $W_{i,\epsilon'}$. Therefore, for any $l>0$ and any $s\in\{-,+\}^l$, $W_{i,\epsilon'}^s$ is equivalent to $W_{i+|s|^-,\epsilon'}$ (where $|s|^-$ is the number of appearances of the $-$ sign in the sequence $s$) which is not $\delta$-easy. This again contradicts the fact that $\ast$ is polarizing. We conclude that $/^{\ast}$ must be strongly ergodic.
\end{proof}

\subsection{Sufficient condition}

In this subsection, we prove a converse for Proposition \ref{propnec}. We will show that for any uniformity preserving operation $\ast$, the strong ergodicity of $/^{\ast}$ implies that $\ast$ is polarizing. We will prove this in three steps.

\subsubsection*{Step 1: Polarized channels are projection channels onto stable partitions}

\begin{mynot}
For every sequence $\mathbf{x}=(x_i)_{0\leq i<N}$ of $N$ elements of $\mathcal{X}$, and for every $0\leq j\leq k<N$, we define the subsequence $\mathbf{x}_j^k$ as the sequence $(x_i')_{0\leq i\leq k-j}$, where $x_i'=x_{i+j}$ for every $0\leq i\leq k-j$.
\end{mynot}

\begin{mynot}
\label{defnotgstar}
For every $k\geq 0$ and every sequence $\mathbf{x}=(x_i)_{0\leq i<2^k}$ of $|\mathbf{x}|=2^k$ elements of $\mathcal{X}$, we define $g_{\ast}(\mathbf{x})\in\mathcal{X}$ recursively on $k$ as follows:
\begin{itemize}
\item If $k=0$ (i.e., $\mathbf{x}=(x_0)$), $g_{\ast}(\mathbf{x})=x_0$.
\item If $k>0$, $g_{\ast}(\mathbf{x})=g_{\ast}(\mathbf{x}_{0}^{|\mathbf{x}|/2-1})\ast g_{\ast}(\mathbf{x}_{|\mathbf{x}|/2}^{|\mathbf{x}|-1})=g_{\ast}(\mathbf{x}_{0}^{2^{k-1}-1})\ast g_{\ast}(\mathbf{x}_{2^{k-1}}^{2^{k}-1})$.
\end{itemize}
For example, we have:
\begin{itemize}
\item $g_{\ast}(\mathbf{x}_0^1)=x_0\ast x_1$.
\item $g_{\ast}(\mathbf{x}_0^3)=(x_0\ast x_1)\ast(x_2\ast x_3)$.
\item $g_{\ast}(\mathbf{x}_0^7)=\big((x_0\ast x_1)\ast(x_2\ast x_3)\big)\ast\big((x_4\ast x_5)\ast(x_6\ast x_7)\big)$.
\end{itemize}
\end{mynot}

\begin{mydef}
Let $A$ be a subset of $\mathcal{X}$. We define the probability distribution $\mathbb{I}_A$ on $\mathcal{X}$ as $\mathbb{I}_A(x)=\frac{1}{|A|}$ if $x\in A$ and $\mathbb{I}_A(x)=0$ otherwise.
\end{mydef}

\begin{mydef}
Let $\mathcal{Y}$ be an arbitrary set, $\mathcal{H}$ be a balanced partition of $\mathcal{X}$ and $(X,Y)$ be a random pair in $\mathcal{X}\times\mathcal{Y}$. For every $\gamma>0$, we define:
$$\mathcal{Y}_{\mathcal{H},\gamma}(X,Y)=\Big\{y\in\mathcal{Y}:\;\exists H_y\in\mathcal{H},\; \|\mathbb{P}_{X|Y=y}-\mathbb{I}_{H_y}\|_{\infty}<\gamma\Big\},$$
$$\mathcal{P}_{\mathcal{H},\gamma}(X,Y)=\mathbb{P}_Y\big(\mathcal{Y}_{\mathcal{H},\gamma}(X,Y)\big).$$
\end{mydef}

Note that if $\mathcal{P}_{\mathcal{H},\gamma}(X,Y)\approx 1$ for a small $\gamma$ then $Y$ is ``almost equivalent" to the projection of $X$ onto $\mathcal{H}$. This will be proved rigorously in step 2. The next proposition will be used later to show that a relation $\mathcal{P}_{\mathcal{H},\gamma}(X,Y)\approx 1$ is satisfied between the input and output of a polarized channel, where $\mathcal{H}$ is a stable partition. This is why we say that polarized channels are projection channels onto stable partitions.

\begin{myprop}
Let $\ast$ be a strongly ergodic operation on a set $\mathcal{X}$. Define $k=2^{2^{|\mathcal{X}|}}+ \scon(\ast)$ and let $\mathcal{Y}$ be an arbitrary set. For any $\gamma>0$, there exists $\epsilon(\gamma)>0$ depending only on $\mathcal{X}$ such that if $(X_i,Y_i)_{0\leq i< 2^k}$ is a sequence of $2^k$ random pairs satisfying:
\begin{enumerate}
\item $(X_i,Y_i)_{0\leq i< 2^k}$ are independent and identically distributed in $\mathcal{X}\times \mathcal{Y}$,
\item $X_i$ is uniform in $\mathcal{X}$ for all $0\leq i<2^k$,
\item $H\big(g_{\ast}(X_0^{2^{k}-1})|Y_0^{2^{k}-1}\big)<H(X_0|Y_0)+\epsilon(\gamma)$,
\end{enumerate}
then there exists a stable partition $\mathcal{H}$ of $(\mathcal{X},\ast)$ such that $\mathcal{P}_{\mathcal{H},\gamma}(X_0,Y_0)>1-\gamma$.
\label{MainProp}
\end{myprop}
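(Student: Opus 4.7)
The plan is to argue by contradiction using a compactness/continuity scheme that converts the global entropy hypothesis into a pointwise statement about the conditional distributions $p_y := \mathbb{P}_{X_0\mid Y_0 = y}$.

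\emph{Setup and monotonicity.} Conditional on $(Y_0,\ldots,Y_{2^k-1})=(y_0,\ldots,y_{2^k-1})$, the inputs $X_0,\ldots,X_{2^k-1}$ are independent with respective laws $p_{y_0},\ldots,p_{y_{2^k-1}}$, so $g_{\ast}(X_0^{2^k-1})$ has the law $q(y_0,\ldots,y_{2^k-1})$ obtained by recursively $\ast$-convolving the $p_{y_i}$ along the binary tree. The hypothesis becomes
\[
\mathbb{E}\bigl[H\bigl(q(Y_0,\ldots,Y_{2^k-1})\bigr)\bigr] \;<\; \mathbb{E}[H(p_{Y_0})] + \epsilon(\gamma).
\]
The chain-rule identity $H(A,B\mid Z) = H(A\ast B\mid Z) + H(A,B\mid A\ast B, Z)$, combined with the fact that each of the pairs $(A,A\ast B)$ and $(B,A\ast B)$ determines the other coordinate (since $\ast$ is uniformity preserving), gives $H(A\ast B\mid Z) \geq H(A\mid Z)$ for any independent $A,B$ and any conditioning $Z$. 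Iterating this through the $k$ levels of the binary tree yields the pointwise bound $H(q(y_0,\ldots,y_{2^k-1})) \geq H(p_{y_0})$, so the integrand in the hypothesis is nonneg and small in $L^1$.

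\emph{Quantitative strong ergodicity.} The core input from Part~I is that, for a strongly ergodic $\ast$, the choice $k = 2^{2^{|\mathcal{X}|}} + \scon(\ast)$ is exactly what is needed for the $2^k$-fold $g_{\ast}$-convolution to stabilize: depth $\scon(\ast)$ forces the iterated-residue sequence of every stable partition to settle, while the double-exponential $2^{2^{|\mathcal{X}|}}$ provides a combinatorial budget absorbing every possible joint stable-partition type of the $2^k$ leaves. Combining this with the compactness of $\Delta(\mathcal{X})$ and the characterization of the $g_{\ast}$-entropy fixed points under strong ergodicity (these are exactly the $\mathbb{I}_H$ with $H$ a block of some stable partition), one obtains the quantitative statement: for every $\gamma>0$ there is $\eta(\gamma)>0$ such that whenever a distribution $p$ is $\gamma$-far in $\|\cdot\|_{\infty}$ from every $\mathbb{I}_H$ ranging over every stable partition of $(\mathcal{X},\ast)$, the entropy gain $H(g_{\ast}(p,p_1,\ldots,p_{2^k-1})) - H(p) \geq \eta(\gamma)$ uniformly in $p_1,\ldots,p_{2^k-1}\in\Delta(\mathcal{X})$.

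\emph{Contradiction.} Suppose the conclusion fails, i.e., for every stable partition $\mathcal{H}$ of $(\mathcal{X},\ast)$, $\mathcal{P}_{\mathcal{H},\gamma}(X_0,Y_0)\leq 1-\gamma$. The family of stable partitions of $(\mathcal{X},\ast)$ is finite (bounded by $2^{2^{|\mathcal{X}|}}$), so a union-bound/pigeonhole over this family produces a set $\mathrm{Bad}\subseteq\mathcal{Y}$ on which $p_y$ is simultaneously $\gamma$-far from $\mathbb{I}_H$ for every block $H$ of every stable partition, with $\mathbb{P}_{Y_0}(\mathrm{Bad}) \geq \gamma^{\star}(\gamma,|\mathcal{X}|) > 0$. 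Applying the bound of the previous paragraph for each $y\in\mathrm{Bad}$ forces
\[
\mathbb{E}[H(q)] - \mathbb{E}[H(p_{Y_0})] \;\geq\; \gamma^{\star}(\gamma,|\mathcal{X}|)\cdot\eta(\gamma),
\]
contradicting the hypothesis as soon as $\epsilon(\gamma) < \gamma^{\star}(\gamma,|\mathcal{X}|)\cdot\eta(\gamma)$.

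\emph{Main obstacle.} The crucial technical step is the quantitative strong ergodicity: converting the qualitative Part~I fact that $\mathcal{K}_{\mathcal{H}}=\mathcal{H}$ for every stable $\mathcal{H}$ into a uniform lower bound on the entropy gain after $k$ levels of $g_{\ast}$-convolution whenever the input distribution is bounded away from the finite set of stable-block uniform distributions. This is exactly where $\scon(\ast)$ controls the stabilization depth, where the double-exponential $2^{2^{|\mathcal{X}|}}$ budget is unavoidable in order to handle all possible joint stable-partition types of the leaves simultaneously, and where the compactness of $\Delta(\mathcal{X})$ is invoked to make the gap $\eta(\gamma)$ uniform.
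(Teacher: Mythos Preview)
Your proposal has two genuine gaps, and together they leave the argument incomplete.

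\textbf{The ``quantitative strong ergodicity'' claim is false as stated.} You assert that if $p$ is $\gamma$-far from every $\mathbb{I}_H$, then $H(g_\ast(p,p_1,\ldots,p_{2^k-1}))-H(p)\geq\eta(\gamma)$ \emph{uniformly in} $p_1,\ldots,p_{2^k-1}\in\Delta(\mathcal{X})$. But take each $p_i$ to be a point mass $\delta_{x_i}$. Since $\ast$ is uniformity preserving, the map $x_0\mapsto g_\ast(x_0,x_1,\ldots,x_{2^k-1})$ is a bijection $\pi_{\mathbf{x}}$, so the resulting law is a permutation of $p$ and the entropy gain is exactly zero, regardless of how far $p$ is from any $\mathbb{I}_H$. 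Your compactness argument therefore cannot produce a positive $\eta(\gamma)$. Any repair must exploit that the $p_{y_i}$ are drawn i.i.d.\ from the \emph{same} law $\mathbb{P}_{Y_0}$, not arbitrary distributions; but once you do that, you are back to analyzing the structure of the mixture $\sum_{\mathbf{x}} p_{y_1^{2^k-1}}(\mathbf{x})\,p_{y_0,\mathbf{x}}$, which is the paper's route, not a clean ``entropy-gain-on-bad-inputs'' bound.

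\textbf{The pigeonhole does not produce the Bad set you need.} From $\mathcal{P}_{\mathcal{H},\gamma}\leq 1-\gamma$ for every stable $\mathcal{H}$ you need a set of $y$'s, of probability bounded away from zero, on which $p_y$ is simultaneously far from \emph{every} $\mathbb{I}_H$ over \emph{every} stable partition. But the complements $\mathcal{Y}_{\mathcal{H},\gamma}^c$ need not intersect: different $y$'s can be close to blocks of different stable partitions, so that $\bigcap_{\mathcal{H}}\mathcal{Y}_{\mathcal{H},\gamma}^c=\varnothing$ while each $\mathcal{P}_{\mathcal{H},\gamma}$ stays below $1-\gamma$. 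No union bound or pigeonhole over the finite family of stable partitions rescues this.

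The paper avoids both issues by working the other way around. It first uses strict concavity of entropy on the mixture representation $\mathbb{P}_{g_\ast(X_0^{2^k-1})\mid Y_0^{2^k-1}=y_0^{2^k-1}}=\sum_{\mathbf{x}} p_{y_1^{2^k-1}}(\mathbf{x})\,p_{y_0,\mathbf{x}}$ (each $p_{y_0,\mathbf{x}}$ a permutation of $p_{y_0}$, hence of equal entropy) to conclude that for most $y_0^{2^k-1}$ the permuted versions $p_{y_0,\mathbf{x}}$ are mutually close whenever $\mathbf{x}$ has non-negligible probability. It then \emph{constructs} the candidate partition $\mathcal{A}$ directly from the threshold supports $A_y=\{x:p_y(x)\geq\gamma'\}$ and uses the strong ergodicity of $\ast$ (this is where $k=2^{2^{|\mathcal{X}|}}+\scon(\ast)$ enters) to prove $\mathcal{A}$ is itself a stable partition. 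The stable partition is discovered from the data, not guessed and tested; this is precisely what sidesteps your pigeonhole difficulty.
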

\begin{proof}
See Appendix \ref{apppA}.
\end{proof}

\vspace*{4mm}
\subsubsection*{Step 2: Structure of projection channels}

\begin{mylem}
Let $\mathcal{X}$ be an arbitrary set and let $\ast$ be an ergodic operation on $\mathcal{X}$. For every $\delta>0$, there exists $\gamma:=\gamma(\delta)>0$ such that for any stable partition $\mathcal{H}$ of $(\mathcal{X},\ast)$, if $(X,Y)$ is a pair of random variables in $\mathcal{X}\times\mathcal{Y}$ satisfying
\begin{enumerate}
\item $X$ is uniform in $\mathcal{X}$,
\item $\mathcal{P}_{\mathcal{H},\gamma}(X;Y)>1-\gamma$,
\end{enumerate}
then $\Big|I\big(\proj_{\mathcal{H}'}(X);Y\big)-\log\frac{|\mathcal{H}|\cdot\|\mathcal{H}\wedge\mathcal{H}'\|}{\|\mathcal{H}'\|}\Big|<\delta$ for every stable partition $\mathcal{H}'$ of $(\mathcal{X},\ast)$.
\label{lemSufWedge}
\end{mylem}
\begin{proof}
Let $\mathcal{H}'$ be a stable partition of $\mathcal{X}$. Note that the entropy function is continuous and the space of probability distributions on $\mathcal{H}'$ is compact. Therefore, the entropy function is uniformly continuous, which means that for every $\delta>0$ there exists $\gamma_{\mathcal{H'}}'(\delta)>0$ such that if $p_1$ and $p_2$ are two probability distributions on $\mathcal{H}'$ satisfying $\|p_1-p_2\|_{\infty}<\gamma_{\mathcal{H'}}'(\delta)$ then $|H(p_1)-H(p_2)|<\frac{\delta}{2}$. Let $\delta>0$ and define $\gamma_{\mathcal{H}'}(\delta)=\min\Big\{\frac{\delta}{2\log(|\mathcal{H'}|+1)},\frac{1}{\|\mathcal{H}'\|}\gamma'_{\mathcal{H'}}(\delta)\Big\}$. Now define $\gamma(\delta)=\min\{\gamma_{\mathcal{H}'}(\delta):\;\mathcal{H}'\;\text{is\;a\;stable\;partition}\}$ which depends only on $(\mathcal{X},\ast)$ and $\delta$. Clearly, $\|\mathcal{H}'\|\gamma(\delta)\leq\gamma'_{\mathcal{H}'}(\delta)$ for every stable partition $\mathcal{H}'$ of $\mathcal{X}$.

Let $\mathcal{H}$ be a stable partition of $\mathcal{X}$ and suppose that $\mathcal{P}_{\mathcal{H},\gamma(\delta)}(X;Y)>1-\gamma(\delta)$, where $X$ is uniform in $\mathcal{X}$. Fix $y\in\mathcal{Y}_{\mathcal{H},\gamma(\delta)}(X;Y)$. By the definition of $\mathcal{Y}_{\mathcal{H},\gamma(\delta)}(X;Y)$, there exists $H_y\in\mathcal{H}$ such that $|\mathbb{P}_{X|Y}(x|y)-\mathbb{I}_{H_y}(x)|<\gamma(\delta)$ for every $x\in\mathcal{X}$.

Let $\mathcal{H}'$ be a stable partition of $\mathcal{X}$. Corollary \ref{corWedge} of Part I \cite{RajErgI} shows that $\mathcal{H}\wedge\mathcal{H}'$ is also a stable partition of $\mathcal{X}$. From the definition of $\mathcal{H}\wedge\mathcal{H}'$, for every $H'\in\mathcal{H}'$ we have either $H_y\cap H'=\o$ or $H_y\cap H'\in \mathcal{H}\wedge\mathcal{H}'$. Therefore, we have either $|H_y\cap H'|=0$ or $|H_y\cap H'|=\|\mathcal{H}\wedge\mathcal{H}'\|$. Let $\mathcal{H}'_y=\{H'\in\mathcal{H}':\; H_y\cap H'\neq \o\}$, so $|H_y\cap H'|=\|\mathcal{H}\wedge\mathcal{H}'\|$ for all $H'\in\mathcal{H}_y'$. Now since $\displaystyle H_y=\bigcup_{H'\in\mathcal{H}'}(H_y\cap H')$, we have $\displaystyle \|\mathcal{H}\|=|H_y|=\sum_{H'\in\mathcal{H}'}|H_y\cap H'|=|\mathcal{H}'_y|\cdot\|\mathcal{H}\wedge\mathcal{H}'\|$. Therefore,
\begin{equation}
\label{leqHprimey}
\frac{\|\mathcal{H}\|}{\|\mathcal{H}\wedge\mathcal{H}'\|}=|\mathcal{H}'_{y}|\leq |\mathcal{H}'|.
\end{equation}

We will now show that for every $y\in\mathcal{Y}_{\mathcal{H},\gamma(\delta)}$, we have $\|\mathbb{P}_{\proj_{\mathcal{H'}}(X)|Y=y}-\mathbb{I}_{\mathcal{H}'_y}\|_{\infty}<\gamma'_{\mathcal{H}'}(\delta)$, where $\mathbb{I}_{\mathcal{H}'_y}$ is the probability distribution on $\mathcal{H}'$ defined as $\mathbb{I}_{\mathcal{H}'_y}(H')=\frac{1}{|\mathcal{H}'_y|}$ if $H'\in\mathcal{H}'_y$ and $\mathbb{I}_{\mathcal{H}'_y}(H')=0$ otherwise. This will be useful to show that $\Big|H(\proj_{\mathcal{H'}}(X)|Y=y)-\log \frac{\|\mathcal{H}\|}{\|\mathcal{H}\wedge\mathcal{H}'\|}\Big|<\frac{\delta}{2}$ for all $y\in\mathcal{Y}_{\mathcal{H},\gamma(\delta)}$.

Let $y\in\mathcal{Y}_{\mathcal{H},\gamma(\delta)}$ and $H'\in \mathcal{H}'$. We have $\displaystyle\mathbb{P}_{\proj_{\mathcal{H'}}(X)|Y}(H'|y)=\sum_{x\in H'}\mathbb{P}_{X|Y}(x|y)$. But since $|\mathbb{P}_{X|Y}(x|y)-\frac{1}{|H_y|}|<\gamma(\delta)$ for every $x\in H_y$, and since $\mathbb{P}_{X|Y}(x|y)<\gamma(\delta)$ if $x\in\mathcal{X}\setminus H_y$, we conclude that $\big|\mathbb{P}_{\proj_{\mathcal{H'}}(X)|Y}(H'|y)-\frac{|H'\cap H_y|}{|H_y|}\big|< |H'|\gamma(\delta)= \|\mathcal{H}'\|\gamma(\delta)\leq \gamma'_{\mathcal{H}'}(\delta)$. We conclude:
\begin{itemize}
\item If $H'\in\mathcal{H}_y'$, we have $|H'\cap H_y|=\|\mathcal{H}\wedge\mathcal{H}'\|$ which means that $\frac{|H'\cap H_y|}{|H_y|}=\frac{\|\mathcal{H}\wedge\mathcal{H}'\|}{\|\mathcal{H}\|}\stackrel{(a)}{=}\frac{1}{|\mathcal{H}'_y|}$, where (a) follows from \eqref{leqHprimey}. Thus $|\mathbb{P}_{\proj_{\mathcal{H'}}(X)|Y}(H'|y)-\frac{1}{|\mathcal{H}'_y|}|< \gamma'_{\mathcal{H}'}(\delta)$.
\item If $H'\in\mathcal{H}'\setminus\mathcal{H}_y'$, $\frac{|H'\cap H_y|}{|H_y|}=0$ and so $\mathbb{P}_{\proj_{\mathcal{H'}}(X)|Y}(H'|y)< \gamma'_{\mathcal{H}'}(\delta)$.
\end{itemize}
Therefore, $\|\mathbb{P}_{\proj_{\mathcal{H'}}(X)|Y=y}-\mathbb{I}_{\mathcal{H}'_y}\|_{\infty}<\gamma'_{\mathcal{H}'}(\delta)$. This means that $\big|H(\proj_{\mathcal{H'}}(X)|Y=y)-H(\mathbb{I}_{\mathcal{H}'_y})\big|<\frac{\delta}{2}$. But $H(\mathbb{I}_{\mathcal{H}'_y})=\log|\mathcal{H}'_y|\stackrel{(a)}{=}\log \frac{\|\mathcal{H}\|}{\|\mathcal{H}\wedge\mathcal{H}'\|}$, where (a) follows from \eqref{leqHprimey}. Therefore,
\begin{equation}
\label{eqleqwedge1}
\forall y\in\mathcal{Y}_{\mathcal{H},\gamma(\delta)},\;\Big|H(\proj_{\mathcal{H'}}(X)|Y=y)-\log \frac{\|\mathcal{H}\|}{\|\mathcal{H}\wedge\mathcal{H}'\|}\Big|<\frac{\delta}{2}.
\end{equation}

On the other hand, for every $y\in\mathcal{Y}_{\mathcal{H},\gamma(\delta)}^c$, $\mathbb{P}_{\proj_{\mathcal{H'}}(X)|Y=y}$ is a probability distribution on $\mathcal{H}'$ which implies that $0\leq H(\proj_{\mathcal{H'}}(X)|Y=y)\leq \log|\mathcal{H}'|$. Moreover, we have $0\leq \log \frac{\|\mathcal{H}\|}{\|\mathcal{H}\wedge\mathcal{H}'\|}\leq \log|\mathcal{H}'|$ from \eqref{leqHprimey}. Therefore,

\begin{equation}
\label{eqleqwedge2}
\forall y\in\mathcal{Y}_{\mathcal{H},\gamma(\delta)}^c,\;\Big|H(\proj_{\mathcal{H'}}(X)|Y=y)-\log \frac{\|\mathcal{H}\|}{\|\mathcal{H}\wedge\mathcal{H}'\|}\Big|\leq \log|\mathcal{H}'|.
\end{equation}

We conclude that:
\begin{align*}
\Big|H(\proj_{\mathcal{H'}}(X)|Y)-\log \frac{\|\mathcal{H}\|}{\|\mathcal{H}\wedge\mathcal{H}'\|}\Big|&\leq \sum_{y\in\mathcal{Y}} \Big|H(\proj_{\mathcal{H'}}(X)|Y=y)-\log \frac{\|\mathcal{H}\|}{\|\mathcal{H}\wedge\mathcal{H}'\|}\Big|\cdot\mathbb{P}_Y(y)\\
&\stackrel{(a)}{\leq}\sum_{y\in\mathcal{Y}_{\mathcal{H},\gamma(\delta)}}\frac{\delta}{2}\cdot\mathbb{P}_Y(y) +\sum_{y\in\mathcal{Y}_{\mathcal{H},\gamma(\delta)}^c} (\log|\mathcal{H'}|)\cdot\mathbb{P}_Y(y)\\
&=\frac{\delta}{2}\cdot\mathbb{P}_Y(\mathcal{Y}_{\mathcal{H},\gamma(\delta)}) + (\log|\mathcal{H}'|)\mathbb{P}_Y(\mathcal{Y}_{\mathcal{H},\gamma(\delta)}^c)\stackrel{(b)}{<} \frac{\delta}{2} + (\log|\mathcal{H}'|)\gamma(\delta)\\
&\leq \frac{\delta}{2} + (\log|\mathcal{H}'|)
\cdot\frac{\delta}{2\log(|\mathcal{H}'|+1)}< \delta,
\end{align*}
where (a) follows from \eqref{eqleqwedge1} and \eqref{eqleqwedge2}. (b) follows from the second condition of the lemma.

Now since $\proj_{\mathcal{H}'}(X)$ is uniform in $\mathcal{H}'$, we have $H(\proj_{\mathcal{H}'}(X))=\log|\mathcal{H}'|$. We conclude that if $\mathcal{P}_{\mathcal{H},\gamma(\delta)}(X,Y)>1-\gamma(\delta)$ then for every stable partition $\mathcal{H}'$ of $(\mathcal{X},\ast)$, we have $$\Big|I\big(\proj_{\mathcal{H}'}(X);Y\big)-\log\frac{|\mathcal{H}'|\cdot\|\mathcal{H}\wedge\mathcal{H}'\|}{\|\mathcal{H}\|}\Big|<\delta,$$ which implies that $\Big|I\big(\proj_{\mathcal{H}'}(X);Y\big)-\log\frac{|\mathcal{H}|\cdot\|\mathcal{H}\wedge\mathcal{H}'\|}{\|\mathcal{H}'\|}\Big|<\delta$ since $|\mathcal{H}|\cdot\|\mathcal{H}\|=|\mathcal{H}'|\cdot\|\mathcal{H}'\|=|\mathcal{X}|$.
\end{proof}
\vspace*{4mm}
\subsubsection*{Step 3: Projection channels are easy}

\begin{mydef}
\label{defprojchannel}
Let $\mathcal{H}$ be a balanced partition of $\mathcal{X}$ and let $W:\mathcal{X}\longrightarrow\mathcal{Y}$. We define the channel $W[\mathcal{H}]:\mathcal{H}\longrightarrow\mathcal{Y}$ by:
$$W[\mathcal{H}](y|H)=\frac{1}{\|\mathcal{H}\|}\sum_{\substack{x\in \mathcal{X}:\\\proj_{\mathcal{H}}(x) = H}}W(y|x)=\frac{1}{|H|}\sum_{x\in H}W(y|x).$$
\end{mydef}

\begin{myrem}
If $X$ is a random variable uniformly distributed in $\mathcal{X}$ and $Y$ is the output of the channel $W$ when $X$ is the input, then it is easy to see that $I(W[\mathcal{H}])=I(\proj_{\mathcal{H}}(X);Y)$.
\end{myrem}

\begin{mythe}
Let $\mathcal{X}$ be an arbitrary set and let $\ast$ be a uniformity preserving operation on $\mathcal{X}$ such that $/^{\ast}$ is strongly ergodic. Let $W:\mathcal{X}\longrightarrow\mathcal{Y}$ be an arbitrary channel. Then for any $\delta>0$, we have:
\begin{align*}
\lim_{n\to\infty} \frac{1}{2^n} \Bigg|\bigg\{ s &\in\{-,+\}^n:\;\exists \mathcal{H}_s\; \text{a stable partition of $(\mathcal{X},/^{\ast})$},\\
&\Big| I(W^s[\mathcal{H}'])-\log\frac{|\mathcal{H}_s|\cdot\|\mathcal{H}_s\wedge\mathcal{H}'\|}{\|\mathcal{H}'\|}\Big|<\delta\;\text{for all stable partitions $\mathcal{H}'$ of $(\mathcal{X},/^{\ast})$} \bigg\}\Bigg| = 1.
\end{align*}
\label{mainthe11}
\end{mythe}
\begin{proof}
Let $(W_n)_n$ be as in Definition \ref{def1}. Since $\ast$ is uniformity preserving, it satisfies the conservation property of Definition \ref{defPola} (see Remark \ref{rem1}). Therefore, we have:
$$\mathbb{E}\big[I(W_{n+1})|W_n\big]=\frac{1}{2}I(W_n^-)+\frac{1}{2}I(W_n^+)=I(W_n).$$
This implies that the process $(I(W_n))_n$ is a martingale, and so it converges almost surely. Therefore, the process $\big(I(W_{n+k})-I(W_n)\big)_n$ converges almost surely to zero, where $k=2^{2^{|\mathcal{X}|}}+\scon(/^{\ast})$. In particular, $\big(I(W_{n+k})-I(W_n)\big)_n$ converges in probability to zero, hence for every $\delta>0$ we have $$\displaystyle \lim_{n\rightarrow\infty}\mathbb{P}\Big[|I(W_{n+k})-I(W_n)|\geq\epsilon\big(\gamma(\delta)\big)\Big]=0,$$ where $\epsilon(.)$ is given by Proposition \ref{MainProp} and $\gamma(.)$ is given by Lemma \ref{lemSufWedge}. We have:

\begin{align*}
\mathbb{P}\Big[|I(W_{n+k})-I(W_n)|\geq \epsilon\big(\gamma(\delta)\big)\Big]=\frac{1}{2^{n+k}}|A_{n,k}|,
\end{align*}
where
$\displaystyle A_{n,k}=\bigg\{ (s,s') \in\{-,+\}^n\times\{-,+\}^k:\; |I(W^{(s,s')})-I(W^s)|\geq \epsilon\big(\gamma(\delta)\big) \bigg\}$. Define:
$$B_{n,k}=\bigg\{ s \in\{-,+\}^n:\; |I(W^{(s,[k]^-)})-I(W^s)|\geq \epsilon\big(\gamma(\delta)\big) \bigg\},$$
where $[k]^-\in\{-,+\}^k$ is the sequence consisting of $k$ minus signs. Clearly, $B_{n,k}\times \{[k]^-\}\subset A_{n,k}$ and so $|B_{n,k}|\leq |A_{n,k}|$. Now since $\displaystyle \lim_{n\rightarrow\infty} \frac{1}{2^{n+k}}|A_{n,k}|=\lim_{n\rightarrow\infty}\mathbb{P}\Big[|I(W_{n+k})-I(W_n)|\geq\epsilon\big(\gamma(\delta)\big)\Big]=0$, we must have $\displaystyle \lim_{n\rightarrow\infty} \frac{1}{2^{n+k}}|B_{n,k}|=0$. Therefore, $\displaystyle \lim_{n\rightarrow\infty} \frac{1}{2^{n}}|B_{n,k}|=2^k\times 0=0$ and so $\displaystyle \lim_{n\rightarrow\infty} \frac{1}{2^{n}}|B_{n,k}^c|=1$.

Now suppose that $s\in B_{n,k}^c$, i.e., $|I(W^{(s,[k]^-)})-I(W^s)|< \epsilon\big(\gamma(\delta)\big)$. Let $U_0,\ldots,U_{2^k-1}$ be $2^k$ independent random variables uniformly distributed in $\mathcal{X}$. For every $0\leq j\leq k$, define the sequence $U_{j,0},\ldots,U_{j,2^k-1}$ recursively as follows:
\begin{itemize}
\item $U_{0,i}=U_i$ for every $0\leq i<2^k$.
\item For every $0\leq j< k$ and every $0\leq i<2^k$, define $U_{j+1,i}$ as follows:
$$U_{j+1,i}=\begin{cases}
U_{j,i}\ast U_{j,i+2^{k-j-1}}\;&\text{if }0\leq i\bmod 2^{k-j}<2^{k-j-1},\\
U_{j,i}\;&\text{if }2^{k-j-1}\leq i\bmod 2^{k-j}<2^{k-j}.
\end{cases}$$
\end{itemize}
Since $\ast$ is uniformity preserving, it is easy to see that for every $0\leq i\leq k$, the $2^k$ random variables $U_{j,0},\ldots,U_{j,2^k-1}$ are independent and uniform in $\mathcal{X}$. In particular, if we define $X_i=U_{k,i}$ for $0\leq i<2^k$, then $X_0,\ldots,X_{2^k-1}$ are $2^k$ independent random variables uniformly distributed in $\mathcal{X}$. Suppose that $X_0,\ldots,X_{2^k-1}$ are sent through $2^k$ independent copies of the channel $W^s$ and let $Y_0,\ldots,Y_{2^k-1}$ be the output of each copy of the channel respectively. Clearly, $(X_i,Y_i)_{0\leq i<2^k}$ are independent and uniformly distributed in $\mathcal{X}\times\mathcal{Y}$. Moreover, $I(W^s)=I(X_i;Y_i)$ for every $0\leq i<2^k$. In particular, $I(W^s)=I(X_0;Y_0)=H(X_0)-H(X_0|Y_0)=\log|\mathcal{X}|-H(X_0|Y_0)$. We will show by backward induction on $0\leq j\leq k$ that for every $0\leq q< 2^j$ we have:
\begin{itemize}
\item $W^{(s,[k-j]^-)}$ is equivalent to the channel $U_{j,q\cdot 2^{k-j}}\longrightarrow Y_{q\cdot 2^{k-j}}^{(q+1)\cdot 2^{k-j}-1}$.
\item $U_{j,q\cdot 2^{k-j}}=g_{/^{\ast}}\big(X_{q\cdot 2^{k-j}}^{(q+1)\cdot 2^{k-j}-1}\big)$.
\end{itemize}
The claim is trivial for $j=k$. Now let $0\leq j<k$ and suppose that the claim is true for $j+1$. Let $0\leq q<2^j$. From the induction hypothesis we have:
\begin{itemize}
\item $W^{(s,[k-j-1]^-)}$ is equivalent to the channel $U_{j+1,q\cdot 2^{k-j}}\longrightarrow Y_{q\cdot 2^{k-j}}^{(2q+1)\cdot 2^{k-j-1}-1}$.
\item $U_{j+1,q\cdot 2^{k-j}}=g_{/^{\ast}}\big(X_{q\cdot 2^{k-j}}^{(2q+1)\cdot 2^{k-j-1}-1}\big)$.
\item $W^{(s,[k-j-1]^-)}$ is equivalent to the channel $U_{j+1,(2q+1)\cdot 2^{k-j-1}}\longrightarrow Y_{(2q+1)\cdot 2^{k-j-1}}^{(q+1)\cdot 2^{k-j}-1}$.
\item $U_{j+1,(2q+1)\cdot 2^{k-j-1}}=g_{/^{\ast}}\big(X_{(2q+1)\cdot 2^{k-j-1}}^{(q+1)\cdot 2^{k-j}-1}\big)$.
\end{itemize}
Now since $U_{j+1,q\cdot 2^{k-j}}=U_{j,q\cdot 2^{k-j}}\ast U_{j,(2q+1)\cdot 2^{k-j-1}}$ and $U_{j+1,(2q+1)\cdot 2^{k-j-1}}=U_{j,(2q+1)\cdot 2^{k-j-1}}$, it follows that $W^{(s,[k-j]^-)}=(W^{(s,[k-j-1]^-)})^-$ is equivalent to the channel $U_{j,q\cdot 2^{k-j}}\longrightarrow Y_{q\cdot 2^{k-j}}^{(q+1)\cdot 2^{k-j}-1}$ (see Remark \ref{rem1}). Moreover, we have
\begin{align*}
U_{j,q\cdot 2^{k-j}}&=U_{j+1,q\cdot 2^{k-j}}/^{\ast} U_{j,(2q+1)\cdot 2^{k-j-1}}=U_{j+1,q\cdot 2^{k-j}}/^{\ast} U_{j+1,(2q+1)\cdot 2^{k-j-1}}\\
&=g_{/^{\ast}}\big(X_{q\cdot 2^{k-j}}^{(2q+1)\cdot 2^{k-j-1}-1}\big)/^{\ast}g_{/^{\ast}}\big(X_{(2q+1)\cdot 2^{k-j-1}}^{(q+1)\cdot 2^{k-j}-1}\big)=g_{/^{\ast}}\big(X_{q\cdot 2^{k-j}}^{(q+1)\cdot 2^{k-j}-1}\big).
\end{align*}
This terminates the induction argument and so the claim is true for all $0\leq j\leq k$. In particular, for $j=0$ and $q=0$, we have $U_0=U_{0,0}=g_{/^{\ast}}\big(X_{0}^{2^k-1}\big)$ and $W^{(s,[k]^-)}$ is equivalent to the channel $U_0\longrightarrow Y_{0}^{2^k-1}$. Thus, $$I(W^{(s,[k]^-)})=I(U_0;Y_0^{2^k-1})=H(U_0)-H(U_0|Y_0^{2^k-1})=\log|\mathcal{X}|-H(U_0|Y_0^{2^k-1}).$$ Hence
\begin{align*}
I(W^{(s,[k]^-)})-I(W^s)&=\log|\mathcal{X}|-H(U_0|Y_0^{2^k-1})-\log|\mathcal{X}|+H(X_0|Y_0)\\
&\stackrel{(a)}{=}H(X_0|Y_0)-H\big(g_{/^\ast}(X_0^{2^k-1})|Y_0^{2^k-1}\big),
\end{align*}
where (a) follows from the fact that $U_0=g_{/^{\ast}}\big(X_{0}^{2^k-1}\big)$. We conclude that $$\big|H\big(g_{/^\ast}(X_0^{2^k-1})|Y_0^{2^k-1}\big)-H(X_0|Y_0)\big|=|I(W^{(s,[k]^-)})-I(W^s)|<\epsilon\big(\gamma(\delta)\big).$$ Proposition \ref{MainProp}, applied to $/^{\ast}$, implies the existence of a stable partition $\mathcal{H}_s$ of $(\mathcal{X},/^{\ast})$ such that $\mathcal{P}_{\mathcal{H}_s,\gamma(\delta)}(X_0,Y_0)>1-\gamma(\delta)$. Now Lemma \ref{lemSufWedge}, applied to $/^{\ast}$, implies that for every stable partition $\mathcal{H}'$ of $(\mathcal{X},/^{\ast})$, we have $\Big|I(W^s[\mathcal{H}'])-\log\frac{|\mathcal{H}_s|\cdot\|\mathcal{H}_s\wedge\mathcal{H}'\|}{\|\mathcal{H}'\|}\Big|=\Big|I\big(\proj_{\mathcal{H}'}(X_0);Y_0\big)-\log\frac{|\mathcal{H}_s|\cdot\|\mathcal{H}_s\wedge\mathcal{H}'\|}{\|\mathcal{H}'\|}\Big|<\delta$. But this is true for every $s\in B_{n,k}^c$. Therefore, $B_{n,k}^c\subset D_n$, where $D_n$ is defined as:
\begin{align*}
D_n=\bigg\{ s \in\{-,+\}^n:\;&\exists \mathcal{H}_s\; \text{a stable partition of $(\mathcal{X},/^{\ast})$},\\
&\Big| I(W^s[\mathcal{H}'])-\log\frac{|\mathcal{H}_s|\cdot\|\mathcal{H}_s\wedge\mathcal{H}'\|}{\|\mathcal{H}'\|}\Big|<\delta\;\text{for all stable partitions $\mathcal{H}'$ of $(\mathcal{X},/^{\ast})$} \bigg\}.
\end{align*}
Now since $\displaystyle \lim_{n\rightarrow\infty} \frac{1}{2^{n}}|B_{n,k}^c|=1$ and $B_{n,k}^c\subset D_n$, we must have $\displaystyle \lim_{n\rightarrow\infty} \frac{1}{2^{n}}|D_n|=1$.
\end{proof}

\begin{mycor}
Let $\mathcal{X}$ be an arbitrary set and let $\ast$ be a uniformity preserving operation on $\mathcal{X}$ such that $/^{\ast}$ is strongly ergodic, and let $W:\mathcal{X}\longrightarrow\mathcal{Y}$ be an arbitrary channel. Then for any $\delta>0$, we have:
\begin{align*}
\lim_{n\to\infty} \frac{1}{2^n} \bigg|\Big\{ s\in\{-,+\}^n:\;&\exists \mathcal{H}_s\; \text{a stable partition of $(\mathcal{X},/^{\ast})$},\\
&\big| I(W^s)-\log|\mathcal{H}_s|\big|<\delta, \big| I(W^s[\mathcal{H}_s])-\log|\mathcal{H}_s|\big|<\delta \Big\}\bigg| = 1.
\end{align*}
\label{cor1}
\end{mycor}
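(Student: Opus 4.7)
The plan is to derive Corollary~\ref{cor1} directly from Theorem~\ref{mainthe11} by evaluating the latter's uniform estimate at two well-chosen stable partitions $\mathcal{H}'$. Theorem~\ref{mainthe11} produces, for a fraction of $s \in \{-,+\}^n$ tending to $1$, a single stable partition $\mathcal{H}_s$ of $(\mathcal{X},/^{\ast})$ satisfying $|I(W^s[\mathcal{H}']) - \log(|\mathcal{H}_s|\cdot\|\mathcal{H}_s\wedge\mathcal{H}'\|/\|\mathcal{H}'\|)| < \delta$ simultaneously for every stable partition $\mathcal{H}'$. I will show that this same $\mathcal{H}_s$ meets both requirements of the corollary, so the corollary's ``good'' set of $s$ contains the theorem's ``good'' set and hence has density tending to $1$.

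For the bound $|I(W^s[\mathcal{H}_s]) - \log|\mathcal{H}_s|| < \delta$, I would take $\mathcal{H}' = \mathcal{H}_s$. Then $\mathcal{H}_s\wedge\mathcal{H}_s = \mathcal{H}_s$, and the approximating expression in Theorem~\ref{mainthe11} collapses to $\log(|\mathcal{H}_s|\cdot\|\mathcal{H}_s\|/\|\mathcal{H}_s\|) = \log|\mathcal{H}_s|$, delivering the bound immediately.

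For the bound $|I(W^s) - \log|\mathcal{H}_s|| < \delta$, I would take $\mathcal{H}'$ to be the singleton partition $\mathcal{H}_\circ := \{\{x\} : x \in \mathcal{X}\}$. Two small verifications are needed: first, $\mathcal{H}_\circ$ is a stable partition of $(\mathcal{X},/^{\ast})$, because $\{x\}/^{\ast}\{y\} = \{x/^{\ast}y\}$ is again a singleton, so $\mathcal{H}_\circ$ coincides with its first residue under $/^{\ast}$; second, $W^s[\mathcal{H}_\circ]$ is equivalent to $W^s$, since projecting onto $\mathcal{H}_\circ$ is essentially the identity on $\mathcal{X}$ and Definition~\ref{defprojchannel} gives $W^s[\mathcal{H}_\circ](y|\{x\}) = W^s(y|x)$, hence $I(W^s[\mathcal{H}_\circ]) = I(W^s)$. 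Since $|\mathcal{H}_\circ| = |\mathcal{X}|$, $\|\mathcal{H}_\circ\| = 1$, and $\mathcal{H}_s\wedge\mathcal{H}_\circ = \mathcal{H}_\circ$ (the meet with the finest partition is itself), the approximating expression is again $\log|\mathcal{H}_s|$, and Theorem~\ref{mainthe11} yields the first inequality of the corollary.

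There is no serious obstacle: all of the analytic and combinatorial labour has already been carried out in Theorem~\ref{mainthe11} and its preparatory results (Proposition~\ref{MainProp} and Lemma~\ref{lemSufWedge}). The corollary is essentially a packaging step, and the only substantive check — stability of the singleton partition under $/^{\ast}$, together with the fact that projecting onto it is the identity — is immediate from the definitions inherited from Part~I.
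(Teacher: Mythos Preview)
Your proof is correct and follows exactly the paper's approach: the paper's proof consists of the single sentence ``We apply Theorem~\ref{mainthe11} and we consider the two particular cases where $\mathcal{H}'=\big\{\{x\}:\; x\in\mathcal{X}\big\}$ and $\mathcal{H}'=\mathcal{H}_s$.'' You have supplied precisely the verifications (stability of the singleton partition, $I(W^s[\mathcal{H}_\circ])=I(W^s)$, and the two evaluations of $\log\frac{|\mathcal{H}_s|\cdot\|\mathcal{H}_s\wedge\mathcal{H}'\|}{\|\mathcal{H}'\|}$) that the paper leaves implicit.
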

\begin{proof}
We apply Theorem \ref{mainthe11} and we consider the two particular cases where $\mathcal{H}'=\big\{\{x\}:\; x\in\mathcal{X}\big\}$ and $\mathcal{H}'=\mathcal{H}_s$.
\end{proof}

\begin{myrem}
Corollary \ref{cor1} can be interpreted as follows: In a polarized channel $W^s$, we have $I(W^s)\approx I(W^s[\mathcal{H}_s])\approx \log|\mathcal{H}_s|$ for some stable partition $\mathcal{H}_s$ of $(\mathcal{X},/^{\ast})$. Let $X_s$ and $Y_s$ be the input and output of the channel $W^s$ respectively. $I(W^s[\mathcal{H}_s])\approx \log|\mathcal{H}_s|$ means that $Y_s$ ``almost" determines $\proj_{\mathcal{H}_s}(X_s)$. On the other hand, $I(W^s)\approx I(W^s[\mathcal{H}_s])$ means that there is ``almost" no other information about $X_s$ which can be determined from $Y_s$. Therefore, $W^s$ is ``almost" equivalent to the channel $X_s\longrightarrow\proj_{\mathcal{H}_s}(X_s)$.
\end{myrem}

\begin{mylem}
\label{lemeasy}
Let $W:\mathcal{X}\longrightarrow\mathcal{Y}$ be an arbitrary channel. If there exists a balanced partition $\mathcal{H}$ of $\mathcal{X}$ such that $\big| I(W)-\log|\mathcal{H}|\big|<\delta$ and $\big| I(W[\mathcal{H}])-\log|\mathcal{H}|\big|<\delta$, then $W$ is $\delta$-easy.
\end{mylem}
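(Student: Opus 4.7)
The plan is to witness the $\delta$-easiness of $W$ with an explicit random code $\mathcal{B}$ built from the partition $\mathcal{H}$. I would take $L:=|\mathcal{H}|$ (so $L\le|\mathcal{X}|$ since each block of $\mathcal{H}$ is non-empty) and let $\mathcal{B}$ be a uniformly random transversal of $\mathcal{H}$, generated by picking one element independently and uniformly at random from each block; then $\mathcal{B}$ takes values in $\mathcal{S}=\{C\subset\mathcal{X}:|C|=L\}$. The first bullet of Definition~\ref{defeasy} is immediate from $|I(W)-\log|\mathcal{H}||<\delta$. For the uniformity bullet, given $x\in\mathcal{X}$ and letting $H\in\mathcal{H}$ be the block containing $x$, the event $x\in\mathcal{B}$ is exactly the event that $\mathcal{B}$ picks $x$ as its representative of $H$, which has probability $1/|H|=L/|\mathcal{X}|$ by balancedness; hence $\sum_C\tfrac{1}{L}\mathbb{P}_{\mathcal{B}}(C)\mathds{1}_{x\in C}=\tfrac{1}{L}\cdot\tfrac{L}{|\mathcal{X}|}=\tfrac{1}{|\mathcal{X}|}$.

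For the third bullet I would fix an arbitrary ordering $H_1,\ldots,H_L$ of $\mathcal{H}$ and, for each transversal $C$, take $f_C(i)$ to be the unique element of $C\cap H_i$ (recall that $I(W_{\mathcal{B}})$ does not depend on the choice of bijections). In the joint distribution defining $W_{\mathcal{B}}$, write $A$ for the uniformly chosen input index, $X:=f_C(A)$ for the transmitted symbol, and $Y$ for the output. By construction $A=i$ iff $X\in H_i$, so under the identification $i\leftrightarrow H_i$ the index $A$ is precisely $\proj_{\mathcal{H}}(X)$. The chain rule then yields
\[
I(W_{\mathcal{B}})=I(A;Y,C)=I(A;Y)+I(A;C\mid Y)\ge I(A;Y).
\]
A short marginal computation using Definition~\ref{defprojchannel} identifies $I(A;Y)$ with $I(W[\mathcal{H}])$: since the per-block picks are independent and uniform, $f_{\mathcal{B}}(i)$ is uniform on $H_i$, and so $\mathbb{P}(A=i,Y=y)=\tfrac{1}{L}\cdot\tfrac{1}{|H_i|}\sum_{x\in H_i}W(y\mid x)=\tfrac{1}{L}W[\mathcal{H}](y\mid H_i)$, which is exactly the joint law of input and output of $W[\mathcal{H}]$ under uniform input on $\mathcal{H}$. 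Combining with the hypothesis $I(W[\mathcal{H}])>\log|\mathcal{H}|-\delta$ gives $I(W_{\mathcal{B}})>\log L-\delta$, as required.

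There is no substantial obstacle here: once the correct code $\mathcal{B}$ (uniform random transversal) and bijections $f_C$ (select the representative in the $i$-th block) are written down, each bullet of Definition~\ref{defeasy} reduces to bookkeeping. The only conceptual point is the inequality $I(W_{\mathcal{B}})\ge I(W[\mathcal{H}])$, which simply says that revealing the transversal $C$ together with the intra-block index $A$ is at least as informative about $X$ as revealing only its block $\proj_{\mathcal{H}}(X)$, and the latter is precisely what the auxiliary channel $W[\mathcal{H}]$ captures.
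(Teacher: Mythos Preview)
Your proof is correct and essentially identical to the paper's: both take $L=|\mathcal{H}|$, let $\mathcal{B}$ be a uniform random transversal of $\mathcal{H}$ with independent per-block picks, choose $f_C(i)$ to be the unique element of $C\cap H_i$, verify the uniformity bullet via $\mathbb{P}[x\in\mathcal{B}]=1/||\mathcal{H}||$, and conclude $I(W_{\mathcal{B}})\ge I(A;Y)=I(W[\mathcal{H}])>\log L-\delta$. The only cosmetic difference is that the paper first verifies $X$ is uniform in $\mathcal{X}$ and then identifies $I(U;Y)=I(H_U;Y)$ via the bijection $i\mapsto H_i$, whereas you compute the joint law of $(A,Y)$ directly; these are equivalent.
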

\begin{proof}
Let $L=|\mathcal{H}|$ and let $H_1,\ldots,H_L$ be the $L$ members of $\mathcal{H}$. Let $\mathcal{S}=\big\{C\subset \mathcal{X}:\; |C|=L\big\}$ and $\mathcal{S}_{\mathcal{H}}=\big\{\{x_1,\ldots,x_L\}:\; x_1\in H_1,\ldots, x_L\in H_L\big\}\subset\mathcal{S}$. For each $1\leq i\leq L$, let $X_i$ be a random variable uniformly distributed in $H_i$. Define $\mathcal{B}=\{X_1,\ldots,X_L\}$, which is a random set taking values in $\mathcal{S}_{\mathcal{H}}$. Note that we can see $\mathcal{B}$ as a random variable in $\mathcal{S}$ since $\mathcal{S}_{\mathcal{H}}\subset \mathcal{S}$. For every $x\in\mathcal{X}$, let $H_i$ be the unique element of $\mathcal{H}$ such that $x\in H_i$. We have:
\begin{equation}
\label{eqUnif123}
\frac{1}{L}\sum_{C\in\mathcal{H}}\mathbb{P}_{\mathcal{B}}(C)\mathds{1}_{x\in C}=\frac{1}{|\mathcal{H}|}\mathbb{P}[x\in\mathcal{B}]\stackrel{(a)}{=}\frac{1}{|\mathcal{H}|}\mathbb{P}[X_i=x]=\frac{1}{|\mathcal{H}|}\cdot\frac{1}{|H_i|}=\frac{1}{|\mathcal{H}|}\cdot\frac{1}{\|\mathcal{H}\|}=\frac{1}{|\mathcal{X}|},
\end{equation}
where (a) follows from the fact that $x\in\mathcal{B}$ if and only if $X_i=x$. Now for each $C\in\mathcal{S}_{\mathcal{H}}$, define the bijection $f_C:\{1,\ldots,L\}\rightarrow C$ as follows: for each $1\leq i\leq L$, $f_C(i)$ is the unique element in $C\cap H_i$ (so $\proj_{\mathcal{H}}(f_C(i))=H_i$). Let $U$ be a random variable chosen uniformly in $\{1,\ldots,L\}$ and independently from $\mathcal{B}$, and let $X=f_\mathcal{B}(U)$ (so $\proj_{\mathcal{H}}(X)=H_U$). From \eqref{eqUnif123} we get that $X$ is uniform in $\mathcal{X}$.

Let $Y$ be the output of the channel $W$ when $X$ is the input. From Definition \ref{defeasy}, we have $I(W_\mathcal{B})=I(U;Y,\mathcal{B})$. On the other hand, $I(W[\mathcal{H}])=I(\proj_{\mathcal{H}}(X);Y)=I(H_U;Y)$. Therefore, $I(W_\mathcal{B})=I(U;Y,\mathcal{B})\geq I(U;Y)\stackrel{(a)}{=}I(H_U;Y)=I(W[\mathcal{H}])\stackrel{(b)}{>}\log L -\delta$, where (a) follows from the fact that the mapping $u\rightarrow H_u$ is a bijection from $\{1,\ldots,L\}$ to $\mathcal{H}$ and (b) follows from the fact that $\big| I(W[\mathcal{H}])-\log|\mathcal{H}|\big|<\delta$. We conclude that $W$ is $\delta$-easy since $I(W_\mathcal{B})>\log L-\delta$ and $|I(W)-\log L|<\delta$.
\end{proof}

\begin{myprop}
\label{propsuff}
If $\ast$ is a uniformity preserving operation on a set $\mathcal{X}$ and $/^{\ast}$ is strongly ergodic, then $\ast$ is polarizing.
\end{myprop}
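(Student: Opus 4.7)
The conservation property $I(W^-)+I(W^+)=2I(W)$ is immediate from Remark \ref{rem1}, since $\ast$ is assumed uniformity preserving. Thus it only remains to verify the polarization property, namely that $W_n$ almost surely becomes $\delta$-easy for every channel $W$ and every $\delta>0$.

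The plan is to combine Corollary \ref{cor1} with Lemma \ref{lemeasy}. Fix $W$ and $\delta>0$. By Corollary \ref{cor1}, applied to the strongly ergodic operation $/^{\ast}$, the fraction of sequences $s\in\{-,+\}^n$ for which there exists a stable partition $\mathcal{H}_s$ of $(\mathcal{X},/^{\ast})$ satisfying both
\[
\big|I(W^s)-\log|\mathcal{H}_s|\big|<\delta\quad\text{and}\quad \big|I(W^s[\mathcal{H}_s])-\log|\mathcal{H}_s|\big|<\delta
\]
tends to $1$ as $n\to\infty$. Since every stable partition is balanced (all of its blocks have the same size $||\mathcal{H}_s||$), Lemma \ref{lemeasy} applies and shows that for each such $s$ the channel $W^s$ is $\delta$-easy. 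Consequently the probability $\mathbb{P}[W_n\text{ is }\delta\text{-easy}]$ is bounded below by this fraction, and therefore tends to $1$.

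This is essentially an assembly step, so there is no substantive obstacle left: the analytic work has been done in Step 1 (Proposition \ref{MainProp}), Step 2 (Lemma \ref{lemSufWedge}), and Step 3 (Theorem \ref{mainthe11} and Corollary \ref{cor1}), and the translation from ``polarized in the $\mathcal{H}_s$ sense'' to ``$\delta$-easy'' is exactly the content of Lemma \ref{lemeasy}. The only thing to double-check is that the two specializations $\mathcal{H}'=\{\{x\}:x\in\mathcal{X}\}$ and $\mathcal{H}'=\mathcal{H}_s$ used in the proof of Corollary \ref{cor1} indeed give, respectively, the estimate $|I(W^s)-\log|\mathcal{H}_s||<\delta$ (via $|\mathcal{H}_s|\cdot ||\mathcal{H}_s\wedge\mathcal{H}'||/||\mathcal{H}'||=|\mathcal{H}_s|$ when $\mathcal{H}'$ is the trivial partition into singletons) and $|I(W^s[\mathcal{H}_s])-\log|\mathcal{H}_s||<\delta$ (when $\mathcal{H}'=\mathcal{H}_s$ so that $\mathcal{H}_s\wedge\mathcal{H}'=\mathcal{H}_s$ and the ratio equals $|\mathcal{H}_s|$); both are straightforward from the definitions in Part I.
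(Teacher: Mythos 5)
Your proposal matches the paper's proof essentially verbatim: conservation from Remark \ref{rem1}, and polarization by combining Corollary \ref{cor1} with Lemma \ref{lemeasy}, noting that stable partitions are balanced so the lemma applies. The extra sanity-check you carry out on the two specializations of $\mathcal{H}'$ inside Corollary \ref{cor1} is correct but belongs to the proof of that corollary rather than of this proposition.
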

\begin{proof}
We have the following:
\begin{itemize}
\item We know from Remark \ref{rem1} that since $\ast$ is uniformity preserving, it satisfies the conservation property of Definition \ref{defPola}.
\item The polarization property of Definition \ref{defPola} follows immediately from Corollary \ref{cor1} and Lemma \ref{lemeasy}.
\end{itemize}
Therefore, $\ast$ is polarizing.
\end{proof}

\begin{mythe}
If $\ast$ is a binary operation on a set $\mathcal{X}$, then $\ast$ is polarizing if and only if $\ast$ is uniformity preserving and $/^{\ast}$ is strongly ergodic.
\label{thecaraccharac}
\end{mythe}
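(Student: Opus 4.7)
The plan is very short: Theorem \ref{thecaraccharac} is essentially a restatement of the combination of Proposition \ref{propnec} and Proposition \ref{propsuff}, both of which appear earlier in this section. So the proof I would write is a two-line deduction: the ``only if'' direction is exactly Proposition \ref{propnec}, and the ``if'' direction is exactly Proposition \ref{propsuff}; since these two statements together cover both implications in the biconditional, the theorem follows immediately.

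If asked to describe the actual content being invoked, I would say the following. For the necessary direction (Proposition \ref{propnec}), the argument is: uniformity preserving follows from Remark \ref{rem1} via the conservation property; irreducibility, ergodicity, and strong ergodicity of $/^{\ast}$ are each obtained by contradiction, by exhibiting, whenever the corresponding property fails, an explicit channel $W_{\epsilon'}$ (built from a partition provided by Proposition~\ref{lemerg} of Part I or by Theorem~\ref{thestrong} of Part I) such that $W_{\epsilon'}^-$ is equivalent to a channel of the same form and $I(W_{\epsilon'}^-)=I(W_{\epsilon'})$. Lemma \ref{PlusEquiv} then forces $W_{\epsilon'}^+$ to be equivalent to $W_{\epsilon'}$, so every $W_{\epsilon'}^s$ stays in the same family and cannot become $\delta$-easy, contradicting polarization.

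For the sufficient direction (Proposition \ref{propsuff}), the argument is the three-step development just completed: Proposition~\ref{MainProp} (Step~1) exploits the martingale convergence of $I(W_n)$ to force, with high probability, the existence of a stable partition $\mathcal{H}_s$ of $(\mathcal{X},/^{\ast})$ such that the posterior distribution $\mathbb{P}_{X|Y}$ is close to uniform on an atom of $\mathcal{H}_s$; Lemma~\ref{lemSufWedge} (Step~2) translates this into a statement about the mutual informations $I(W^s[\mathcal{H}'])$; Theorem~\ref{mainthe11} and Corollary~\ref{cor1} (Step~3) assemble these into the statement that almost all channels $W^s$ are well-approximated, in the sense of $I(W^s)\approx I(W^s[\mathcal{H}_s])\approx\log|\mathcal{H}_s|$, by a projection channel; and Lemma~\ref{lemeasy} converts this two-sided approximation into $\delta$-easiness via an explicit random code $\mathcal{B}$ that selects one representative per atom of $\mathcal{H}_s$.

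Since both directions are already established, there is no real obstacle at the level of Theorem~\ref{thecaraccharac} itself; the entire difficulty has been absorbed into Proposition~\ref{propnec} and Proposition~\ref{propsuff}, with the genuinely hard step being Proposition~\ref{MainProp} (Step~1 of the sufficient direction), which is where the strong ergodicity hypothesis on $/^{\ast}$ is used in an essential way through Theorem~\ref{theinduced} and Theorem~\ref{thestrong} of Part~I. Thus my written proof would simply be: \emph{Combine Proposition~\ref{propnec} and Proposition~\ref{propsuff}.}
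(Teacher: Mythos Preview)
Your proposal is correct and matches the paper's own proof exactly: the theorem is stated as an immediate consequence of Proposition~\ref{propnec} (necessity) and Proposition~\ref{propsuff} (sufficiency). Your summary of the content of those two propositions is also accurate, though of course none of that needs to appear in the proof of Theorem~\ref{thecaraccharac} itself.
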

\begin{proof}
The theorem follows from Propositions \ref{propnec} and \ref{propsuff}.
\end{proof}

\section{Exponent of a polarizing operation}

In this section, we study the exponent of polarizing operations.

\begin{mydef}
Let $W$ be a channel with input alphabet $\mathcal{X}$ and output alphabet $\mathcal{Y}$. For every $x,x'\in\mathcal{X}$, we define the channel $W_{x,x'}:\{0,1\}\rightarrow\mathcal{Y}$ as follows:
\begin{equation*}
W_{x,x'}(y|b)=
\begin{cases}
W(y|x)\;\text{if}\;b=0,\\
W(y|x')\;\text{if}\;b=1.
\end{cases}
\end{equation*}
The \emph{Battacharyya parameter between $x$ and $x'$ of the channel $W$} is the Bhattacharyya parameter of the channel $W_{x,x'}$:
$$Z(W_{x,x'}):=\sum_{y\in\mathcal{Y}}\sqrt{W_{x,x'}(y|0)W_{x,x'}(y|1)}=\sum_{y\in\mathcal{Y}}\sqrt{W(y|x)W(y|x')}.$$
It is easy to see that $0\leq Z(W_{x,x'})\leq 1$ for every $x,x'\in \mathcal{X}$. Moreover, if $x=x'$ we have $Z(W_{x,x'})=Z(W_{x,x})=1$.

If $|\mathcal{X}|\geq 2$, the \emph{Battacharyya parameter of the channel $W$} is defined as:
$$Z(W):=\frac{1}{|\mathcal{X}|(|\mathcal{X}|-1)}\sum_{\substack{(x,x')\in\mathcal{X}\times\mathcal{X}\\x\neq x'}}Z(W_{x,x'}).$$
We can easily see that $0\leq Z(W)\leq 1$.
\end{mydef}

\begin{myprop}
\label{propBhat}
The Bhattacharyya parameter of a channel $W:\mathcal{X}\rightarrow\mathcal{Y}$ has the following properties:
\begin{enumerate}
\item $\displaystyle Z(W)^2\leq 1-\frac{I(W)}{\log|\mathcal{X}|}$.
\item $\displaystyle I(W)\geq \log\frac{|\mathcal{X}|}{1+(|\mathcal{X}|-1)Z(W)}$.
\item $\displaystyle \frac{1}{4} Z(W)^2\leq \mathbb{P}_e(W)\leq (|\mathcal{X}|-1)Z(W)$, where $\mathbb{P}_e(W)$ is the probability of error of the maximum likelihood decoder of $W$ for uniformly distributed input.
\end{enumerate}
\end{myprop}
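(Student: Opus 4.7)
My plan is to reduce each of the three statements to its analogue for binary-input channels and then lift via Cauchy--Schwarz or Jensen. Two identities are used throughout: the definitional $Z(W)=\tfrac{1}{|\mathcal{X}|(|\mathcal{X}|-1)}\sum_{x\neq x'}Z(W_{x,x'})$, and the square expansion $\sum_y\bigl(\sum_x\sqrt{W(y|x)}\bigr)^2=\sum_{x,x'}Z(W_{x,x'})=|\mathcal{X}|+|\mathcal{X}|(|\mathcal{X}|-1)Z(W)$. I would first record the binary versions for a binary-input $W'$ with uniform prior: $Z(W')^2\leq 1-I(W')$ (log base $2$), $I(W')\geq\log\tfrac{2}{1+Z(W')}$, and $\tfrac14 Z(W')^2\leq P_e(W')\leq\tfrac12 Z(W')$, all proved by one-line arguments using $\min(a,b)\leq\sqrt{ab}\leq\sqrt{\min(a,b)(a+b)}$ and the elementary inequality $4p(1-p)\leq h_2(p)$.

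For part (3), the upper bound follows from a union--Bhattacharyya argument: for every $x$ and every $x'\neq x$, $P(\hat X=x'\mid X=x)\leq\sum_y W(y|x)\sqrt{W(y|x')/W(y|x)}=Z(W_{x,x'})$, hence $P_e(W)\leq\tfrac{1}{|\mathcal{X}|}\sum_{x\neq x'}Z(W_{x,x'})=(|\mathcal{X}|-1)Z(W)$. For the lower bound, I would use a genie argument: if $\tilde X_{x,x'}$ is the pairwise binary ML restricted to $\{x,x'\}$, then $\tilde X_{x,x'}=x'$ forces $\hat X\neq x$, giving $\tfrac12[P(\hat X\neq x\mid X=x)+P(\hat X\neq x'\mid X=x')]\geq P_e(W_{x,x'})\geq\tfrac14 Z(W_{x,x'})^2$. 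Summing over ordered pairs yields $|\mathcal{X}|(|\mathcal{X}|-1)P_e(W)\geq\tfrac14\sum_{x\neq x'}Z(W_{x,x'})^2$, and Cauchy--Schwarz $\sum_{x\neq x'}Z(W_{x,x'})^2\geq\tfrac{1}{|\mathcal{X}|(|\mathcal{X}|-1)}\bigl(\sum_{x\neq x'}Z(W_{x,x'})\bigr)^2=|\mathcal{X}|(|\mathcal{X}|-1)Z(W)^2$ completes the bound.

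For part (2), the cleanest route is via the Gallager function at $\rho=1$:
\[
E_0(1) := -\log\sum_y\Bigl(\tfrac{1}{|\mathcal{X}|}\sum_x\sqrt{W(y|x)}\Bigr)^2 = \log\frac{|\mathcal{X}|}{1+(|\mathcal{X}|-1)Z(W)},
\]
by the square expansion, and $I(W)\geq E_0(1)$ follows from concavity of $\rho\mapsto E_0(\rho)$ with $E_0(0)=0$ and $E_0'(0^+)=I(W)$, so that by concavity $E_0(1)\leq E_0(0)+1\cdot E_0'(0^+)=I(W)$.

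For part (1), applying the binary bound $Z(W_{x,x'})^2\leq 1-I(W_{x,x'})/\log 2$ pairwise together with Cauchy--Schwarz on the definition of $Z(W)$ gives
\[
Z(W)^2\leq\frac{1}{|\mathcal{X}|(|\mathcal{X}|-1)}\sum_{x\neq x'}Z(W_{x,x'})^2\leq 1-\frac{1}{|\mathcal{X}|(|\mathcal{X}|-1)\log 2}\sum_{x\neq x'}I(W_{x,x'}).
\]
The main obstacle is then the averaging inequality $\tfrac{1}{|\mathcal{X}|(|\mathcal{X}|-1)\log 2}\sum_{x\neq x'}I(W_{x,x'})\geq I(W)/\log|\mathcal{X}|$. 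Writing each pairwise mutual information in terms of the posterior $q_y(\cdot)=W(y|\cdot)/\sum_x W(y|x)$ via the identity $(q_x+q_{x'})h_2\bigl(q_x/(q_x+q_{x'})\bigr)=-q_x\log q_x-q_{x'}\log q_{x'}+(q_x+q_{x'})\log(q_x+q_{x'})$, the claim reduces to the pointwise simplex inequality $2(|\mathcal{X}|-1)\log(|\mathcal{X}|/2)\,H(q)\leq\log|\mathcal{X}|\cdot\sum_{x\neq x'}(q_x+q_{x'})\log\tfrac{1}{q_x+q_{x'}}$, which is tight at the uniform distribution and at point masses, and which I would attack by Lagrangian optimization on the simplex to show the defect achieves its minimum only at these extremal distributions.
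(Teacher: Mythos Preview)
Your treatment of parts (2) and (3) is sound and essentially matches what the paper does. The paper in fact cites \c{S}a\c{s}o\u{g}lu's thesis for (1), (2), and the upper bound in (3), and only proves the lower bound of (3) in detail; your genie argument for that lower bound is exactly the paper's argument phrased contrapositively (the paper builds a suboptimal decoder for $W_{x,x'}$ out of the full ML decoder, which is the same implication as ``$\tilde X_{x,x'}=x'$ forces $\hat X\neq x$''), followed by the same Cauchy--Schwarz/convexity step to pass from $\sum Z(W_{x,x'})^2$ to $Z(W)^2$. Your Gallager-function proof of (2) is standard and complete.

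The genuine gap is part (1). Your reduction to the pointwise simplex inequality
\[
2(|\mathcal{X}|-1)\log\frac{|\mathcal{X}|}{2}\; H(q)\;\leq\;\log|\mathcal{X}|\sum_{x\neq x'}(q_x+q_{x'})\log\frac{1}{q_x+q_{x'}}
\]
is correct, but you have not proved this inequality; you only propose to ``attack it by Lagrangian optimization''. That is a plan, not a proof, and this particular inequality is not entirely routine: the two sides are both Schur-concave and vanish together at point masses and coincide at the uniform distribution, so a naive stationarity argument will not immediately isolate the extremizers. You would need, at minimum, to show that any interior stationary point of the defect is the uniform distribution (e.g.\ by exploiting symmetry and analyzing the second variation, or by a careful majorization argument), and then handle the boundary faces recursively. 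Until that is done, part (1) is not established by your outline. If you want to avoid this detour, the proof in \c{S}a\c{s}o\u{g}lu's thesis (Proposition~3.3) proceeds more directly and does not pass through this pairwise averaging inequality; you may find it cleaner to follow that route instead of trying to close the simplex inequality.
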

\begin{proof}
See Appendix \ref{appB}.
\end{proof}

\begin{myrem}
Proposition \ref{propBhat} shows that $Z(W)$ measures the ability of the receiver to reliably decode the output and correctly estimate the input:
\begin{itemize}
\item If $Z(W)$ is low, the inequality $\mathbb{P}_e(W)\leq (|\mathcal{X}|-1)Z(W)$ implies that $\mathbb{P}_e(W)$ is also low and the receiver can determine the input from the output with high probability. This is also expressed by inequality 2) of Proposition \ref{propBhat}: if $Z(W)$ is close to 0, $I(W)$ is close to $\log|\mathcal{X}|$.
\item If $Z(W)$ is close to 1, inequality 1) of Proposition \ref{propBhat} implies that $I(W)$ is close to 0, which means that the input and the output are ``almost" independent and so it is not possible to recover the input reliably. This is also expressed by the inequality $\displaystyle \mathbb{P}_e(W)\geq \frac{1}{4} Z(W)^2$: if $Z(W)$ is high, $\mathbb{P}_e(W)$ cannot be too low.
\end{itemize}
Since $W_{x,x'}$ is the binary input channel obtained by sending either $x$ or $x'$ through $W$, $Z(W_{x,x'})$ can be interpreted as a measure of the ability of the receiver to distinguish between $x$ and $x'$: if $Z(W_{x,x'})\approx 0$, the receiver can reliably distinguish between $x$ and $x'$ and if $Z(W_{x,x'})\approx 1$, the receiver cannot distinguish between $x$ and $x'$.
\end{myrem}

\begin{mynot}
Let $x,x'\in\mathcal{X}$ and let $s\in\{-,+\}^n$. Throughout this section, $W^s_{x,x'}$ denotes $(W^s)_{x,x'}$. The channel $W^s_{x,x'}$ should not be confused with $(W_{x,x'})^s$ which is not defined unless a binary operation on $\{0,1\}$ is specified.
\end{mynot}

\begin{mylem}
\label{BhatMinus}
For every $u_1,u_1',v\in\mathcal{X}$, we have $\displaystyle Z(W^{-}_{u_1,u_1'})\geq \frac{1}{|\mathcal{X}|} Z(W_{u_1\ast v,u_1'\ast v})$.
\end{mylem}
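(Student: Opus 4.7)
The plan is to use the definition of $W^-$ directly together with the positivity of all terms appearing in the sum. The key observation is that the sum defining $W^-(y_1,y_2|u_1)$ has $|\mathcal{X}|$ nonnegative terms, so it is bounded below by any single term, and in particular by
$$W^-(y_1,y_2|u_1) \geq \frac{1}{|\mathcal{X}|} W(y_1|u_1\ast v) W(y_2|v),$$
and similarly for $u_1'$. The first step is simply to record these two inequalities, which hold for the \emph{same} fixed $v$ on both sides.

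Next, since the square root is monotone on nonnegative reals, multiplying the two inequalities and taking the square root gives, for every $y_1,y_2\in\mathcal{Y}$,
$$\sqrt{W^-(y_1,y_2|u_1)\, W^-(y_1,y_2|u_1')} \;\geq\; \frac{1}{|\mathcal{X}|} \sqrt{W(y_1|u_1\ast v)\, W(y_1|u_1'\ast v)} \cdot W(y_2|v),$$
using that $\sqrt{W(y_2|v)^2}=W(y_2|v)$.

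Finally, I would sum over $y_1$ and $y_2$. The right-hand side factorizes because the $y_1$-dependent and $y_2$-dependent parts are decoupled:
$$Z(W^-_{u_1,u_1'}) \;\geq\; \frac{1}{|\mathcal{X}|} \Bigl(\sum_{y_1\in\mathcal{Y}} \sqrt{W(y_1|u_1\ast v)\, W(y_1|u_1'\ast v)}\Bigr) \Bigl(\sum_{y_2\in\mathcal{Y}} W(y_2|v)\Bigr).$$
The first factor is exactly $Z(W_{u_1\ast v,\, u_1'\ast v})$ by definition, and the second factor equals $1$ since $W(\cdot|v)$ is a probability distribution on $\mathcal{Y}$. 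This yields the claimed bound.

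There is no real obstacle here: the argument is essentially a one-line bound followed by a factorization. The only thing to be careful about is keeping the same $v$ on both occurrences of the lower bound so that the $W(y_2|v)$ factor appears squared (and hence survives the square root as a single copy), which is what makes the $y_2$-sum telescope to $1$.
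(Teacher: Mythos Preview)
Your proof is correct and follows essentially the same approach as the paper: both arguments drop all terms except $u_2=u_2'=v$ from the sum defining $W^-$, use monotonicity of the square root, and then factorize the resulting double sum so that $\sum_{y_2}W(y_2|v)=1$. The only cosmetic difference is that the paper expands the product under the square root before bounding, whereas you bound each factor separately first; the content is identical.
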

\begin{proof}
\begin{align*}
Z(W^-_{u_1,u_1'})&=\sum_{y_1,y_2\in\mathcal{Y}}\sqrt{W^-(y_1,y_2|u_1)W^-(y_1,y_2|u_1')}\\
&=\sum_{y_1,y_2\in\mathcal{Y}}\sqrt{\sum_{u_2,u_2'\in\mathcal{X}}\frac{1}{|\mathcal{X}|^2} W(y_1|u_1\ast u_2)W(y_2|u_2)W(y_1|u_1'\ast u_2')W(y_2|u_2')}\\
&\geq \frac{1}{|\mathcal{X}|}\sum_{y_1,y_2\in\mathcal{Y}}\sqrt{W(y_1|u_1\ast v)W(y_2|v)W(y_1|u_1'\ast v)W(y_2|v)}\\
&=\frac{1}{|\mathcal{X}|}\sum_{y_1,y_2\in\mathcal{Y}}W(y_2|v)\sqrt{W(y_1|u_1\ast v)W(y_1|u_1'\ast v)}\\
&= \frac{1}{|\mathcal{X}|}\sum_{y_1\in\mathcal{Y}}\sqrt{W(y_1|u_1\ast v)W(y_1|u_1'\ast v)}= \frac{1}{|\mathcal{X}|} Z(W_{u_1\ast v,u_1'\ast v}).
\end{align*}
\end{proof}

\begin{mylem}
\label{BhatPlus}
For every $u_2,u_2'\in\mathcal{X}$, we have $\displaystyle Z(W^{+}_{u_2,u_2'})=\frac{1}{|\mathcal{X}|}\sum_{u_1\in\mathcal{X}} Z(W_{u_1\ast u_2,u_1\ast u_2'})Z(W_{u_2,u_2'})$.
\end{mylem}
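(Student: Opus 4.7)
The plan is to prove the identity by direct computation, starting from the definition of the Bhattacharyya parameter and the definition of $W^+$. Unlike Lemma \ref{BhatMinus}, where an inequality was needed because of the marginalization over $u_2$ inside the square root in $W^-$, here $W^+$ has $u_1$ as part of the output and $u_2$, $u_2'$ as the two inputs to compare, so the sum over outputs can be separated cleanly and the calculation yields an equality.

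First, I would write
\begin{align*}
Z(W^+_{u_2,u_2'}) = \sum_{y_1,y_2\in\mathcal{Y}}\sum_{u_1\in\mathcal{X}} \sqrt{W^+(y_1,y_2,u_1|u_2)\, W^+(y_1,y_2,u_1|u_2')}.
\end{align*}
Next, I would substitute the definition $W^+(y_1,y_2,u_1|u)=\frac{1}{|\mathcal{X}|}W(y_1|u_1\ast u)W(y_2|u)$ for $u\in\{u_2,u_2'\}$ and pull the factor $\frac{1}{|\mathcal{X}|}$ out of the square root. Because the product inside the square root factors as a product depending on $y_1$ (with $u_1\ast u_2$ versus $u_1\ast u_2'$) times a product depending only on $y_2$ (with $u_2$ versus $u_2'$), the square root splits across the two factors.

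I would then exchange the order of summation so that the sum over $u_1$ is outside, separate the sum over $(y_1,y_2)$ into a product of a sum over $y_1$ and a sum over $y_2$, and recognize
\begin{align*}
\sum_{y_1\in\mathcal{Y}} \sqrt{W(y_1|u_1\ast u_2)\,W(y_1|u_1\ast u_2')} &= Z(W_{u_1\ast u_2,\,u_1\ast u_2'}),\\
\sum_{y_2\in\mathcal{Y}} \sqrt{W(y_2|u_2)\,W(y_2|u_2')} &= Z(W_{u_2,u_2'}).
\end{align*}
Combining these gives the stated formula. There is no real obstacle here: the identity is purely algebraic, and the only reason it is an equality rather than an inequality (as in Lemma \ref{BhatMinus}) is that the summation defining $W^+$ contains a single term per output coordinate, so no Jensen-type slack is introduced when taking the square root.
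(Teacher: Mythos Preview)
Your proposal is correct and follows essentially the same approach as the paper's proof: expand the definition of $Z(W^+_{u_2,u_2'})$ over the output alphabet $\mathcal{Y}\times\mathcal{Y}\times\mathcal{X}$, substitute the definition of $W^+$, pull out the factor $\frac{1}{|\mathcal{X}|}$, split the square root into the $y_1$-part and the $y_2$-part, and separate the sums to recognize $Z(W_{u_1\ast u_2,u_1\ast u_2'})$ and $Z(W_{u_2,u_2'})$.
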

\begin{proof}
\begin{align*}
Z(W^+_{u_2,u_2'})&=\sum_{y_1,y_2\in\mathcal{Y}}\sum_{u_1\in\mathcal{X}} \sqrt{W^+(y_1,y_2,u_1|u_2)W^+(y_1,y_2,u_1|u_2')}\\
&=\sum_{y_1,y_2\in\mathcal{Y}}\sum_{u_1\in\mathcal{X}} \sqrt{\frac{1}{|\mathcal{X}|^2} W(y_1|u_1\ast u_2)W(y_2|u_2)W(y_1|u_1\ast u_2')W(y_2|u_2')}\\
&= \frac{1}{|\mathcal{X}|}\sum_{u_1\in\mathcal{X}} \sum_{y_1,y_2\in\mathcal{Y}}\sqrt{W(y_1|u_1\ast u_2)W(y_1|u_1\ast u_2')}\sqrt{W(y_2|u_2)W(y_2|u_2')}\\
&=\frac{1}{|\mathcal{X}|}\sum_{u_1\in\mathcal{X}} Z(W_{u_1\ast u_2,u_1\ast u_2'})Z(W_{u_2,u_2'}).
\end{align*}
\end{proof}

\begin{mynot}
If $W$ is a channel with input alphabet $\mathcal{X}$. We denote $\displaystyle \max_{\substack{x,x'\in\mathcal{X}\\x\neq x'}} Z(W_{x,x'})$ and $\displaystyle \min_{\substack{x,x'\in\mathcal{X}\\x\neq x'}} Z(W_{x,x'})$ by $Z_{\max}(W)$ and $Z_{\min}(W)$ respectively. Note that we can also express $Z_{\min}(W)$ as $\displaystyle \min_{\substack{x,x'\in\mathcal{X}}} Z(W_{x,x'})$ since $Z_{\min}(W)\leq 1$ and $Z_{x,x}(W)=1$ for every $x\in\mathcal{X}$.
\end{mynot}

\begin{myprop}
\label{propbadpol}
Let $\ast$ be a polarizing operation on $\mathcal{X}$, where $|\mathcal{X}|\geq 2$. If for every $u_2,u_2'\in\mathcal{X}$ there exists $u_1\in\mathcal{X}$ such that $u_1\ast u_2=u_1\ast u_2'$, then $E_\ast=0$.
\end{myprop}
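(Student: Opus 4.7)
The plan is to prove a deterministic singly-exponential lower bound $Z_{\min}(W^{s})\ge Z_{\min}(W)|\mathcal{X}|^{-n}$ valid on every sample path $s\in\{-,+\}^{n}$, transfer it via Proposition \ref{propBhat} into a lower bound on the error probability of any admissible random code, and observe that this singly-exponential floor on $\mathbb{P}_e$ is incompatible with the doubly-exponential decay $2^{-2^{\beta n}}$ demanded by any positive $\ast$-achievable exponent.

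For the single-step recursion $Z_{\min}(W^{\pm})\ge Z_{\min}(W)/|\mathcal{X}|$, I argue as follows. Given $u_2\ne u_2'$, the hypothesis supplies some $u_1^{\ast}$ with $u_1^{\ast}\ast u_2=u_1^{\ast}\ast u_2'$, contributing a summand $Z(W_{u_1^{\ast}\ast u_2,u_1^{\ast}\ast u_2'})=1$ to the sum appearing in Lemma \ref{BhatPlus}; since all other summands are nonnegative, $Z(W^{+}_{u_2,u_2'})\ge Z(W_{u_2,u_2'})/|\mathcal{X}|\ge Z_{\min}(W)/|\mathcal{X}|$. For the minus transform I apply Lemma \ref{BhatMinus} with any fixed $v\in\mathcal{X}$ to get $Z(W^{-}_{u_1,u_1'})\ge Z(W_{u_1\ast v,u_1'\ast v})/|\mathcal{X}|\ge Z_{\min}(W)/|\mathcal{X}|$, the last step because every Bhattacharyya value $Z(W_{a,b})$ (including the trivial case $a=b$, where it equals $1$) lies in $[Z_{\min}(W),1]$. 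Iterating over $n$ steps yields the claimed deterministic bound for every $s$.

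Next, fix a channel $W$ with all transitions strictly positive, so that $Z_{\min}(W)>0$ and $I(W)>0$. For any admissible random code $\mathcal{B}$ with $L\ge 2$, bijectivity of each $f_C$ preserves distinctness of inputs, so for every $a\ne a'$,
\[
Z\!\bigl(((W^{s})_{\mathcal{B}})_{a,a'}\bigr)=\sum_{C}\mathbb{P}_{\mathcal{B}}(C)\,Z\!\bigl((W^{s})_{f_C(a),f_C(a')}\bigr)\ge Z_{\min}(W^{s}).
\]
Averaging over pairs gives $Z((W^{s})_{\mathcal{B}})\ge Z_{\min}(W^{s})$, and Proposition \ref{propBhat}(3) yields $\mathbb{P}_e((W^{s})_{\mathcal{B}})\ge Z_{\min}(W^{s})^2/4\ge Z_{\min}(W)^2/(4|\mathcal{X}|^{2n})$. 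For any fixed $\beta>0$, this quantity exceeds $2^{-2^{\beta n}}$ for all $n$ sufficiently large.

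The only escape is the degenerate code $L=1$, which is admissible only when $I(W_n)<\delta$. The conservation property of Definition \ref{defPola} makes $(I(W_n))$ a bounded martingale, so it converges a.s.\ to a limit $I_\infty$ with $\mathbb{E}[I_\infty]=I(W)>0$; by Corollary \ref{cor1} its support is contained in $\{0\}\cup[\log 2,\log|\mathcal{X}|]$, so $\mathbb{P}[I_\infty\ge\log 2]>0$. Choosing $\delta<\log 2$, on this positive-probability event one has $I(W_n)>\delta$ for all sufficiently large $n$, ruling out $L=1$ and forcing the preceding Bhattacharyya bound; hence $W_n$ fails to be $(\delta,2^{-2^{\beta n}})$-easy for all such $n$. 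Consequently
\[
\liminf_{n\to\infty}\mathbb{P}\!\bigl[W_n\text{ is not }(\delta,2^{-2^{\beta n}})\text{-easy}\bigr]\;\ge\;\mathbb{P}[I_\infty\ge\log 2]\;>\;0,
\]
so no $\beta>0$ is $\ast$-achievable and $E_\ast=0$. The delicate step is the last one: isolating a positive-probability subset of sample paths on which the mutual information stays bounded away from zero, so that the non-trivial coding regime is in effect and the polynomial-in-$|\mathcal{X}|^{n}$ error lower bound actually bites.
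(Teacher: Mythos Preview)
Your proof is correct, with one cosmetic slip: ``all transitions strictly positive'' by itself does not force $I(W)>0$ (the uniform channel has all entries positive and zero mutual information). You need to add a word saying the channel is non-trivial, e.g.\ pick $W(y|x)=(1-\epsilon)\mathds{1}_{y=x}+\epsilon/|\mathcal{X}|$ for small $\epsilon>0$, which has both $Z_{\min}(W)>0$ and $I(W)>0$. Otherwise every step---the single-step recursion for $Z_{\min}$ via Lemmas~\ref{BhatMinus} and~\ref{BhatPlus}, the computation $Z\bigl((W^{s}_{\mathcal{B}})_{a,a'}\bigr)=\sum_C\mathbb{P}_{\mathcal{B}}(C)\,Z\bigl((W^{s})_{f_C(a),f_C(a')}\bigr)$, the transfer to $\mathbb{P}_e$ via Proposition~\ref{propBhat}(3), and the Fatou argument on the event $\{I_\infty\ge\log 2\}$---checks out.

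Your route differs from the paper's in two places. First, the paper tracks $Z_{\max}$ rather than $Z_{\min}$; both obey the same $1/|\mathcal{X}|$ single-step contraction under the hypothesis, so this is a wash. Second, and more substantively, the paper avoids dealing with a general random code $\mathcal{B}$: it chooses $W$ with $I(W)>\log|\mathcal{X}|-\delta$ for $\delta<\tfrac13\log\tfrac{|\mathcal{X}|}{|\mathcal{X}|-1}$, then argues via conservation that $\mathbb{P}[I(W_n)>\log|\mathcal{X}|-2\delta]>\tfrac12$, which forces $L=|\mathcal{X}|$ and hence $\mathcal{B}=\mathcal{X}$ deterministically, so $W^{s}_{\mathcal{B}}$ is literally $W^{s}$ and one can read off $Z$ directly. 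Your approach instead handles \emph{every} admissible $\mathcal{B}$ with $L\ge 2$ uniformly by the averaging identity, and then rules out $L=1$ via Corollary~\ref{cor1} and martingale convergence. The paper's argument is slightly more elementary (it only uses that $\ast$ is polarizing, not the finer structural Corollary~\ref{cor1}), while yours is more robust in that it does not require engineering the channel to pin $L$ to its maximal value and would adapt more easily to situations where no single value of $L$ can be forced.
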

\begin{proof}
Let $\beta>0$ and $0<\beta'<\beta$. Clearly, $\displaystyle\frac{1}{4}\left(2^{-2^{\beta'n}}\right)^2>2^{-2^{\beta n}}$ for $n$ large enough. We have:
\begin{itemize}
\item For every $u_2,u_2'\in\mathcal{X}$ satisfying $u_2\neq u_2'$, let $u_1\in\mathcal{X}$ be such that $u_1\ast u_2=u_1\ast u_2'$. Lemma \ref{BhatPlus} implies that $\displaystyle Z(W^{+}_{u_2,u_2'})\geq \frac{1}{|\mathcal{X}|}Z(W_{u_1\ast u_2,u_1\ast u_2'})Z(W_{u_2,u_2'})=\frac{1}{|\mathcal{X}|}Z(W_{u_2,u_2'})$ since $Z(W_{u_1\ast u_2,u_1\ast u_2'})=1$. Therefore, $\displaystyle Z_{\max}(W^+)=\max_{\substack{x,x'\in\mathcal{X}\\x\neq x'}} Z(W^{+}_{x,x'})\geq \frac{1}{|\mathcal{X}|} \max_{\substack{x,x'\in\mathcal{X}\\x\neq x'}} Z(W_{x,x'})=\frac{1}{|\mathcal{X}|} Z_{\max}(W)$.
\item By fixing $v\in\mathcal{X}$, Lemma \ref{BhatMinus} implies that $$Z_{\max}(W^-)= \max_{\substack{x,x'\in\mathcal{X}\\x\neq x'}} Z(W^{-}_{x,x'})\geq \frac{1}{|\mathcal{X}|} \max_{\substack{x,x'\in\mathcal{X}\\x\neq x'}} Z(W_{x\ast v,x'\ast v})\stackrel{(a)}{=} \frac{1}{|\mathcal{X}|} \max_{\substack{x,x'\in\mathcal{X}\\x\neq x'}} Z(W_{x,x'})=\frac{1}{|\mathcal{X}|} Z_{\max}(W),$$ where (a) follows from the fact that $\ast$ is uniformity preserving, which implies that $$\{(x\ast v,x'\ast v):\;x,x'\in\mathcal{X},\;x\neq x'\}=\{(x,x'):\;x,x'\in\mathcal{X},\;x\neq x'\}.$$
\end{itemize}
By induction on $n>0$, we conclude that for every $s\in\{-,+\}^n$ we have:
$$\displaystyle Z_{\max}(W^s)\geq \frac{1}{|\mathcal{X}|^n}Z_{\max}(W)=\frac{1}{2^{n\log_2|\mathcal{X}|}}Z_{\max}(W).$$ 

If $Z(W)>0$ we have $Z_{\max}(W)>0$, and $$\displaystyle Z(W^s)\geq \frac{1}{|\mathcal{X}|(|\mathcal{X}|-1)}Z_{\max}(W^s)\geq \frac{Z_{\max}(W)}{|\mathcal{X}|(|\mathcal{X}|-1)\cdot (2^{n})^{\log_2|\mathcal{X}|}},$$ which means that the decay of $Z(W^s)$ in terms of the blocklength $2^n$ can be at best polynomial. Therefore, for $n$ large enough we have $Z(W^s)>2^{-2^{\beta' n}}$ for every $s\in\{-,+\}^n$.

Now let $\delta=\frac{1}{3}\log|\mathcal{X}|-\frac{1}{3}\log(|\mathcal{X}|-1)>0$ and let $W$ be any channel satisfying $\log|\mathcal{X}|-\delta<I(W)<\log|\mathcal{X}|$ (we can easily construct such a channel). Since $I(W)<\log|\mathcal{X}|$, Proposition \ref{propBhat} implies that we have $Z(W)>0$. Let $W_n$ be the process introduced in Definition \ref{def1}. Since $\ast$ is polarizing, we have $\mathbb{P}[W_n\;\text{is}\;\delta\text{-easy}]>\frac{3}{4}$ (i.e., $\frac{1}{2^n}|\{s\in\{-,+\}^n:\; W^s\;\text{is}\;\delta\text{-easy}\}|>\frac{3}{4}$) for $n$ large enough. On the other hand, since $\ast$ satisfies the conservation property, we have $\displaystyle \mathbb{E}[I(W_n)]=\frac{1}{2^n}\sum_{s\in\{-,+\}^n}I(W^s)=\displaystyle I(W)>\log|\mathcal{X}|-\delta$. Therefore, we must have $\mathbb{P}\big[I(W_n)>\log|\mathcal{X}|-2\delta\big]>\frac{1}{2}$ and so for $n$ large enough, we have $$\mathbb{P}\big[I(W_n)>\log|\mathcal{X}|-2\delta\;\text{and}\;W_n\;\text{is}\;\delta\text{-easy}\big]>\frac{1}{4}.$$
Now suppose $s\in\{-,+\}^n$ is such that $W^s$ is $\delta$-easy and $I(W^s)>\log|\mathcal{X}|-2\delta$, and let $L$ and $\mathcal{B}$ be as in Definition \ref{defeasy}. We have $I(W^s)-\log(|\mathcal{X}|-1)>3\delta-2\delta=\delta$ and so the only possible value for $L$ is $|\mathcal{X}|$. But since the only subset of $\mathcal{X}$ of size $|\mathcal{X}|$ is $\mathcal{X}$, we have $\mathcal{B}=\mathcal{X}$ with probability 1. Therefore, $W^s_{\mathcal{B}}$ is equivalent to $W^s$ which means that $Z(W^s_{\mathcal{B}})=Z(W^s)>2^{-2^{\beta' n}}$. Now Proposition \ref{propBhat} implies that $\mathbb{P}_e(W^s_{\mathcal{B}})>\frac{1}{4}\left(2^{-2^{\beta'n}}\right)^2>2^{-2^{\beta n}}$ and so $W^s$ is not $(\delta,2^{-\beta n})$-easy. Thus, $\mathbb{P}\big[W_n\;\text{is}\;(\delta,2^{-2^{\beta n}})\text{-easy}\big]<\frac{3}{4}$ for $n$ large enough.

We conclude that no exponent $\beta>0$ is $\ast$-achievable. Therefore, $E_\ast=0$.
\end{proof}

\begin{myrem}
Consider the following uniformity preserving operation:
\begin{center}
  \begin{tabular}{ | c || c | c | c | c | }
    \hline
    $\ast$ & 0 & 1 & 2 & 3 \\ \hline \hline
    0 & 3 & 3 & 3 & 3 \\ \hline
    1 & 0 & 1 & 0 & 0 \\ \hline
    2 & 1 & 0 & 1 & 1 \\ \hline
    3 & 2 & 2 & 2 & 2 \\ \hline
  \end{tabular}
\end{center}
It is easy to see that $/^{\ast}$ is strongly ergodic on so $\ast$ is polarizing. Moreover, $\ast$ satisfies the property of Proposition \ref{propbadpol}, hence it has a zero exponent. This shows that the exponent of a polarizing operation can be as low as 0.
\end{myrem}

The following lemma will be used to show that $E_{\ast}\leq \frac{1}{2}$ for every polarizing operation $\ast$.

\begin{mylem}
\label{lemsss}
Let $\ast$ be a uniformity preserving operation on $\mathcal{X}$ and let $W$ be a channel with input alphabet $\mathcal{X}$. For every $n>0$ and every $s\in\{-,+\}^n$, we have $\displaystyle Z_{\min}(W^s)\geq \left(\frac{Z_{\min}(W)}{|\mathcal{X}|}\right)^{(|s|^-+1)2^{|s|^+}}$, where $|s|^-$ (resp. $|s|^+$) is the number of $-$ signs (resp. $+$ signs) in the sequence $s$.
\end{mylem}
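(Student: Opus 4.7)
The plan is to prove the bound by induction on $n$, after first establishing two clean one-step inequalities derived from Lemmas \ref{BhatMinus} and \ref{BhatPlus}.

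For the $``-"$ step, fix an arbitrary $v\in\mathcal{X}$. Lemma \ref{BhatMinus} gives $Z(W^-_{u_1,u_1'})\geq Z(W_{u_1\ast v,u_1'\ast v})/|\mathcal{X}|$ for every $u_1,u_1'\in\mathcal{X}$. Since $\ast$ is uniformity preserving, the map $u\mapsto u\ast v$ is a bijection of $\mathcal{X}$, so $u_1\neq u_1'$ implies $u_1\ast v\neq u_1'\ast v$, and therefore $Z(W_{u_1\ast v,u_1'\ast v})\geq Z_{\min}(W)$. Taking the minimum over $u_1\neq u_1'$ yields $Z_{\min}(W^-)\geq Z_{\min}(W)/|\mathcal{X}|$. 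For the $``+"$ step, Lemma \ref{BhatPlus} expresses $Z(W^+_{u_2,u_2'})$ as $\frac{1}{|\mathcal{X}|}Z(W_{u_2,u_2'})$ times $\sum_{u_1}Z(W_{u_1\ast u_2,u_1\ast u_2'})$. Each summand is at least $Z_{\min}(W)$ (equal to $1\geq Z_{\min}(W)$ when $u_1\ast u_2=u_1\ast u_2'$, and at least $Z_{\min}(W)$ otherwise), so the sum is at least $|\mathcal{X}|\,Z_{\min}(W)$. Hence for $u_2\neq u_2'$ we get $Z(W^+_{u_2,u_2'})\geq Z(W_{u_2,u_2'})\,Z_{\min}(W)\geq Z_{\min}(W)^2$, so $Z_{\min}(W^+)\geq Z_{\min}(W)^2$.

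Now proceed by induction on $n$. Write $s=(s',s_n)$ with $s'\in\{-,+\}^{n-1}$, so that $W^s=(W^{s'})^{s_n}$. If $s_n=+$, then $|s|^-=|s'|^-$ and $|s|^+=|s'|^++1$, and squaring the induction hypothesis gives $Z_{\min}(W^s)\geq Z_{\min}(W^{s'})^2\geq \bigl(Z_{\min}(W)/|\mathcal{X}|\bigr)^{(|s'|^-+1)\cdot 2^{|s'|^++1}}$, which matches the target exponent exactly. If $s_n=-$, then $|s|^-=|s'|^-+1$ and $|s|^+=|s'|^+$, and the sub-claim together with the induction hypothesis gives $Z_{\min}(W^s)\geq \bigl(Z_{\min}(W)/|\mathcal{X}|\bigr)^{(|s'|^-+1)\cdot 2^{|s'|^+}}\big/|\mathcal{X}|$. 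The target bound has exponent $(|s'|^-+2)\cdot 2^{|s'|^+}$, so it suffices to check the elementary inequality $\bigl(Z_{\min}(W)/|\mathcal{X}|\bigr)^{2^{|s'|^+}}\leq 1/|\mathcal{X}|$. This holds because $Z_{\min}(W)/|\mathcal{X}|\leq 1/|\mathcal{X}|\leq 1$ and $2^{|s'|^+}\geq 1$, so raising a number in $[0,1/|\mathcal{X}|]$ to a power at least $1$ keeps it in $[0,1/|\mathcal{X}|]$.

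The base case $n=1$ is recovered directly from the two sub-claims, and there is no substantive obstacle beyond the bookkeeping in the $``-"$ step: the loss of a factor $1/|\mathcal{X}|$ is absorbed into an exponent bump of $2^{|s'|^+}$, which is precisely what the elementary inequality above delivers. This is also the reason the exponent in the lemma is linear in $|s|^-$ but doubles with each $|s|^+$: the $``+"$ step squares $Z_{\min}$ (multiplicative in the exponent), while the $``-"$ step only adds a $1/|\mathcal{X}|$ factor that gets charged off against the slack $Z_{\min}(W)\leq 1$.
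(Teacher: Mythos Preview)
Your proof is correct and follows essentially the same approach as the paper: both establish the one-step bounds $Z_{\min}(W^-)\geq Z_{\min}(W)/|\mathcal{X}|$ and $Z_{\min}(W^+)\geq Z_{\min}(W)^2$ from Lemmas~\ref{BhatMinus} and~\ref{BhatPlus}, then induct on $n$, squaring the hypothesis for the $+$ step and absorbing the stray $1/|\mathcal{X}|$ factor into the exponent for the $-$ step. Your elementary inequality $(Z_{\min}(W)/|\mathcal{X}|)^{2^{|s'|^+}}\leq 1/|\mathcal{X}|$ is exactly the content of the paper's chain $\frac{1}{|\mathcal{X}|}\,r^{(|s'|^-+1)2^{|s'|^+}}\geq r^{1+(|s'|^-+1)2^{|s'|^+}}\geq r^{(|s'|^-+2)2^{|s'|^+}}$ with $r=Z_{\min}(W)/|\mathcal{X}|$, just compressed into one step.
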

\begin{proof}
We will prove the lemma by induction on $n>0$. If $n=1$, then either $s=-$ or $s=+$. If $s=-$, let $v\in\mathcal{X}$. We have:
\begin{equation}
Z_{\min}(W^s)=Z_{\min}(W^-)=\min_{u_1,u_1'\in\mathcal{X}}Z(W^-_{u_1,u_1'})\stackrel{(a)}{\geq} \min_{u_1,u_1'\in\mathcal{X}} \frac{1}{|\mathcal{X}|}Z(W_{u_1\ast v,u_1'\ast v})\stackrel{(b)}{\geq}\left(\frac{Z_{\min}(W)}{|\mathcal{X}|}\right)^{(|s|^-+1)2^{|s|^+}},\label{pouier}
\end{equation}
where (a) follows from Lemma \ref{BhatMinus} and (b) follows from the fact that $(|s|^-+1)2^{|s|^+}=2$ since $|s|^-=1$ and $|s|^+=0$ when $s=-$.

If $s=+$, we have:
\begin{align}
Z_{\min}(W^s)=Z_{\min}(W^+)=\min_{u_2,u_2'\in\mathcal{X}}Z(W^+_{u_2,u_2'})&\stackrel{(a)}{\geq} \min_{u_2,u_2'\in\mathcal{X}} \frac{1}{|\mathcal{X}|}\sum_{u_1\in\mathcal{X}} Z(W_{u_1\ast u_2,u_1\ast u_2'})Z(W_{u_2,u_2'})\nonumber\\
&\geq Z_{\min}(W)^2 \stackrel{(b)}{\geq} \left(\frac{Z_{\min}(W)}{|\mathcal{X}|}\right)^{(|s|^-+1)2^{|s|^+}}, \label{pouier1}
\end{align}
where (a) follows from Lemma \ref{BhatPlus} and (b) follows from the fact that $(|s|^-+1)2^{|s|^+}=2$ since $|s|^-=0$ and $|s|^+=1$ when $s=+$.
Therefore, the lemma is true for $n=1$. Now let $n>1$ and suppose that it is true for $n-1$. Let $s=(s',s_n)\in\{-,+\}^n$, where $s'\in\{-,+\}^{n-1}$ and $s_n\in\{-,+\}$. From the induction hypothesis, we have $\displaystyle Z_{\min}(W^{s'})\geq \left(\frac{Z_{\min}(W)}{|\mathcal{X}|}\right)^{(|s'|^-+1)2^{|s'|^+}}$. 

If $s_n=-$, we can apply \eqref{pouier} on $W^{s'}$ to get:
\begin{align*}
Z_{\min}(W^s)&\geq\frac{1}{|\mathcal{X}|}Z_{\min}(W^{s'})\geq \frac{1}{|\mathcal{X}|} \left(\frac{Z_{\min}(W)}{|\mathcal{X}|}\right)^{(|s'|^-+1)2^{|s'|^+}}\geq \left(\frac{Z_{\min}(W)}{|\mathcal{X}|}\right)^{1+(|s'|^-+1)2^{|s'|^+}}\\
&\geq \left(\frac{Z_{\min}(W)}{|\mathcal{X}|}\right)^{(|s'|^-+2)2^{|s'|^+}}=\left(\frac{Z_{\min}(W)}{|\mathcal{X}|}\right)^{(|s|^-+1)2^{|s|^+}}.
\end{align*}

If $s_n=+$, we can apply \eqref{pouier1} on $W^{s'}$ to get:
\begin{align*}
Z_{\min}(W^s)&\geq Z_{\min}(W^{s'})^2\geq \left(\left(\frac{Z_{\min}(W)}{|\mathcal{X}|}\right)^{(|s'|^-+1)2^{|s'|^+}}\right)^2= \left(\frac{Z_{\min}(W)}{|\mathcal{X}|}\right)^{2(|s'|^-+1)2^{|s'|^+}}\\
&= \left(\frac{Z_{\min}(W)}{|\mathcal{X}|}\right)^{(|s'|^-+1)2^{|s'|^++1}}=\left(\frac{Z_{\min}(W)}{|\mathcal{X}|}\right)^{(|s|^-+1)2^{|s|^+}}.
\end{align*}
We conclude that the lemma is true for every $n>0$.
\end{proof}

\begin{myprop}
\label{propexponentpol}
If $\ast$ is polarizing, then $E_\ast\leq \frac{1}{2}$.
\end{myprop}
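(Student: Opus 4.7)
The plan is to exhibit, for every $\beta > 1/2$, a single channel $W$ and a set of synthesized channels $W^s$ of asymptotically positive probability that are not $(\delta, 2^{-2^{\beta n}})$-easy; this will show $\beta$ is not $\ast$-achievable and therefore $E_\ast \leq 1/2$. The mechanism is that Lemma~\ref{lemsss} bounds $-\log Z_{\min}(W^s)$ by roughly $(|s|^- + 1)\, 2^{|s|^+} \log(|\mathcal{X}|/Z_{\min}(W))$, and by the weak law of large numbers $|s|^+/n \to 1/2$, so $Z_{\min}(W^s)$ cannot decay faster than $2^{-2^{(1/2+o(1))n}}$ on a set of $s$ of probability tending to $1$. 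This decay rate will then be incompatible with $(\delta, 2^{-2^{\beta n}})$-easiness via the Bhattacharyya-based lower bound on $\mathbb{P}_e$ from Proposition~\ref{propBhat}.

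Concretely, I will fix $\beta > 1/2$, choose $\epsilon > 0$ with $1/2 + 2\epsilon < \beta$, and pick any channel $W$ with $I(W) > 0$ and $Z_{\min}(W) > 0$---for instance the $|\mathcal{X}|$-ary symmetric channel with a small positive crossover probability, whose transition probabilities are all strictly positive and hence whose pairwise Bhattacharyya parameters are all strictly positive. Choose $\delta > 0$ with $3\delta < I(W)$. Let $A_n = \{s : |s|^+ \leq (1/2+\epsilon)n\}$, whose probability tends to $1$ by the law of large numbers, and $B_n = \{s : I(W^s) > \delta\}$. The conservation property gives $\mathbb{E}[I(W_n)] = I(W)$, and combining with $I(W_n) \leq \log|\mathcal{X}|$ yields $\mathbb{P}[B_n] \geq (I(W)-\delta)/\log|\mathcal{X}| \geq 2\delta/\log|\mathcal{X}| > 0$ uniformly in $n$. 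Applying Lemma~\ref{lemsss} on $A_n$ and absorbing the polynomial factor $(n+1)\log(|\mathcal{X}|/Z_{\min}(W))$ into $2^{\epsilon n}$ for $n$ large yields $Z_{\min}(W^s) \geq 2^{-2^{(1/2+2\epsilon)n}}$ for every $s \in A_n$.

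Finally, for $s \in A_n \cap B_n$, suppose for contradiction that $W^s$ is $(\delta, 2^{-2^{\beta n}})$-easy via a random code $\mathcal{B}$ of size $L$. Since $I(W^s) > \delta$ together with $|I(W^s) - \log L| < \delta$ forces $L \geq 2$, and since each $f_C$ is a bijection, a direct calculation from Definition~\ref{defeasy} gives
\[ Z\bigl((W^s_\mathcal{B})_{a,a'}\bigr) = \sum_{C} \mathbb{P}_\mathcal{B}(C)\, Z(W^s_{f_C(a), f_C(a')}) \geq Z_{\min}(W^s) \]
for all $a \neq a'$, whence $Z(W^s_\mathcal{B}) \geq Z_{\min}(W^s)$. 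Proposition~\ref{propBhat} then gives $\mathbb{P}_e(W^s_\mathcal{B}) \geq \tfrac{1}{4} Z_{\min}(W^s)^2 \geq 2^{-2 - 2^{(1/2+2\epsilon)n+1}}$, which exceeds $2^{-2^{\beta n}}$ for $n$ large since $\beta > 1/2 + 2\epsilon$. This contradicts easiness, so every $s \in A_n \cap B_n$ fails the $(\delta, 2^{-2^{\beta n}})$-easiness condition, and $\mathbb{P}[A_n \cap B_n] \geq \mathbb{P}[B_n] - \mathbb{P}[A_n^c] > 0$ for all large $n$. The main delicacy is the transfer $Z_{\min}(W) \to Z_{\min}(W^s) \to Z(W^s_\mathcal{B})$ through the bijective relabelings $f_C$; once that is in place, the $1/2$ threshold follows immediately from the LLN applied to the exponent $(|s|^-+1)\,2^{|s|^+}$ in Lemma~\ref{lemsss} combined with the Bhattacharyya inequalities of Proposition~\ref{propBhat}.
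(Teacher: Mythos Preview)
Your proof is correct and follows the same overall strategy as the paper's: combine Lemma~\ref{lemsss} with the law of large numbers for $|s|^+$ to show $Z_{\min}(W^s)$ cannot decay faster than $2^{-2^{(1/2+o(1))n}}$, and then invoke the lower bound $\mathbb{P}_e \geq \tfrac14 Z^2$ from Proposition~\ref{propBhat}. The one genuine difference is in how you pass from $Z_{\min}(W^s)$ to $\mathbb{P}_e(W^s_{\mathcal{B}})$. The paper sidesteps the random code by choosing $W$ with $I(W)$ so close to $\log|\mathcal{X}|$ that on the relevant set of $s$'s one is forced to have $L=|\mathcal{X}|$, $\mathcal{B}=\mathcal{X}$, and hence $W^s_{\mathcal{B}}\equiv W^s$. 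You instead allow any $W$ with $I(W)>0$ and $Z_{\min}(W)>0$, force only $L\geq 2$, and observe directly from Definition~\ref{defeasy} that
\[
Z\big((W^s_{\mathcal{B}})_{a,a'}\big)=\sum_{C}\mathbb{P}_{\mathcal{B}}(C)\,Z\big(W^s_{f_C(a),f_C(a')}\big)\;\geq\; Z_{\min}(W^s)
\]
since each $f_C$ is a bijection and $a\neq a'$. This is a cleaner and more robust handling of the random-code step; it avoids the paper's somewhat ad hoc choice of $\delta=\tfrac13\log|\mathcal{X}|-\tfrac13\log(|\mathcal{X}|-1)$ and its reliance on the polarization property (the set $D_n$ in the paper's proof) to control the fraction of relevant $s$'s. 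The paper's route, on the other hand, requires no computation with $W^s_{\mathcal{B}}$ at all once $\mathcal{B}$ is trivialized.
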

\begin{proof}
Let $\beta>\frac{1}{2}$, and let $\frac{1}{2}< \beta'< \beta$.
Let $\epsilon>0$ be such that $(1-\epsilon)\log|\mathcal{X}|>\log|\mathcal{X}|-\delta$, where $\delta=\frac{1}{3}|\mathcal{X}|-\frac{1}{3}(|\mathcal{X}|-1)$. Let $e\notin \mathcal{X}$ and consider the channel $W:\mathcal{X}\longrightarrow\mathcal{X}\cup\{e\}$ defined as follows:
$$W(y|x)=\begin{cases}1-\epsilon\;&\text{if\;}y=x,\\\epsilon\;
&\text{if\;}y=e,\\0\;&\text{otherwise}.\end{cases}$$
We have $I(W)=(1-\epsilon)\log|\mathcal{X}|>\log|\mathcal{X}|-\delta$ and $Z(W_{x,x'})=\epsilon$ for every $x,x'\in\mathcal{X}$ such that $x\neq x'$, and thus $Z_{\min}(W)=\epsilon$. We have the following:
\begin{itemize}
\item Since $\beta'>\frac{1}{2}$, the law of large numbers implies that $\displaystyle\frac{1}{2^n}\left|\left\{s\in\{-,+\}^n:|s|^+\leq \beta'n\right\}\right|$ converges to 1 as $n$ goes to infinity. Therefore, for $n$ large enough, we have $\displaystyle\frac{1}{2^n}|B_n|>\frac{7}{8}$ where $$B_n=\left\{s\in\{-,+\}^n:|s|^+\leq \beta'n\right\}.$$
\item Since $\displaystyle\sum_{s\in\{-,+\}^n}I(W^s)=2^n I(W)>2^n(\log|\mathcal{X}|-\delta)$, we must have $\displaystyle\frac{1}{2^n}|C_n|>\frac{1}{2}$ where $$C_n=\big\{s\in\{-,+\}^n:I(W^s)>\log|\mathcal{X}|-2\delta\big\}.$$
\item Since $\ast$ is polarizing, we have $\displaystyle\frac{1}{2^n}|D_n|>\frac{7}{8}$ for $n$ large enough, where $$D_n=\big\{s\in\{-,+\}^n:\;W^s\;\text{is}\;\delta\text{-easy}\big\}.$$
\end{itemize}
We conclude that for $n$ large enough, we have $\displaystyle\frac{1}{2^n}|A_n|>\frac{1}{4}$, where
$$A_n=B_n\cap C_n\cap D_n=\big\{s\in\{-,+\}^n:\; |s|^+\leq \beta'n,\;W^s\;\text{is}\;\delta\text{-easy\;and\;}I(W^s)>\log|\mathcal{X}|-2\delta\big\}.$$
Now let $s\in A_n$. Let $L$ and $\mathcal{B}$ be as in Definition \ref{defeasy}. We have $I(W^s)-\log(|\mathcal{X}|-1)>3\delta-2\delta=\delta$ and so the only possible value for $L$ is $|\mathcal{X}|$, and since the only subset of $\mathcal{X}$ of size $|\mathcal{X}|$ is $\mathcal{X}$, we have $\mathcal{B}=\mathcal{X}$ with probability 1. Therefore, $W^s_{\mathcal{B}}$ is equivalent to $W^s$. Thus,
\begin{align*}
Z(W^s_{\mathcal{B}})=Z(W^s)\geq Z_{\min}(W^s)\stackrel{(a)}{\geq} \left(\frac{Z_{\min}(W)}{|\mathcal{X}|}\right)^{(|s|^-+1)2^{|s|^+}}\stackrel{(b)}{\geq} \left(\frac{\epsilon}{|\mathcal{X}|}\right)^{(n+1)2^{\beta'n}},
\end{align*}
where (a) follows from Lemma \ref{lemsss} and (b) follows from the fact that $|s|^-\leq n$ and $|s|^+\leq\beta'n$ for $s\in A_n$, and from the fact that $Z_{\min}(W)=\epsilon$ which was proved earlier. Now Proposition \ref{propBhat} implies that $\displaystyle \mathbb{P}_e(W^s_{\mathcal{B}})\geq\frac{1}{4}\left(\frac{\epsilon}{|\mathcal{X}|}\right)^{2(n+1)2^{\beta'n}}$. On the other hand, since $\beta'<\beta$, we have $\displaystyle \frac{1}{4}\left(\frac{\epsilon}{|\mathcal{X}|}\right)^{2(n+1)2^{\beta'n}}>2^{-2^{\beta n}}$ for $n$ large enough. Therefore, $W^s$ is not $(\delta,2^{-\beta n})$-easy if $s\in A_n$ and $n$ is large enough. Let $W_n$ be the process introduced in Definition \ref{def1}. For $n$ large enough, we have $$\displaystyle \mathbb{P}\big[W_n\;\text{is}\;(\delta,2^{-2^{\beta n}})\text{-easy}\big]\leq 1-\frac{1}{2^n}|A_n|<1-\frac{1}{4}=\frac{3}{4}.$$ 

We conclude that every exponent $\beta>\frac{1}{2}$ is not $\ast$-achievable. Therefore, $E_\ast\leq \frac{1}{2}$.
\end{proof}

\begin{mycor}
\label{corquasi}
If $\ast$ is a quasigroup operation, then $E_\ast = \frac{1}{2}$. 
\end{mycor}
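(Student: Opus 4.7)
The plan is to split the equality $E_\ast = \frac{1}{2}$ into the two inequalities $E_\ast \leq \frac{1}{2}$ and $E_\ast \geq \frac{1}{2}$, and to handle each using results already available.

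First, I would verify that every quasigroup operation is polarizing, so that Proposition \ref{propexponentpol} can be invoked. Uniformity preservation is immediate: in a quasigroup, for any fixed $a$ the maps $x \mapsto x \ast a$ and $x \mapsto a \ast x$ are bijections, so $U_1 \ast U_2$ is uniform whenever $U_1,U_2$ are independent and uniform. For the strong ergodicity of $/^{\ast}$ I would appeal to the structural theory of Part I \cite{RajErgI}: since the inverse of a quasigroup is itself a quasigroup, and in a quasigroup $|A \ast B| \geq \max\{|A|,|B|\}$ for any nonempty $A,B \subseteq \mathcal{X}$, the first residue $\mathcal{K}_{\mathcal{H}}$ of any stable partition $\mathcal{H}$ under $/^{\ast}$ must satisfy $||\mathcal{K}_{\mathcal{H}}|| = ||\mathcal{H}||$, and therefore $\mathcal{K}_{\mathcal{H}} = \mathcal{H}$. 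This is precisely the condition in Theorem \ref{thestrong} of Part I that characterizes strong ergodicity. With uniformity preservation and strong ergodicity of $/^{\ast}$ in hand, Theorem \ref{thecaraccharac} yields that $\ast$ is polarizing, and Proposition \ref{propexponentpol} then delivers $E_\ast \leq \frac{1}{2}$.

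For the lower bound $E_\ast \geq \frac{1}{2}$, I would invoke the Bhattacharyya-based exponent analysis for quasigroup operations carried out in \cite{RajTel}. The two key estimates there are the quasigroup analogues of the Ar{\i}kan--Telatar bounds, namely $Z(W^{-}) \leq |\mathcal{X}| \cdot Z(W)$ and $Z(W^{+}) \leq c_{|\mathcal{X}|} \cdot Z(W)^2$ for a suitable constant, which, combined with the standard martingale argument on $(Z(W_n))_n$, yield for every $\beta < \frac{1}{2}$ that a fraction $\to I(W)/\log|\mathcal{X}|$ of the indices $s \in \{-,+\}^n$ satisfy $Z(W^s) \leq 2^{-2^{\beta n}}$. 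Converting this Bhattacharyya decay into $(\delta, 2^{-2^{\beta n}})$-easiness is the main step I would need to spell out: for $s$ in the polarized set produced by Corollary \ref{cor1}, pick the stable partition $\mathcal{H}_s$ and take the random code $\mathcal{B}$ as in the proof of Lemma \ref{lemeasy} (a system of representatives, one from each class of $\mathcal{H}_s$). The resulting channel $W^s_{\mathcal{B}}$ is essentially $W^s[\mathcal{H}_s]$, whose Bhattacharyya parameter is controlled by that of $W^s$, and Proposition \ref{propBhat} then bounds $\mathbb{P}_e(W^s_{\mathcal{B}})$ by a multiple of $Z(W^s) \leq 2^{-2^{\beta n}}$.

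The technical obstacle I expect is precisely this last step: relating $Z(W^s[\mathcal{H}_s])$ to $Z(W^s)$ uniformly over which stable partition the process has polarized to, because $\mathcal{H}_s$ is a random object depending on $s$. Handling this cleanly probably requires intersecting the polarized set of Corollary \ref{cor1} with the Bhattacharyya-polarized set from \cite{RajTel} and observing that on this intersection the effective channel $W^s[\mathcal{H}_s]$ inherits the fast Bhattacharyya decay of $W^s$, so that the easy-code constructed from $\mathcal{H}_s$ has error at most $2^{-2^{\beta n}}$ for every $\beta < \frac{1}{2}$.
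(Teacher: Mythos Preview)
Your plan is correct and follows the same two-inequality split as the paper: Proposition~\ref{propexponentpol} for $E_\ast\leq\frac{1}{2}$ and the quasigroup polar-code result of \cite{RajTel} for $E_\ast\geq\frac{1}{2}$. The paper's own proof is two sentences and treats both halves as immediate.

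Where you diverge is in the amount of work you budget for the lower bound. The paper simply cites \cite{RajTel} as already establishing that every $\beta<\frac{1}{2}$ is $\ast$-achievable in the sense of Definition~6, so the ``technical obstacle'' you anticipate---passing from Bhattacharyya decay of $W^s$ to $(\delta,2^{-2^{\beta n}})$-easiness via the random code built from $\mathcal{H}_s$---is not argued here at all. Your concern is legitimate if one insists on deriving everything from first principles, and in fact the paper \emph{does} spell out exactly this translation in the MAC section: Proposition~7 invokes Theorem~4 of \cite{RajTelA}, which bounds $Z(W'^s[\mathcal{H}_s])$ directly (not merely $Z(W'^s)$), and then Lemma~\ref{lemeasyMAC} converts this into $(\delta,\epsilon)$-easiness. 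The single-user case is the $m=1$ specialization of that argument, so the route you sketch is the right one---you just need the stronger input that $Z(W^s[\mathcal{H}_s])<2^{-2^{\beta' n}}$, which is what \cite{RajTel,RajTelA} actually provide, rather than trying to control $Z(W^s[\mathcal{H}_s])$ in terms of $Z(W^s)$ after the fact.

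Your extra verification that quasigroup operations are polarizing (uniformity preservation plus strong ergodicity of $/^{\ast}$) is fine and arguably more self-contained than the paper's proof, which assumes this implicitly.
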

\begin{proof}
The quasigroup-based polar code construction in \cite{RajTel} shows that every $\beta<\frac{1}{2}$ is a $\ast$-achievable exponent. Therefore, $E_\ast\geq \frac{1}{2}$. On the other hand, since $\ast$ is polarizing, Proposition \ref{propexponentpol} implies that $E_\ast\leq\frac{1}{2}$. Therefore, $E_\ast=\frac{1}{2}$.
\end{proof}

\begin{myconj}
\label{ConjConjConj}
If $\ast$ is a polarizing operation which is not a quasigroup operation, then $E_\ast<\frac{1}{2}$.
\end{myconj}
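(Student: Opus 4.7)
The plan is to strengthen the Bhattacharyya-tracking argument of Proposition \ref{propexponentpol} by exploiting a collapsible pair. Because $\ast$ is uniformity preserving, every right multiplication $r_v:x\mapsto x\ast v$ is a bijection; because $\ast$ is not a quasigroup, some left multiplication is not injective, producing distinct $a_0,b_0\in\mathcal{X}$ together with $c_0\in\mathcal{X}$ satisfying $c_0\ast a_0=c_0\ast b_0$. For any such collapsible pair $(a,b)$, the $u_1=c$ summand in the identity of Lemma \ref{BhatPlus} contributes a full $Z(W_{a,b})$ whenever $c\ast a=c\ast b$, giving the sharpened bound $Z(W^+_{a,b})\geq Z(W_{a,b})/|\mathcal{X}|$. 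The squaring-type decay responsible for the $2^{|s|^+}$ factor in Lemma \ref{lemsss} is thereby replaced by a purely multiplicative $1/|\mathcal{X}|$, matching the $-$-step bound from Lemma \ref{BhatMinus}.

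To propagate this through an arbitrary $s\in\{-,+\}^n$, note that the $-$ step forces the tracked pair to shift via $(a,b)\mapsto(a\ast v,b\ast v)$. Let $G$ be the permutation group of $\mathcal{X}$ generated by $\{r_v:v\in\mathcal{X}\}$, acting diagonally on $\mathcal{X}\times\mathcal{X}$, and let $\tilde{S}$ be a $G$-invariant set of collapsible pairs containing $(a_0,b_0)$. Define $Z_{\tilde{S}}(W'):=\max_{(a,b)\in\tilde{S}}Z(W'_{a,b})$; the $+$-step collapsible-pair bound (applicable to every pair in $\tilde{S}$) together with the $-$-step freedom to choose $v$ (which, by $G$-invariance, keeps us inside $\tilde{S}$) yields $Z_{\tilde{S}}(W^s)\geq |\mathcal{X}|^{-n}\,Z_{\tilde{S}}(W)$ by induction on $n$. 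One then copies the endgame of Proposition \ref{propexponentpol}: pick an erasure-like channel $W$ with $I(W)>\log|\mathcal{X}|-\delta$ for sufficiently small $\delta$ (so that on a positive-density subset of $s$, $W^s$ is $\delta$-easy with rate so close to $\log|\mathcal{X}|$ that the only admissible $\mathcal{B}$ is $\mathcal{X}$ almost surely, making $W^s_\mathcal{B}$ equivalent to $W^s$) and with $Z(W_{a,b})\geq\epsilon$ for all distinct $(a,b)$. Then $Z(W^s_\mathcal{B})\geq Z_{\tilde{S}}(W^s)\geq\epsilon|\mathcal{X}|^{-n}$, and Proposition \ref{propBhat} gives $\mathbb{P}_e(W^s_\mathcal{B})\geq\epsilon^2|\mathcal{X}|^{-2n}/4$, which is polynomial in $N=2^n$ and therefore exceeds $2^{-2^{\beta n}}$ for every $\beta>0$ once $n$ is large.

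The main obstacle is producing the $G$-invariant family $\tilde{S}$ of collapsible pairs. The property ``some $c$ satisfies $c\ast a=c\ast b$'' is not obviously preserved by the right-translations $(a,b)\mapsto(a\ast v,b\ast v)$, because $\ast$ is not required to be associative: from $c\ast a=c\ast b$ there is no canonical way to produce $c'$ with $c'\ast(a\ast v)=c'\ast(b\ast v)$. If the $G$-orbit of $(a_0,b_0)$ does happen to lie inside the set of collapsible pairs---a structural question one might hope to settle using the stable-partition and residue theory of $/^{\ast}$ developed in Part I \cite{RajErgI}---the program above actually yields the strictly stronger conclusion $E_\ast=0$. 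The ``$<\tfrac12$'' phrasing of Conjecture \ref{ConjConjConj} suggests that in general one should expect only a weaker invariance: the collapsible-pair $+$-bound holds on a positive fraction of the orbit at each step, which should still prevent the Bhattacharyya from decaying as fast as $2^{-2^{n/2}}$ and thus block $\beta=\tfrac12$, which is exactly the content of the conjecture.
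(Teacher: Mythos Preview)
This statement is presented in the paper as a \emph{conjecture}; the paper gives no proof, so there is nothing to compare your attempt against directly.

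Your approach is a natural strengthening of Proposition~\ref{propbadpol}: rather than requiring every pair to be collapsible, you seek a nonempty $G$-invariant set $\tilde S$ of collapsible pairs (where $G=\langle r_v:v\in\mathcal{X}\rangle$), and you correctly derive that such a $\tilde S$ would force $Z_{\tilde S}(W^s)\geq|\mathcal{X}|^{-n}Z_{\tilde S}(W)$ and hence $E_\ast=0$. The gap you yourself identify is genuine and, as far as is known, open: collapsibility of $(a,b)$ need not pass to $(a\ast v,b\ast v)$ without associativity, so the $G$-orbit of a collapsible pair may leave the collapsible set. Your fallback---that a positive \emph{density} of collapsible pairs along the orbit should still prevent $\beta=\tfrac12$---is only a heuristic; turning it into a proof would require quantitative control on how often the tracked pair along a typical $s\in\{-,+\}^n$ walk visits the collapsible region, and how that interacts with the dichotomy between the $Z\mapsto Z/|\mathcal{X}|$ and $Z\mapsto Z^2$ bounds on the $+$ step. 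That is substantively new work beyond what Propositions~\ref{propbadpol} and~\ref{propexponentpol} provide. (One minor slip: the inequality $Z(W^s_\mathcal{B})\geq Z_{\tilde S}(W^s)$ is backwards, since $Z(W^s)$ is an \emph{average} of pairwise parameters; the correct bound $Z(W^s)\geq\frac{1}{|\mathcal{X}|(|\mathcal{X}|-1)}Z_{\tilde S}(W^s)$ is immediate and harmless for the conclusion.)
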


Conjecture \ref{ConjConjConj} implies that quasigroup operations are the best polarizing operations. Therefore, if the conjecture is true and we are looking for good polar codes with large blocklength, it is sufficient to consider quasigroup operations.

\section{Polarization theory for MACs}
\begin{mydef}
Let $W:\mathcal{X}_1\times\ldots\times\mathcal{X}_m\longrightarrow\mathcal{Y}$ be an $m$-user MAC. Let $\mathcal{X}=\mathcal{X}_1\times\ldots\times\mathcal{X}_m$. The single-user channel obtained from $W$ is the channel $W':\mathcal{X}\longrightarrow\mathcal{Y}$ defined by $W'\big(y\big|(x_1,\ldots,x_m)\big)=W(y|x_1,\ldots,x_m)$ for every $(x_1,\ldots,x_m)\in\mathcal{X}$.
\label{deffeddef}
\end{mydef}

\begin{mynot}
Let $W:\mathcal{X}_1\times\ldots\times\mathcal{X}_m\longrightarrow\mathcal{Y}$ be an $m$-user MAC. Let $\ast_1,\ldots,\ast_m$ be $m$ ergodic operations on $\mathcal{X}_1,\ldots,\mathcal{X}_m$ respectively, and let $\ast=\ast_1\otimes\ldots\otimes\ast_m$, which is an ergodic operation on $\mathcal{X}=\mathcal{X}_1\times\ldots\times\mathcal{X}_m$. Let $\mathcal{H}$ be a stable partition of $(\mathcal{X},\ast)$. $W[\mathcal{H}]$ denotes the single user channel $W'[\mathcal{H}]:\mathcal{H}\longrightarrow\mathcal{Y}$ (see Definition \ref{defprojchannel}), where $W'$ is the single user channel obtained from $W$.
\end{mynot}

\begin{mylem}
Let $W:\mathcal{X}_1\times\ldots\times\mathcal{X}_m\longrightarrow\mathcal{Y}$ be an $m$-user MAC. Let $\ast_1,\ldots,\ast_m$ be $m$ ergodic operations on $\mathcal{X}_1,\ldots,\mathcal{X}_m$ respectively, and let $\ast=\ast_1\otimes\ldots\otimes\ast_m$. If there exists $\delta>0$ and a stable partition $\mathcal{H}$ of $(\mathcal{X},\ast)$ such that $\big|I(W)-\log|\mathcal{H}|\big|<\delta$ and $\big|I(W[\mathcal{H}])-\log|\mathcal{H}|\big|<\delta$, then $W$ is a $\delta$-easy MAC. Moreover, if we also have $\mathbb{P}_e(W[\mathcal{H}])<\epsilon$, then $W$ is a $(\delta,\epsilon)$-easy MAC.
\label{lemeasyMAC}
\end{mylem}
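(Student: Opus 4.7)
The plan is to mimic Lemma \ref{lemeasy} user by user, leveraging the structural fact from Part I \cite{RajErgI} that every stable partition $\mathcal{H}$ of $(\mathcal{X},\ast_1\otimes\cdots\otimes\ast_m)$ factors as a tensor product $\mathcal{H}=\mathcal{H}_1\otimes\cdots\otimes\mathcal{H}_m$ of stable partitions $\mathcal{H}_i$ of $(\mathcal{X}_i,\ast_i)$. Assuming this, set $L_i:=|\mathcal{H}_i|$, so that $L:=L_1\cdots L_m=|\mathcal{H}|$ and the hypothesis $\big|I(W)-\log|\mathcal{H}|\big|<\delta$ already matches the first bullet of Definition \ref{defeasyMAC}. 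For each $i$, enumerate the (equal-sized) classes $H_{i,1},\ldots,H_{i,L_i}$ of $\mathcal{H}_i$, draw independent uniform samples $X_{i,j}$ from each $H_{i,j}$, and set $\mathcal{B}_i:=\{X_{i,1},\ldots,X_{i,L_i}\}$. The $\mathcal{B}_i$ are then independent, and the marginal uniformity condition (second bullet of Definition \ref{defeasyMAC}) follows by the same one-line computation as \eqref{eqUnif123}.

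For the mutual-information condition, choose the bijections $f_{i,C_i}$ so that, whenever $C_i$ lies in the support of $\mathcal{B}_i$, $f_{i,C_i}(j)\in H_{i,j}$ (on other $C_i$'s the bijection is arbitrary; $I(W_{\mathcal{B}_1,\ldots,\mathcal{B}_m})$ does not depend on the choice). Let $U_1,\ldots,U_m$ be independent, uniform on $\{1,\ldots,L_1\},\ldots,\{1,\ldots,L_m\}$, independent of $\mathcal{B}_1,\ldots,\mathcal{B}_m$, set $X_i:=f_{i,\mathcal{B}_i}(U_i)$, and let $Y$ be the output of $W$ on input $(X_1,\ldots,X_m)$. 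Then $(X_1,\ldots,X_m)$ is uniform in $\mathcal{X}$, the tuple $\proj_{\mathcal{H}}(X_1,\ldots,X_m)=(H_{1,U_1},\ldots,H_{m,U_m})$ is a bijective function of $(U_1,\ldots,U_m)$, and
\begin{equation*}
I(W_{\mathcal{B}_1,\ldots,\mathcal{B}_m})=I(U_1,\ldots,U_m;Y,\mathcal{B}_1,\ldots,\mathcal{B}_m)\geq I(U_1,\ldots,U_m;Y)=I(\proj_{\mathcal{H}}(X_1,\ldots,X_m);Y)=I(W[\mathcal{H}])>\log L-\delta,
\end{equation*}
which gives the third bullet and establishes $\delta$-easiness.

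For the ``moreover'' clause, build a (suboptimal) decoder for $W_{\mathcal{B}_1,\ldots,\mathcal{B}_m}$ as follows: given $(y,C_1,\ldots,C_m)$, apply the ML decoder of $W[\mathcal{H}]$ to $y$ to obtain $\hat{H}=\hat{H}_1\times\cdots\times\hat{H}_m$, then output the unique $\hat{a}_i$ with $f_{i,C_i}(\hat{a}_i)\in\hat{H}_i$. Averaging over $\mathcal{B}_1,\ldots,\mathcal{B}_m$, the conditional law of $Y$ given the true class $H=H_{1,a_1}\times\cdots\times H_{m,a_m}\in\mathcal{H}$ is exactly $W[\mathcal{H}](\cdot\,|\,H)$, because each $X_i=f_{i,\mathcal{B}_i}(a_i)$ is uniform on $H_{i,a_i}$; hence this decoder errs with probability exactly $\mathbb{P}_e(W[\mathcal{H}])<\epsilon$. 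Since the ML decoder of $W_{\mathcal{B}_1,\ldots,\mathcal{B}_m}$ minimizes the error probability, $\mathbb{P}_e(W_{\mathcal{B}_1,\ldots,\mathcal{B}_m})<\epsilon$ as required. The principal obstacle is the factorization of $\mathcal{H}$ into per-user stable partitions; once this is available from Part I, the remaining arguments are direct adaptations of Lemma \ref{lemeasy} together with the standard reduction of ML error between related channels.
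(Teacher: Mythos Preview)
Your proposal is correct and follows essentially the same route as the paper's proof: the paper invokes the canonical factorization $(\mathcal{H}_i)_{1\le i\le m}$ of $\mathcal{H}$ (Definition \ref{defCanFac} and Proposition \ref{PropProdProdProd} of Part I \cite{RajErgI}), builds $\mathcal{B}_i$ by sampling one uniform point from each block of $\mathcal{H}_i$, verifies marginal uniformity exactly as you do, and then obtains $I(W_{\mathcal{B}_1,\ldots,\mathcal{B}_m})>\log L-\delta$ via the same bijection between $(U_1,\ldots,U_m)$ and $\proj_{\mathcal{H}}(X)$; the $(\delta,\epsilon)$ clause is handled by the same decoder you describe. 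The only cosmetic difference is that the paper phrases the mutual-information step through $H(T\mid Y,\mathcal{B})<\delta$ rather than your direct chain $I(U;Y,\mathcal{B})\ge I(U;Y)=I(W[\mathcal{H}])$, which is the form used in the single-user Lemma \ref{lemeasy}.
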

\begin{proof}
Let $(\mathcal{H}_i)_{1\leq i\leq m}$ be the canonical factorization of $\mathcal{H}$ (see Definition \ref{defCanFac} of Part I \cite{RajErgI}). Let $L=|\mathcal{H}|$. For each $1\leq i\leq m$ let $L_i=|\mathcal{H}_i|$ and define $\mathcal{S}_i:=\{C_i\subset \mathcal{X}_i:\; |C_i|=L_i\}$. We have $L=L_1\cdots L_m$ (see Proposition \ref{PropProdProdProd} of Part I \cite{RajErgI}). Moreover, we have
\begin{equation}
\label{eqeqtataeaea1}
|I(W)-\log L|=\big|I(W)-\log |\mathcal{H}|\big|\leq \delta.
\end{equation}

Now for each $1\leq i\leq m$ let $H_{i,1},\ldots,H_{i,L_i}$ be the elements of $\mathcal{H}_i$, and for each $1\leq j\leq L_i$ let $X_{i,j}$ be a uniform random variable in $H_{i,j}$. We suppose that $X_{i,j}$ is independent from $X_{i',j'}$ for all $(i',j')\neq (i,j)$. Define $\mathcal{B}_i=\{X_{i,1},\ldots,X_{i,L_i}\}$ which is a random subset of $\mathcal{X}_i$. Clearly, $|\mathcal{B}_i|=L_i$ since each $X_{i,j}$ is drawn from a different element of $\mathcal{H}_i$. Therefore, $\mathcal{B}_i$ takes values in $\mathcal{S}_i$ and $\mathcal{B}_1,\ldots,\mathcal{B}_m$ are independent.

For each $1\leq i\leq m$ and each $x_i\in\mathcal{X}_i$, let $j$ be the unique index $1\leq j\leq L_i$ such that $x_i\in H_{i,j}$. Since we are sure that $x_i\notin H_{i,j'}$ for $j'\neq j$, then $x_i\in\mathcal{B}_i$ if and only if $X_{i,j}=x_i$. We have:
\begin{equation}
\label{eqeqtataeaea2}
\sum_{C_i\in\mathcal{S}_i} \frac{1}{L_i}\mathbb{P}_{\mathcal{B}_i}(C_i)\mathds{1}_{x_i\in C_i}=\frac{1}{L_i}\mathbb{P}[x_i\in\mathcal{B}_i]\stackrel{(a)}{=}\frac{1}{L_i}\mathbb{P}[X_{i,j}=x_i]=\frac{1}{L_i}\cdot\frac{1}{|H_{i,j}|}=\frac{1}{|\mathcal{H}_i|}\cdot\frac{1}{\|\mathcal{H}_i\|}=\frac{1}{|\mathcal{X}_i|},
\end{equation}
where (a) follows from the fact that $x_i\in\mathcal{B}_i$ if and only if $X_{i,j}=x_i$.

Now for each $1\leq i\leq m$ and each $C_i\subset \mathcal{S}_i$, let $f_{i,C_i}:\{1,\ldots,L_i\}\rightarrow C_i$ be a fixed bijection. Let $T_1,\ldots,T_m$ be $m$ independent random variables that are uniform in $\{1,\ldots,L_1\}$, \ldots, $\{1,\ldots,L_m\}$ respectively, and which are independent of $\mathcal{B}_1,\ldots,\mathcal{B}_m$. For each $1\leq i\leq m$, let $X_i=f_{i,\mathcal{B}_i}(T_i)$. Send $X_1,\ldots,X_m$ through the MAC $W$ and let $Y$ be the output. The MAC $T_1,\ldots,T_m\longrightarrow (Y,\mathcal{B}_1,\ldots,\mathcal{B}_m)$ is equivalent to the MAC $W_{\mathcal{B}_1,\ldots,\mathcal{B}_m}$ (see Definition \ref{defeasyMAC}). Our aim now is to show that $I(W_{\mathcal{B}_1,\ldots,\mathcal{B}_m})=I(T_1,\ldots,T_m;Y,\mathcal{B}_1,\ldots,\mathcal{B}_m)>\log L-\delta$, which will imply that $W$ is $\delta$-easy (see Definition \ref{defeasyMAC}).

We have $I(T_1,\ldots,T_m;Y,\mathcal{B}_1,\ldots,\mathcal{B}_m)=H(T_1,\ldots,T_m)-H(T_1,\ldots,T_m|Y,\mathcal{B}_1,\ldots,\mathcal{B}_m)$. Now since $H(T_1,\ldots,T_m)=H(T_1)+\ldots+H(T_m)=\log L_1 +\ldots + \log L_m=\log L$, it is sufficient to show that $H(T|Y,\mathcal{B})<\delta$, where $T=(T_1,\ldots,T_m)$ and $\mathcal{B}=(\mathcal{B}_1,\ldots,\mathcal{B}_m)\in \mathcal{S}_1\times\ldots\times\mathcal{S}_m$.

Now for each $1\leq i\leq m$ and each $x_i\in\mathcal{X}_i$, we have:
\begin{align*}
\mathbb{P}_{X_i}(x_i)&=\mathbb{P}[f_{i,\mathcal{B}_i}(T_i)=x_i]\stackrel{(a)}{=}\sum_{C_i\in\mathcal{S}_i:\; x_i\in C_i}\mathbb{P}[f_{i,C_i}(T_i)=x_i]\mathbb{P}_{\mathcal{B}_i}(C_i) \\
&\stackrel{(b)}{=} \sum_{C_i\in\mathcal{S}_i:\; x_i\in C_i}\frac{1}{L_i}\mathbb{P}_{\mathcal{B}_i}(C_i)=\sum_{C_i\in\mathcal{S}_i}\frac{1}{L_i}\mathbb{P}_{\mathcal{B}_i}(C_i)\mathds{1}_{x_i\in C_i}\stackrel{(c)}{=}\frac{1}{|\mathcal{X}_i|},
\end{align*}
where (a) follows from the fact that $f_{i,C_i}(T_i)\in C_i$ and so if $x_i\notin C_i$ then there is a probability of zero to have $f_{i,C_i}(T_i) = x_i$. (b) follows from the fact that $T_i$ is uniform in $\{1,\ldots,L_i\}$ and $f_{i,C_i}$ is a bijection from $\{1,\ldots,L_i\}$ to $C_i$ which imply that $f_{i,C_i}(T_i)$ is uniform in $C_i$ and so $\mathbb{P}[f_{i,C_i}(T_i)=x_i]=\frac{1}{|C_i|}=\frac{1}{L_i}$. (c) follows from Equation \eqref{eqeqtataeaea2}. Therefore, $X:=(X_1,\ldots,X_m)$ is uniform in $\mathcal{X}$ since $X_1,\ldots,X_m$ are independent and uniform in $\mathcal{X}_1,\ldots,\mathcal{X}_m$ respectively. This means that $$I(W[\mathcal{H}])=I(\proj_{\mathcal{H}}(X);Y)=H(\proj_{\mathcal{H}}(X))-H(\proj_{\mathcal{H}}(X)|Y)=\log|\mathcal{H}|-H(\proj_{\mathcal{H}}(X)|Y).$$
Moreover, we have $\big|I(W[\mathcal{H}])-\log|\mathcal{H}|\big|<\delta$ by hypothesis. We conclude that 
\begin{equation}
\label{eqydsnbhawq}
H(\proj_{\mathcal{H}}(X)|Y)<\delta.
\end{equation}

For each $1\leq i\leq m$, let $\mathcal{S}_{\mathcal{H}_i}=\big\{\{x_1,\ldots,x_{L_i}\}:\; x_j\in H_{i,j},\; \forall1\leq j\leq L_i\big\}$ be the set of sections of $\mathcal{H}_i$ (see Definition \ref{defSect} of Part I \cite{RajErgI}). By construction, $\mathcal{B}_i$ takes values in $\mathcal{S}_{\mathcal{H}_i}$. Now define $$\mathcal{S}_{\mathcal{H}}=\{C_1\times\ldots\times C_m:\; C_1\in\mathcal{S}_{\mathcal{H}_1},\ldots,C_m\in\mathcal{S}_{\mathcal{H}_m}\}.$$

For each $C= C_1\times\ldots\times C_m\in\mathcal{S}_{\mathcal{H}}$, define $f_C:\{1,\ldots,L_1\}\times\ldots\times\{1,\ldots,L_m\}\rightarrow \mathcal{H}$
as $$f_C(t_1,\ldots,t_m)=\proj_{\mathcal{H}}\big(f_{1,C_1}(t_1),\ldots,f_{m,C_m}(t_m)\big),$$
Since $C_1,\ldots,C_m$ are sections of $\mathcal{H}_1,\ldots,\mathcal{H}_m$ respectively, $C=C_1\times\ldots\times C_m$ is a section of $\mathcal{H}$ (see Proposition \ref{PropProdProdProd} of Part I \cite{RajErgI}). Therefore, for every $H\in\mathcal{H}$, there exists a unique $x=(x_1,\ldots,x_m)\in C$ such that $H=\proj_{\mathcal{H}}(x)$. This implies that there exist unique $t_1\in\{1,\ldots,L_1\}$, \ldots,$t_m\in\{1,\ldots,L_m\}$ such that $f_C(t_1,\ldots,t_m)=H$. Therefore, $f_C$ is a bijection from $\{1,\ldots,L_1\}\times\ldots\times\{1,\ldots,L_m\}$ to $\mathcal{H}$.

Now since $f_C$ is a bijection for every $C\in \mathcal{S}_{\mathcal{H}}$ and since $\mathcal{B}_1\times\ldots\times\mathcal{B}_m$ takes values in $\mathcal{S}_\mathcal{H}$, we have 
\begin{align*}
H(T|Y,\mathcal{B})&=H\big(f_{\mathcal{B}_1\times\ldots\times\mathcal{B}_m}(T)\big|Y,\mathcal{B}\big)=H\big(\proj_{\mathcal{H}}\big(f_{1,\mathcal{B}_1}(T_1),\ldots,f_{m,\mathcal{B}_m}(T_m)\big)\big|Y,\mathcal{B}\big)\\
&=H\big(\proj_{\mathcal{H}}(X_1,\ldots,X_m)\big|Y,\mathcal{B}\big)=H(\proj_{\mathcal{H}}(X)|Y,\mathcal{B})\leq H(\proj_{\mathcal{H}}(X)|Y)\stackrel{(a)}{<}\delta
\end{align*}
as required, where (a) follows from \eqref{eqydsnbhawq}. We conclude that $W$ is $\delta$-easy.

Now suppose that we also have $\mathbb{P}_e(W[\mathcal{H}])<\epsilon$. Consider the following decoder for the MAC $W_{\mathcal{B}}=W_{\mathcal{B}_1,\ldots,\mathcal{B}_m}$:
\begin{itemize}
\item Compute an estimate $\hat{H}$ of $\proj_{\mathcal{H}}(X)$ using the ML decoder of the channel $W[\mathcal{H}]$.
\item Compute $\hat{T}=f_{\mathcal{B}_1\times\ldots\times\mathcal{B}_m}^{-1}(\hat{H})$.
\end{itemize}
The probability of error of this decoder is:
\begin{align*}
\mathbb{P}[\hat{T}\neq T]&=\mathbb{P}[\hat{H}\neq f_{\mathcal{B}_1\times\ldots\times\mathcal{B}_m}(T)]=\mathbb{P}\big[\hat{H}\neq \proj_{\mathcal{H}}\big(f_{1,\mathcal{B}_1}(T_1),\ldots,f_{m,\mathcal{B}_m}(T_m)\big)\big]\\
&=\mathbb{P}[\hat{H}\neq \proj_{\mathcal{H}}(X_1,\ldots,X_m)]=\mathbb{P}[\hat{H}\neq \proj_{\mathcal{H}}(X)]=\mathbb{P}_e(W[\mathcal{H}])<\epsilon.
\end{align*}
Now since the ML decoder of $W_{\mathcal{B}}$ minimizes the probability of error, we conclude that $\mathbb{P}_e(W_{\mathcal{B}})<\epsilon$. Therefore, $W$ is a $(\delta,\epsilon)$-easy MAC.
\end{proof}

\begin{mythe}
Let $\ast_1,\ldots,\ast_m$ be $m$ binary operations on $\mathcal{X}_1,\ldots,\mathcal{X}_m$ respectively. The sequence $(\ast_1,\ldots,\ast_m)$ is MAC-polarizing if and only if $\ast_1,\ldots,\ast_m$ are polarizing.
\end{mythe}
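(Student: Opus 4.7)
The necessity direction has already been established in Remark \ref{remnecmac}: if $(\ast_1,\ldots,\ast_m)$ is MAC-polarizing then each $\ast_i$ is polarizing. The substantive work is the converse. My plan is to reduce MAC polarization for $(\ast_1,\ldots,\ast_m)$ to single-user polarization for the product operation $\ast := \ast_1 \otimes \cdots \otimes \ast_m$ on the alphabet $\mathcal{X} := \mathcal{X}_1 \times \cdots \times \mathcal{X}_m$.

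The first step is to verify that $\ast$ is itself polarizing as a single-user operation on $\mathcal{X}$. Uniformity preservation is immediate componentwise. One has $/^{\ast} = /^{\ast_1} \otimes \cdots \otimes /^{\ast_m}$, and invoking the product-preservation of strong ergodicity from Part I \cite{RajErgI} yields that $/^{\ast}$ is strongly ergodic. By Theorem \ref{thecaraccharac}, $\ast$ is then polarizing on $\mathcal{X}$.

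The second step is to couple the MAC process of $W$ with the single-user process of the associated single-user channel $W'$ of Definition \ref{deffeddef}. A direct comparison of Definitions \ref{defdef11} and \ref{defdef11MAC} shows that, under the canonical identification of $\mathcal{X}$ with $\mathcal{X}_1 \times \cdots \times \mathcal{X}_m$, the MAC transforms $W^-, W^+$ (with respect to the sequence $(\ast_1,\ldots,\ast_m)$) coincide with the single-user transforms $(W')^-, (W')^+$ (with respect to $\ast$). Consequently the two processes are coupled, with $(W_n)' = (W'_n)$ for every $n$, and in particular $I(W_n^s) = I((W'_n)^s)$ for every $n$ and every $s \in \{-,+\}^n$. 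The conservation property of Definition \ref{defPolaMac} follows immediately from Remark \ref{remPreservMac} (equivalently, from Remark \ref{rem1} applied to $\ast$ on $W'$). For the polarization property, I would apply Corollary \ref{cor1} to the pair $(\ast, W')$: for every $\delta > 0$, the fraction of $s \in \{-,+\}^n$ for which there exists a stable partition $\mathcal{H}_s$ of $(\mathcal{X}, /^{\ast})$ with
\[
\bigl|I((W')^s) - \log|\mathcal{H}_s|\bigr| < \delta \quad\text{and}\quad \bigl|I((W')^s[\mathcal{H}_s]) - \log|\mathcal{H}_s|\bigr| < \delta
\]
tends to $1$ as $n \to \infty$. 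For each such $s$, Lemma \ref{lemeasyMAC}, applied to the MAC $W^s$ with the partition $\mathcal{H}_s$, concludes that $W^s$ is $\delta$-easy, which is exactly what Definition \ref{defPolaMac} demands.

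The main items that require care, rather than any new ideas, are two bookkeeping compatibilities: (i) the product-preservation of strong ergodicity used in the first step, and (ii) the reconciliation of the stable partitions of $(\mathcal{X},/^{\ast})$ produced by Corollary \ref{cor1} with the stable partitions of $(\mathcal{X},\ast)$ required by Lemma \ref{lemeasyMAC}. Both should follow at once from the structural theory of Part I \cite{RajErgI}, where for ergodic operations the lattices of stable partitions of $\ast$ and $/^{\ast}$ are shown to coincide. No deeper obstacle is expected.
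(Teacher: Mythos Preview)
Your proposal is correct and follows essentially the same route as the paper. One minor simplification worth noting for your point (ii): the paper does not reconcile the lattices of stable partitions of $(\mathcal{X},\ast)$ and $(\mathcal{X},/^{\ast})$ but instead applies Lemma \ref{lemeasyMAC} directly with the ergodic operations $/^{\ast_1},\ldots,/^{\ast_m}$ in place of $\ast_1,\ldots,\ast_m$ (the conclusion of that lemma---that $W$ is $\delta$-easy---does not reference any operation), so no coincidence-of-partitions fact from Part I is needed.
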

\begin{proof}
Suppose that $(\ast_1,\ldots,\ast_m)$ is MAC-polarizing. By Remark \ref{remnecmac}, $\ast_1,\ldots,\ast_m$ are polarizing.

Conversely, suppose that $\ast_1,\ldots,\ast_m$ are polarizing. Theorem \ref{thecaraccharac} implies that $\ast_1,\ldots,\ast_m$ are uniformity preserving and $/^{\ast_1},\ldots,/^{\ast_m}$ are strongly ergodic. Now Theorem \ref{theprodstrong} of Part I \cite{RajErgI} implies that the binary operation $/^{\ast_1}\otimes\ldots\otimes/^{\ast_m}$ is strongly ergodic. By noticing that $/^{\ast_1\otimes\ldots\otimes\ast_m}=/^{\ast_1}\otimes\ldots \otimes/^{\ast_m}$, we conclude that $/^{\ast}$ is strongly ergodic, where $\ast=\ast_1\otimes\ldots\otimes\ast_m$.

Now let $W:\mathcal{X}_1\times\ldots\times\mathcal{X}_m\longrightarrow\mathcal{Y}$ be an $m$-user MAC. Let $\mathcal{X}=\mathcal{X}_1\times\ldots\times\mathcal{X}_m$ and let $W':\mathcal{X}\longrightarrow\mathcal{Y}$ be the single user channel obtained from $W$ (see Definition \ref{deffeddef}).

For each $n>0$ and each $s\in\{-,+\}^n$, let $W'^s$ be obtained from $W'$ using the operation $\ast$ (see Definition \ref{defdef11}), and let $W^s$ be obtained from $W$ using the operations $\ast_1,\ldots,\ast_m$ (see Definition \ref{defdef11MAC}). Now since $/^{\ast}$ is strongly ergodic, then by Corollary \ref{cor1}, for any $\delta>0$ we have:
\begin{align*}
\lim_{n\to\infty} \frac{1}{2^n} \bigg|\Big\{ s\in\{-,+\}^n:\;&\exists \mathcal{H}_s\; \emph{a stable partition of $(\mathcal{X},/^{\ast})$},\\
&\big| I(W'^s)-\log|\mathcal{H}_s|\big|<\delta, \big| I(W'^s[\mathcal{H}_s])-\log|\mathcal{H}_s|\big|<\delta \Big\}\bigg| = 1.
\end{align*}
It is easy to see that $W'^s$ is the single user channel obtained from $W^s$. Therefore, $I(W^s)=I(W'^s)$ and $I(W^s[\mathcal{H}])=I(W'^s[\mathcal{H}])$ (by definition). Therefore,
\begin{align*}
\lim_{n\to\infty} \frac{1}{2^n} \bigg|\Big\{ s\in\{-,+\}^n:\;&\exists \mathcal{H}_s\; \emph{a stable partition of $(\mathcal{X},/^{\ast})$},\\
&\big| I(W^s)-\log|\mathcal{H}_s|\big|<\delta, \big| I(W^s[\mathcal{H}_s])-\log|\mathcal{H}_s|\big|<\delta \Big\}\bigg| = 1.
\end{align*}
Now Lemma \ref{lemeasyMAC}, applied to $/^{\ast_1},\ldots,/^{\ast_m}$, implies that:
\begin{align*}
\lim_{n\to\infty} \frac{1}{2^n} \big|\big\{ s\in\{-,+\}^n:\; W^s\;\text{is\;}\delta\text{-easy}\big\}\big| = 1.
\end{align*}
Therefore, $(\ast_1,\ldots,\ast_m)$ satisfies the polarization property of Definition \ref{defPolaMac}. On the other hand, since $\ast_1,\ldots,\ast_m$ are uniformity preserving, Remark \ref{remPreservMac} implies that $(\ast_1,\ldots,\ast_m)$ satisfies the conservation property of Definition \ref{defPolaMac}. We conclude that $(\ast_1,\ldots,\ast_m)$ is MAC-polarizing.
\end{proof}

\begin{myprop}
Let $\ast_1,\ldots,\ast_m$ be $m$ binary operations on $\mathcal{X}_1,\ldots,\mathcal{X}_m$ respectively. If $(\ast_1,\ldots,\ast_m)$ is MAC-polarizing, then $E_{\ast_1,\ldots,\ast_m}\leq E_{\ast_1\otimes\ldots\otimes\ast_m}\leq\min\{E_{\ast_1},\ldots,E_{\ast_m}\}\leq\frac{1}{2}$.
\label{propExponentMAC}
\end{myprop}
\begin{proof} 
Define $\ast=\ast_1\otimes\ldots\otimes\ast_m$. Let $W:\mathcal{X}_1\times\ldots\times\mathcal{X}_m\longrightarrow\mathcal{Y}$ be an $m$-user MAC and let $W':\mathcal{X}\longrightarrow\mathcal{Y}$ be the single user channel obtained from $W$. Note that every MAC polar code for the MAC $W$ constructed using $(\ast_1,\ldots,\ast_m)$ can be seen as a polar code for the channel $W'$ constructed using the operation $\ast$. Moreover, the probability of error of the ML decoder is the same. Therefore, every $(\ast_1,\ldots,\ast_m)$-achievable exponent is $\ast$-achievable. Hence, $E_{\ast_1,\ldots,\ast_m}\leq E_{\ast}$.

Now let $\mathcal{X}=\mathcal{X}_1\times\ldots\times\mathcal{X}_m$. For each $1\leq i\leq m$ and each single user channel $W_i:\mathcal{X}_i\longrightarrow\mathcal{Y}$ with input alphabet $\mathcal{X}_i$, consider the single user channel $W:\mathcal{X}\longrightarrow\mathcal{Y}$ with input alphabet $\mathcal{X}$ defined as $W\big(y\big|(x_1,\ldots,x_m)\big)=W_i(y|x_i)$. Let $(W_{i,n})_{n\geq 0}$ be the single user channel valued process obtained from $W_i$ using the operation $\ast_i$ as in Definition \ref{def1}, and let $(W_n)_{n\geq 0}$ be the single user channel valued process obtained from $W$ using the operation $\ast$ as in Definition \ref{def1}. It is easy to see that for every $\delta>0$ and every $\epsilon>0$, $W_{i,n}$ is $(\delta,\epsilon)$-easy if and only if $W_n$ is $(\delta,\epsilon)$-easy. This implies that each $\ast$-achievable exponent is $\ast_i$-achievable. Therefore, $E_{\ast}\leq E_{\ast_i}$ for every $1\leq i\leq m$, hence $E_{\ast}\leq \min\{E_{\ast_1},\ldots,E_{\ast_m}\}$. Now from Proposition \ref{propexponentpol}, we have $\min\{E_{\ast_1},\ldots,E_{\ast_m}\}\leq\frac{1}{2}$.
\end{proof}

\begin{myprop}
If $\ast_1,\ldots,\ast_m$ are quasigroup operations, then $E_{\ast_1,\ldots,\ast_m}=\frac{1}{2}$.
\end{myprop}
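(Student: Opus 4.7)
The plan is to establish the two inequalities separately. The upper bound $E_{\ast_1,\ldots,\ast_m}\leq \frac{1}{2}$ is immediate from Proposition \ref{propExponentMAC} combined with Corollary \ref{corquasi}: each quasigroup $\ast_i$ has $E_{\ast_i}=\frac{1}{2}$, and therefore $E_{\ast_1,\ldots,\ast_m}\leq \min_i E_{\ast_i}=\frac{1}{2}$.

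For the matching lower bound, I would show that every $\beta<\frac{1}{2}$ is $(\ast_1,\ldots,\ast_m)$-achievable. The starting observation is that $\ast:=\ast_1\otimes\ldots\otimes\ast_m$ is a quasigroup on $\mathcal{X}:=\mathcal{X}_1\times\ldots\times\mathcal{X}_m$, since a tensor product of quasigroups is again a quasigroup. Hence by Corollary \ref{corquasi} applied to $\ast$, $E_{\ast}=\frac{1}{2}$, so if $W'$ denotes the single-user channel obtained from the MAC $W$ (Definition \ref{deffeddef}), then for every $\delta>0$ the single-user process $(W'_n)$ almost surely becomes $(\delta,2^{-2^{\beta n}})$-easy in the single-user sense.

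The task then reduces to promoting single-user easiness of $W'^s$ to MAC-easiness of $W^s$ with comparable parameters. I would do this through Lemma \ref{lemeasyMAC}: it suffices to exhibit, for a set of sequences $s\in\{-,+\}^n$ of probability tending to $1$, a stable partition $\mathcal{H}_s$ of $(\mathcal{X},\ast)$ satisfying simultaneously (a) $|I(W^s)-\log|\mathcal{H}_s||<\delta$, (b) $|I(W^s[\mathcal{H}_s])-\log|\mathcal{H}_s||<\delta$, and (c) $\mathbb{P}_e(W^s[\mathcal{H}_s])<2^{-2^{\beta n}}$. Conditions (a) and (b) are precisely what Corollary \ref{cor1} delivers. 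Crucially, the canonical factorization of Part I \cite{RajErgI} writes $\mathcal{H}_s=\mathcal{H}_{s,1}\otimes\ldots\otimes\mathcal{H}_{s,m}$ with each $\mathcal{H}_{s,i}$ stable for $\ast_i$, and this product structure is exactly what Lemma \ref{lemeasyMAC} exploits to manufacture the independent codes $\mathcal{B}_1,\ldots,\mathcal{B}_m$ required by Definition \ref{defeasyMAC}.

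The hard part is establishing (c), the exponential error bound at rate $\beta<\frac{1}{2}$. Corollary \ref{cor1} alone is informational in nature and only yields a polynomial bound on $Z$ via Proposition \ref{propBhat}(1); to get exponential decay I would rerun the Ar{\i}kan--Telatar martingale argument for the Bhattacharyya parameter, using the tensorization inequalities of Lemmas \ref{BhatMinus} and \ref{BhatPlus} applied to the projected channel $W^s[\mathcal{H}_s]$ rather than directly to $W'^s$. This is the quasigroup polar code analysis of \cite{RajTel} transported to the projected channel: it yields $Z(W^s[\mathcal{H}_s])\leq 2^{-2^{\beta' n}}$ for every $\beta'<\frac{1}{2}$ with probability tending to $1$, and the upper bound $\mathbb{P}_e\leq(|\mathcal{H}_s|-1)Z$ of Proposition \ref{propBhat} then converts this into the required estimate after an arbitrarily small shrinking of $\beta$.
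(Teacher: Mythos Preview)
Your approach matches the paper's: upper bound from Proposition~\ref{propExponentMAC}, lower bound by combining stable-partition polarization with the quasigroup Bhattacharyya analysis and then invoking Lemma~\ref{lemeasyMAC}. The paper streamlines this by citing Theorem~4 of \cite{RajTelA} as a single black box that delivers (a), (b), and $Z(W'^s[\mathcal{H}_s])<2^{-2^{\beta'n}}$ simultaneously for the \emph{same} stable partition $\mathcal{H}_s$ of $(\mathcal{X},/^{\ast})$ (not $(\mathcal{X},\ast)$ as you wrote), which sidesteps the need to argue that the partition produced by Corollary~\ref{cor1} coincides with the one on which your separate $Z$-martingale argument succeeds.
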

\begin{proof}
Let $\ast=\ast_1\otimes\ldots\otimes\ast_m$, then $\ast$ is a quasigroup operation. Let $\beta<\beta'<\frac{1}{2}$. Let $W:\mathcal{X}_1\times\ldots\times\mathcal{X}_m\longrightarrow\mathcal{Y}$ be an $m$-user MAC. Define $\mathcal{X}=\mathcal{X}_1\times\ldots\times\mathcal{X}_m$ and let $W':\mathcal{X}\longrightarrow\mathcal{Y}$ be the single user channel obtained from $W$. For each $n>0$ and each $s\in\{-,+\}^n$, let $W'^s$ be obtained from $W'$ using the operation $\ast$ (see Definition \ref{defdef11}), and let $W^s$ be obtained from $W$ using the operations $\ast_1,\ldots,\ast_m$ (see Definition \ref{defdef11MAC}). From Theorem 4 of \cite{RajTelA}, we have:
\begin{align*}
\lim_{n\to\infty} \frac{1}{2^n} \bigg|\Big\{ s\in\{-&,+\}^n:\;\exists \mathcal{H}_s\; \text{a stable partition of $(\mathcal{X},/^{\ast})$},\\
&\big| I(W'^s)-\log|\mathcal{H}_s|\big|<\delta,\; \big| I(W'^s[\mathcal{H}_s])-\log|\mathcal{H}_s|\big|<\delta,\; Z(W'^s[\mathcal{H}_s])<2^{-2^{\beta'n}}\Big\}\bigg| = 1.
\end{align*}
On the other hand, we have $\mathbb{P}_e(W'^s[\mathcal{H}_s])\leq(|\mathcal{H}_s|-1)Z(W'^s[\mathcal{H}_s])\leq(|\mathcal{X}|-1)Z(W'^s[\mathcal{H}_s])$ from Proposition \ref{propBhat}. Therefore,
\begin{align*}
\lim_{n\to\infty} \frac{1}{2^n} \bigg|\Big\{ &s\in\{-,+\}^n:\;\exists \mathcal{H}_s\; \text{a stable partition of $(\mathcal{X},/^{\ast})$},\\
&\big| I(W'^s)-\log|\mathcal{H}_s|\big|<\delta,\; \big| I(W'^s[\mathcal{H}_s])-\log|\mathcal{H}_s|\big|<\delta,\; \mathbb{P}_e(W'^s[\mathcal{H}_s])<(|\mathcal{X}|-1)2^{-2^{\beta'n}}\Big\}\bigg| = 1.
\end{align*}
It is easy to see that $W'^s$ is the single user channel obtained from $W^s$. Therefore, $I(W^s)=I(W'^s)$, $I(W^s[\mathcal{H}_s])=I(W'^s[\mathcal{H}_s])$ (by definition) and $\mathbb{P}_e(W^s[\mathcal{H}_s])=\mathbb{P}_e(W'^s[\mathcal{H}_s])$. On the other hand, we have $(|\mathcal{X}|-1)2^{-2^{\beta'n}}<2^{-2^{\beta n}}$ for $n$ large enough. We conclude that:
\begin{align*}
\lim_{n\to\infty} \frac{1}{2^n} \bigg|\Big\{ s\in\{-&,+\}^n:\;\exists \mathcal{H}_s\; \text{a stable partition of $(\mathcal{X},/^{\ast})$},\\
&\big| I(W^s)-\log|\mathcal{H}_s|\big|<\delta, \big| I(W^s[\mathcal{H}_s])-\log|\mathcal{H}_s|\big|<\delta,\; \mathbb{P}_e(W^s[\mathcal{H}_s])<2^{-2^{\beta n}}\Big\}\bigg| = 1.
\end{align*}
Now since $/^{\ast}=/^{\ast_1}\otimes\ldots\otimes/^{\ast_m}$ and since $/^{\ast_i}$ is ergodic (as it is a quasigroup operation) for every $1\leq i\leq m$, Lemma \ref{lemeasyMAC} implies that:
$$\lim_{n\to\infty} \frac{1}{2^n}\big|\big\{s\in\{-,+\}^n:\; W^s\;\text{is}\;(\delta,2^{-2^{\beta n}})\text{-easy}\big\}\big|=1.$$
We conclude that every $0\leq \beta<\frac{1}{2}$ is a $(\ast_1,\ldots,\ast_m)$-achievable exponent. Therefore, $E_{\ast_1,\ldots,\ast_m}\geq\frac{1}{2}$. On the other hand, we have $E_{\ast_1,\ldots,\ast_m}\leq\frac{1}{2}$ from Proposition \ref{propExponentMAC}. Hence $E_{\ast_1,\ldots,\ast_m}=\frac{1}{2}$.
\end{proof}

\begin{mycor}
For every $\delta>0$, every $\beta<\frac{1}{2}$, every MAC $W:\mathcal{X}_1\times\ldots\times\mathcal{X}_m\longrightarrow\mathcal{Y}$, and every quasigroup operations $\ast_1,\ldots,\ast_m$ on $\mathcal{X}_1,\ldots,\mathcal{X}_m$ respectively, there exists a polar code for the MAC $W$ constructed using $\ast_1,\ldots,\ast_m$ such that its sum-rate is at least $I(W)-\delta$ and its probability of error under successive cancellation decoder is less than $2^{-N^\beta}$, where $N=2^n$ is the blocklength.
\end{mycor}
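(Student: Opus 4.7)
The strategy is a direct invocation of the previous proposition together with the polar-code construction that the remark following the exponent definition already refers to. Given $\delta>0$ and $\beta<\frac{1}{2}$, I would choose $\beta'$ with $\beta<\beta'<\frac{1}{2}$ and a small $\delta'>0$, of order $\delta/(2\log(|\mathcal{X}_1|\cdots|\mathcal{X}_m|))$. By the previous proposition, $E_{\ast_1,\ldots,\ast_m}=\frac{1}{2}$, so $\beta'$ is a $(\ast_1,\ldots,\ast_m)$-achievable exponent, and therefore the fraction of indices $s\in\{-,+\}^n$ for which $W^s$ is $(\delta',2^{-2^{\beta' n}})$-easy tends to $1$ as $n\to\infty$.

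Next I would construct the polar code by declaring as the information set $\mathcal{A}_n$ exactly those $s$ for which $W^s$ is $(\delta',2^{-2^{\beta' n}})$-easy, and freezing the remaining coordinates using common randomness shared with the decoder. For each $s\in\mathcal{A}_n$, $(\delta',2^{-2^{\beta' n}})$-easiness supplies a blocklength-one randomized MAC code $\mathcal{B}_s=(\mathcal{B}_{s,1},\ldots,\mathcal{B}_{s,m})$ of sum-rate $\log L_s$ with $|I(W^s)-\log L_s|<\delta'$ and $\mathbb{P}_e(W^s_{\mathcal{B}_s})<2^{-2^{\beta' n}}$. Encoding uses the Ar{\i}kan-style recursion built from $\ast_1,\ldots,\ast_m$, and decoding is successive cancellation in which, at coordinate $s\in\mathcal{A}_n$, one applies the ML decoder of the blocklength-one MAC $W^s_{\mathcal{B}_s}$, using the previously decoded symbols as the virtual past inputs of $W^s$.

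The sum-rate analysis is routine: the achieved sum-rate is $\frac{1}{N}\sum_{s\in\mathcal{A}_n}\log L_s$, and combining $\log L_s>I(W^s)-\delta'$ on $\mathcal{A}_n$ with the MAC conservation property $\sum_{s\in\{-,+\}^n} I(W^s)=N\cdot I(W)$ and the uniform bound $0\le I(W^s)\le \log(|\mathcal{X}_1|\cdots|\mathcal{X}_m|)$ on the complement, together with $|\mathcal{A}_n^c|/N\to 0$, yields sum-rate at least $I(W)-\delta$ for $n$ large enough and $\delta'$ as above. The error probability is handled by a union bound over the at most $N$ information indices: total error $\le N\cdot 2^{-2^{\beta' n}}$, which for all sufficiently large $n$ falls below $2^{-2^{\beta n}}=2^{-N^{\beta}}$ since $\beta<\beta'$.

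The only genuinely non-routine point is verifying that the successive-cancellation decoder for this polar code does in fact, at each coordinate $s\in\mathcal{A}_n$ and conditional on the previously decoded symbols being correct, reduce to the ML decoder of $W^s_{\mathcal{B}_s}$; this is what makes the per-coordinate bound $2^{-2^{\beta' n}}$ assemble via a union bound into the overall error bound. Since this reduction is a structural property of the Ar{\i}kan recursion that is independent of the particular operations, and is exactly what the quasigroup polar-code construction in Section~V of \cite{RajTelA} implements, I would simply invoke that construction rather than reprove it here.
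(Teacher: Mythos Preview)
Your proposal is correct and follows exactly the path the paper intends: the paper gives no explicit proof for this corollary, treating it as an immediate consequence of the preceding proposition ($E_{\ast_1,\ldots,\ast_m}=\frac{1}{2}$) together with the remark following the exponent definition, which in turn points to the polar-code construction in Section~V of \cite{RajTelA}. You have simply unpacked the details---choosing $\beta'<\tfrac{1}{2}$, taking the $(\delta',2^{-2^{\beta' n}})$-easy indices as the information set, bounding the sum-rate via conservation and the error via a union bound---that the paper leaves implicit.
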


\section{Conclusion}

A complete characterization of polarizing operations is provided and it is shown that the exponent of polarizing operations cannot exceed $\frac{1}{2}$. Therefore, if we wish to construct polar codes that have a better exponent, we have to use other Ar{\i}kan style constructions that are not based on binary operations. Korada et. al. showed that it is possible to achieve exponents that exceed $\frac{1}{2}$ by combining more than two channels at each polarization step \cite{KoradaSasUrb}.

The transformation used in \cite{KoradaSasUrb} as kernel is linear. Presman et. al. showed that nonlinear kernels can achieve strictly better exponents than linear kernels \cite{Presman}. An important problem, which remains open, is to find a characterization of all polarizing transformations in the general non-linear case. A generalization of the ergodic theory of binary operations that we developed in Part I \cite{RajErgI} is likely to provide such a characterization.

\appendices

\section{Proof of Proposition \ref{MainProp}}

\label{apppA}

Let $(X_i,Y_i)_{0\leq i< 2^k}$ be a sequence of $2^k$ random pairs that satisfy conditions 1) and 2) of Proposition \ref{MainProp}.

\begin{mynot}
For every sequence $\mathbf{x}=(x_i)_{1\leq i<2^k}$ of $2^{k}-1$ elements of $\mathcal{X}$, define the mapping $\pi_{\mathbf{x}}:\mathcal{X}\rightarrow\mathcal{X}$ as $\pi_{\mathbf{x}}(x_0)=g_{\ast}\big((x_0,\mathbf{x})\big)$ for all $x_0\in\mathcal{X}$, where $(x_0,\mathbf{x})$ is the sequence of $2^k$ elements obtained by concatenating $x_0$ and $\mathbf{x}$. Note that $\pi_{\mathbf{x}}$ is a bijection since $\ast$ is uniformity preserving. Define:
\begin{itemize}
\item $p_{y}(x):=\mathbb{P}_{X_0|Y_0}(x|y)$ for every $x\in\mathcal{X}$ and every $y\in\mathcal{Y}$. Note that $p_{y}(x)=\mathbb{P}_{X_i|Y_i}(x|y)$ for every $0\leq i< 2^k$ since $(X_i,Y_i)$ and $(X_0,Y_0)$ are identically distributed.
\item $p_{y_0,\mathbf{x}}(x):=p_{y_0}\big(\pi_{\mathbf{x}}^{-1}(x)\big)$ for every $x\in\mathcal{X}$, every $y_0\in\mathcal{Y}$ and every sequence $\mathbf{x}=(x_i)_{1\leq i<2^k}\in\mathcal{X}^{2^k-1}$.
\item For every $\mathbf{x}=(x_i)_{1\leq i<2^k}\in\mathcal{X}^{2^k-1}$, and every $y_1^{2^k-1}=(y_i)_{1\leq i<2^k}\in\mathcal{Y}^{2^k-1}$, define $$p_{y_{1}^{2^k-1}}(\mathbf{x}):=\prod_{i=1}^{2^k-1}p_{y_i}(x_i)=\mathbb{P}_{X_1^{2^k-1}|Y_1^{2^k-1}}(\mathbf{x}|y_1^{2^k-1}).$$
\end{itemize}
\end{mynot}

Fix $\gamma>0$ and let $\displaystyle\gamma'=\min\left\{\frac{\gamma}{2^{|\mathcal{X}|}+1},\frac{1}{(2^{|\mathcal{X}|}+2)|\mathcal{X}|}\right\}$.

\begin{mynot}
Define:
\begin{align*}
\mathcal{C}=\Big\{y_0^{2^k-1} \in \mathcal{Y}^{2^k}:\; &\forall \mathbf{x}\in \mathcal{X}^{2^k-1},\forall\mathbf{x}'\in \mathcal{X}^{2^k-1},\\
&\big(p_{y_{1}^{2^k-1}}(\mathbf{x})\geq \gamma'^{2^k-1}\text{\; and\; }p_{y_{1}^{2^k-1}}(\mathbf{x}')\geq \gamma'^{2^k-1}\big)\Rightarrow \|p_{y_0,\mathbf{x}}-p_{y_0,\mathbf{x}'}\|_{\infty}< \gamma' \Big\}.
\end{align*}
\end{mynot}

\begin{mylem}
\label{Lemma3}
There exists $\epsilon(\gamma)>0$ such that if $H\big(g_\ast(X_0^{2^k-1})\big|Y_0^{2^k-1}\big)<H(X_0|Y_0)+\epsilon(\gamma)$, then $$\mathbb{P}_{Y_0^{2^k-1}}(\mathcal{C})>1-\gamma'^{2^k}.$$
\end{mylem}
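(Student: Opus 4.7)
The plan is to reduce the entropy hypothesis to a bound on a conditional mutual information, and then to convert that mutual information bound into pointwise closeness estimates via Markov's inequality and Pinsker's inequality. The linchpin is the fact that for every $\mathbf{x}\in\mathcal{X}^{2^k-1}$ the map $\pi_{\mathbf{x}}$ is a bijection of $\mathcal{X}$ (since $\ast$ is uniformity preserving), so $g_\ast(X_0^{2^k-1}) = \pi_{X_1^{2^k-1}}(X_0)$ is a bijective function of $X_0$ once $X_1^{2^k-1}$ is known.

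First I would establish the identity
\[
H\bigl(g_\ast(X_0^{2^k-1})\,\big|\,Y_0^{2^k-1}\bigr)-H(X_0|Y_0)=I\bigl(g_\ast(X_0^{2^k-1})\,;\,X_1^{2^k-1}\,\big|\,Y_0^{2^k-1}\bigr).
\]
This follows by the chain rule and the observation that $H(g_\ast \mid X_1^{2^k-1},Y_0^{2^k-1}) = H(X_0\mid X_1^{2^k-1},Y_0^{2^k-1}) = H(X_0|Y_0)$, using the bijectivity of $\pi_{\mathbf{x}}$ together with the independence of the pairs $(X_i,Y_i)$. Hence, under the hypothesis of the lemma, $I(g_\ast;X_1^{2^k-1}\mid Y_0^{2^k-1})<\epsilon(\gamma)$.

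Next I would decompose this conditional mutual information as an expectation over $y_0^{2^k-1}$ of
\[
\Phi(y_0^{2^k-1}) \;:=\; \sum_{\mathbf{x}\in\mathcal{X}^{2^k-1}} p_{y_1^{2^k-1}}(\mathbf{x})\, D\bigl(p_{y_0,\mathbf{x}}\,\big\|\,\bar p_{y_0^{2^k-1}}\bigr),
\]
where $\bar p_{y_0^{2^k-1}}(x) := \sum_{\mathbf{x}} p_{y_1^{2^k-1}}(\mathbf{x})\, p_{y_0,\mathbf{x}}(x)$. By Markov's inequality, the set $\mathcal{E}_t := \{y_0^{2^k-1} : \Phi(y_0^{2^k-1}) \geq t\}$ has $\mathbb{P}_{Y_0^{2^k-1}}$-measure at most $\epsilon(\gamma)/t$. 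For any $y_0^{2^k-1}\notin\mathcal{E}_t$ and any $\mathbf{x}$ with $p_{y_1^{2^k-1}}(\mathbf{x})\geq \gamma'^{2^k-1}$, isolating that single term in $\Phi$ gives $D(p_{y_0,\mathbf{x}}\|\bar p_{y_0^{2^k-1}}) \leq t/\gamma'^{2^k-1}$, so Pinsker yields $\|p_{y_0,\mathbf{x}}-\bar p_{y_0^{2^k-1}}\|_1 \leq \sqrt{2t/\gamma'^{2^k-1}}$. The triangle inequality (applied in $\ell_\infty \leq \ell_1$) then gives $\|p_{y_0,\mathbf{x}} - p_{y_0,\mathbf{x}'}\|_\infty \leq 2\sqrt{2t/\gamma'^{2^k-1}}$ for any admissible pair $\mathbf{x},\mathbf{x}'$, so this quantity is less than $\gamma'$ provided $t \leq \gamma'^{2^k+1}/8$.

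To close the argument I would choose $t := \gamma'^{2^k+1}/8$ and then set $\epsilon(\gamma) := t\cdot\gamma'^{2^k} = \gamma'^{2^{k+1}+1}/8$. Markov's inequality then forces $\mathbb{P}_{Y_0^{2^k-1}}(\mathcal{E}_t) \leq \gamma'^{2^k}$, and the complement $\mathcal{E}_t^c$ is contained in $\mathcal{C}$ by the estimate above, yielding $\mathbb{P}_{Y_0^{2^k-1}}(\mathcal{C}) \geq 1-\gamma'^{2^k}$. I do not foresee any serious obstacle: the key identity is a routine consequence of the bijectivity of $\pi_{\mathbf{x}}$, and the rest is a standard Markov$+$Pinsker argument. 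The only thing to be careful about is bookkeeping the exponents of $\gamma'$, since later steps in the proof of Proposition \ref{MainProp} will require the constants here to match.
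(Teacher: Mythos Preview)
Your proof is correct and takes a genuinely different route from the paper. The paper observes that $\mathbb{P}_{g_\ast(X_0^{2^k-1})|Y_0^{2^k-1}=y_0^{2^k-1}}$ is the mixture $\sum_{\mathbf{x}}p_{y_1^{2^k-1}}(\mathbf{x})\,p_{y_0,\mathbf{x}}$, and then invokes \emph{strict concavity} of entropy together with a compactness argument: whenever $y_0^{2^k-1}\in\mathcal{C}^c$ there are two mixture components at distance at least $\gamma'$ each carrying mass at least $\gamma'^{2^k-1}$, so the mixture entropy exceeds $H(X_0|Y_0=y_0)$ by some uniform (but inexplicit) amount $\epsilon'(\gamma')>0$; averaging over $y_0^{2^k-1}$ and setting $\epsilon(\gamma)=\epsilon'(\gamma')\gamma'^{2^k}$ finishes. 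You instead identify the entropy gap \emph{exactly} as the conditional mutual information $I(g_\ast;X_1^{2^k-1}\mid Y_0^{2^k-1})$, rewrite it as the expectation of $\Phi$, and then run Markov and Pinsker. Your approach has the advantage of producing an explicit value $\epsilon(\gamma)=\gamma'^{2^{k+1}+1}/8$, whereas the paper's $\epsilon(\gamma)$ is only known to exist; conversely, the paper's argument is slightly more self-contained in that it does not call on Pinsker. Your final caveat about constants is in fact unnecessary: the lemma only asserts the existence of some $\epsilon(\gamma)>0$, and the downstream argument in Proposition~\ref{MainProp} uses only the conclusion $\mathbb{P}_{Y_0^{2^k-1}}(\mathcal{C})>1-\gamma'^{2^k}$, which you obtain.
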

\begin{proof}
For every $x\in\mathcal{X}$ and every $y_{0}^{2^k-1}\in\mathcal{Y}^{2^k}$, we have:
\begin{equation*}
\begin{aligned}
\mathbb{P}_{g_{\ast}(X_0^{2^k-1})|Y_0^{2^k-1}}&(x|y_0^{2^k-1})\\
&=\sum_{\substack{x_0,\ldots,x_{2^k-1}\in\mathcal{X}:\\ g_\ast(x_0^{2^{k}-1})=x}}\left(\prod_{i=0}^{2^k-1}p_{y_i}(x_i)\right)= \sum_{\substack{\mathbf{x}\in\mathcal{X}^{2^k-1}, \\\mathbf{x}=(x_i)_{1\leq i<2^k}}}\sum_{\substack{x_0\in\mathcal{X}:\\ g_{\ast}((x_0,\mathbf{x}))=x}}\left(\prod_{i=1}^{2^k-1}p_{y_i}(x_i)\right)p_{y_0}(x_0)\\
&= \sum_{\mathbf{x}\in\mathcal{X}^{2^k-1}}p_{y_{1}^{2^k-1}}(\mathbf{x})\sum_{\substack{x_0\in\mathcal{X}:\\ \pi_{\mathbf{x}}(x_0)=x}}p_{y_0}(x_0)= \sum_{\mathbf{x}\in\mathcal{X}^{2^k-1}}p_{y_{1}^{2^k-1}}(\mathbf{x})p_{y_0}\big(\pi_{\mathbf{x}}^{-1}(x)\big)\\
&=\sum_{\mathbf{x}\in\mathcal{X}^{2^k-1}}p_{y_{1}^{2^k-1}}(\mathbf{x})p_{y_0,\mathbf{x}}(x).
\end{aligned}
\end{equation*}
Therefore, for every $y_{0}^{2^k-1}\in\mathcal{Y}^{2^k}$ we have:
\begin{equation}
\label{Eqeq1}
\mathbb{P}_{g_{\ast}(X_0^{2^k-1})|Y_0^{2^k-1}}(x|y_0^{2^k-1})=\sum_{\mathbf{x}\in\mathcal{X}^{2^k-1}} p_{y_{1}^{2^k-1}}(\mathbf{x})p_{y_0,\mathbf{x}}(x).
\end{equation}

Due to the concavity of the entropy function, it follows from \eqref{Eqeq1} that for every sequence $y_0^{2^k-1}\in\mathcal{Y}^{2^k}$ we have:

\begin{align}
\nonumber H\big(g_\ast(X_0^{2^k-1}) &\big|Y_0^{2^k-1}=y_0^{2^k-1}\big)\\
&\geq 
\sum_{\mathbf{x}\in\mathcal{X}^{2^k-1}}p_{y_1^{2^k-1}}(\mathbf{x}) H(p_{y_0,\mathbf{x}})\stackrel{(a)}{=} \sum_{\mathbf{x}\in\mathcal{X}^{2^k-1}}p_{y_1^{2^k-1}}(\mathbf{x}) H(p_{y_0})=H(p_{y_0})=H(X_0|Y_0=y_0),
\label{Eqeqp} 
\end{align}
where (a) follows from the fact that the distribution $p_{y_0,\mathbf{x}}$ is a permuted version of the distribution $p_{y_0}$, which implies that $p_{y_0,\mathbf{x}}$ and $p_{y_0}$ have the same entropy. Now if $y_0^{2^k-1}\in\mathcal{C}^c$, there exist $\mathbf{x}\in\mathcal{X}^{2^k-1}$ and $\mathbf{x}'\in\mathcal{X}^{2^k-1}$ such that $p_{y_1^{2^k-1}}(\mathbf{x})\geq \gamma'^{2^k-1}$, $p_{y_1^{2^k-1}}(\mathbf{x}')\geq \gamma'^{2^k-1}$ and $\|p_{y_0,\mathbf{x}}-p_{y_0,\mathbf{x}'}\|_{\infty}\geq \gamma'$. Therefore, due to the strict concavity of the entropy function, it follows from \eqref{Eqeq1} that there exists $\epsilon'(\gamma')>0$ such that:
\begin{equation}
\label{Eqeq4} H\big(g_\ast(X_0^{2^k-1}) \big|Y_0^{2^k-1}=y_0^{2^k-1}\big)\geq \Big(\sum_{\mathbf{x}\in\mathcal{X}^{2^k-1}}p_{y_1^{2^k-1}}(\mathbf{x}) H(p_{y_0,\mathbf{x}})\Big)+\epsilon'(\gamma')= H(X_0|Y_0=y_0)+\epsilon'(\gamma').
\end{equation}
Moreover, since the space of probability distributions on $\mathcal{X}$ is compact, $\epsilon'(\gamma')>0$ can be chosen so that it depends only on $\gamma'$ and $|\mathcal{X}|$. We have:
\begin{align*}
H\big(&g_\ast(X_0^{2^k-1})\big|Y_0^{2^k-1}\big)\\
&=\sum_{y_0^{2^k-1}\in\mathcal{C}}H\big(g_\ast(X_0^{2^k-1})\big|Y_0^{2^k-1}=y_0^{2^k-1}\big)\mathbb{P}_{Y_0^{2^k-1}}(y_0^{2^k-1}) \\
&\;\;\;\;\;\;\;\;\;\;\;\;\;\;\;\;\;\;\;\;\;\;\;\;\;\;\;\;\;\;\;\;\;\;\;\;\; \;\;\;\;\;\;\;\;\;\;\;\;\;\;\;\;+ \sum_{y_0^{2^k-1}\in\mathcal{C}^c}H\big(g_\ast(X_0^{2^k-1})\big|Y_0^{2^k-1}=y_0^{2^k-1}\big)\mathbb{P}_{Y_0^{2^k-1}}(y_0^{2^k-1})\\
&\stackrel{(a)}{\geq} \sum_{y_0^{2^k-1}\in\mathcal{C}}H(X_0|Y_0=y_0)\mathbb{P}_{Y_0^{2^k-1}}(y_0^{2^k-1}) + \sum_{y_0^{2^k-1}\in\mathcal{C}^c}\big(H(X_0|Y_0=y_0)+\epsilon'(\gamma')\big)\mathbb{P}_{Y_0^{2^k-1}}(y_0^{2^k-1}) \\
&=\Big(\sum_{y_0^{2^k-1}\in\mathcal{Y}^{2^k-1}}H(X_0|Y_0=y_0)\mathbb{P}_{Y_0^{2^k-1}}(y_0^{2^k-1})\Big) +\epsilon'(\gamma')\mathbb{P}_{Y_0^{2^k-1}}(\mathcal{C}^c)= H(X_0|Y_0)+\epsilon'(\gamma')\mathbb{P}_{Y_0^{2^k-1}}(\mathcal{C}^c),
\end{align*}
where (a) follows from \eqref{Eqeqp} and \eqref{Eqeq4}. Let $\epsilon(\gamma)=\epsilon'(\gamma')\gamma'^{2^k}$.

Clearly, if $H\big(g_\ast(X_0^{2^k-1})\big|Y_0^{2^k-1}\big)<H(X_0|Y_0)+\epsilon(\gamma)$, then we must have $\mathbb{P}_{Y_0^{2^k-1}}(\mathcal{C}^c)<\gamma'^{2^k}$.
\end{proof}

\vspace*{3mm}

In the next few definitions and lemmas, $(X_i,Y_i)_{0\leq i< 2^k}$ is a sequence of $2^k$ random pairs that satisfy conditions 1), 2) and 3) of Proposition \ref{MainProp} where $\epsilon(\gamma)$ is as in Lemma \ref{Lemma3}. In particular, we have $H\big(g_\ast(X_0^{2^k-1})\big|Y_0^{2^k-1}\big)<H(X_0|Y_0)+\epsilon(\gamma)$ and so by Lemma \ref{Lemma3} we have $\mathbb{P}_{Y_0^{2^k-1}}(\mathcal{C})>1-\gamma'^{2^k}$, where $\displaystyle\gamma'=\min\left\{\frac{\gamma}{2^{|\mathcal{X}|}+1},\frac{1}{(2^{|\mathcal{X}|}+2)|\mathcal{X}|}\right\}$.

\begin{mynot}
Define the following:
\begin{itemize}
\item For each $y_0\in\mathcal{Y}$, let $\mathcal{C}_{y_0}:=\big\{y_1^{2^k-1}\in\mathcal{Y}^{2^k-1}:\; y_0^{2^k-1}\in\mathcal{C}\big\}$.
\item $\mathcal{C}_0:=\big\{y_0\in\mathcal{Y}:\; \mathbb{P}_{Y_1^{2^k-1}}(\mathcal{C}_{y_0})> 1-\gamma'^{2^k-1}\big\}$.
\item For each $y\in \mathcal{Y}$, let $A_y=\{x\in \mathcal{X}:\; p_{y}(x)\geq \gamma'\}$.
\item For each $D\subset \mathcal{X}$, let $\mathcal{Y}_D=\{y\in\mathcal{Y}:\;A_y=D\}$.
\item $\mathcal{A}=\{D_0\subset \mathcal{X}:\; \mathbb{P}_{Y_0}(\mathcal{Y}_{D_0})\geq \gamma'\}$.
\end{itemize}
\end{mynot}

We will show later that $\mathcal{A}$ is actually the stable partition $\mathcal{H}$ of $(\mathcal{X},\ast)$ that is claimed in Proposition \ref{MainProp}.

\begin{mylem}
\label{lemtemlemtem1}
We have:
\begin{itemize}
\item $\mathbb{P}_{Y_0}(\mathcal{C}_0)>1-\gamma'$.
\item For every $D_0\in\mathcal{A}$ there exists $y_0\in \mathcal{C}_0$ such that $A_{y_0}=D_0$.
\end{itemize}
\end{mylem}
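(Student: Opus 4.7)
The plan is to establish both bullet points by leveraging the bound $\mathbb{P}_{Y_0^{2^k-1}}(\mathcal{C}) > 1 - \gamma'^{2^k}$ provided by Lemma \ref{Lemma3} together with the i.i.d. structure of $(Y_i)_{0 \leq i < 2^k}$. Both assertions will reduce to Markov-type / pigeonhole arguments; no deeper ergodic input is needed here.

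For the first bullet, I would begin by using the fact that $(Y_i)_{0 \leq i < 2^k}$ are i.i.d.\ to write
\begin{equation*}
\mathbb{P}_{Y_0^{2^k-1}}(\mathcal{C}) \;=\; \sum_{y_0 \in \mathcal{Y}} \mathbb{P}_{Y_0}(y_0)\,\mathbb{P}_{Y_1^{2^k-1}}(\mathcal{C}_{y_0}),
\end{equation*}
which identifies the left-hand side with $\mathbb{E}_{Y_0}\!\left[\mathbb{P}_{Y_1^{2^k-1}}(\mathcal{C}_{Y_0})\right]$. By definition of $\mathcal{C}_0$, every $y_0 \in \mathcal{C}_0^c$ satisfies $\mathbb{P}_{Y_1^{2^k-1}}(\mathcal{C}_{y_0}) \leq 1 - \gamma'^{2^k-1}$, so bounding the integrand by $1$ on $\mathcal{C}_0$ and by $1 - \gamma'^{2^k-1}$ on $\mathcal{C}_0^c$ gives
\begin{equation*}
\mathbb{P}_{Y_0^{2^k-1}}(\mathcal{C}) \;\leq\; 1 - \mathbb{P}_{Y_0}(\mathcal{C}_0^c)\,\gamma'^{2^k-1}.
\end{equation*}
Comparing with the hypothesis $\mathbb{P}_{Y_0^{2^k-1}}(\mathcal{C}) > 1 - \gamma'^{2^k}$ forces $\mathbb{P}_{Y_0}(\mathcal{C}_0^c) < \gamma'$, which is what we want.

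For the second bullet, I would argue by contradiction using the first bullet. Fix $D_0 \in \mathcal{A}$ and suppose no $y_0 \in \mathcal{C}_0$ satisfies $A_{y_0} = D_0$; equivalently, $\mathcal{Y}_{D_0} \subseteq \mathcal{C}_0^c$. Then $\mathbb{P}_{Y_0}(\mathcal{C}_0^c) \geq \mathbb{P}_{Y_0}(\mathcal{Y}_{D_0}) \geq \gamma'$ by the defining condition of $\mathcal{A}$, contradicting the first bullet. Hence $\mathcal{Y}_{D_0} \cap \mathcal{C}_0 \neq \emptyset$, providing the required $y_0$.

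There is no real obstacle: both parts are bookkeeping exercises on top of Lemma \ref{Lemma3}. The only subtlety to watch is making sure the strict inequalities line up (the hypothesis gives strict $>$, and the definitions of $\mathcal{C}_0$ and $\mathcal{A}$ use strict $>$ and weak $\geq$ respectively, which is exactly the combination that makes the contradiction in the second bullet close). Once these are observed, the proof is immediate and short.
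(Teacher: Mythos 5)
Your proof is correct and follows the same approach as the paper: decompose $\mathbb{P}_{Y_0^{2^k-1}}(\mathcal{C})$ by conditioning on $Y_0$, bound the integrand by $1$ on $\mathcal{C}_0$ and by $1-\gamma'^{2^k-1}$ on $\mathcal{C}_0^c$, and compare with the lower bound from Lemma~\ref{Lemma3}; the second bullet is the same union-bound/pigeonhole argument (the paper phrases it via $\mathbb{P}_{Y_0}(\mathcal{Y}_{D_0}\cap\mathcal{C}_0)>0$ rather than by contradiction, but the content is identical).
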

\begin{proof}
We have 
\begin{align*}
1-\gamma'^{2^k}&<\mathbb{P}_{Y_0^{2^k-1}}(\mathcal{C})=\sum_{y_0\in\mathcal{C}_0}\mathbb{P}_{Y_0}(y_0)\mathbb{P}_{Y_1^{2^k-1}}(\mathcal{C}_{y_0})+ \sum_{y_0\in\mathcal{C}_0^c}\mathbb{P}_{Y_0}(y_0)\mathbb{P}_{Y_1^{2^k-1}}(\mathcal{C}_{y_0}) \\
&\stackrel{(a)}{\leq} \mathbb{P}_{Y_0}(\mathcal{C}_0)+\mathbb{P}_{Y_0}(\mathcal{C}_0^c)(1-\gamma'^{2^k-1})=1-\gamma'^{2^k-1}\mathbb{P}_{Y_0}(\mathcal{C}_0^c),
\end{align*}
where (a) follows from the fact that $\mathbb{P}_{Y_1^{2^k-1}}(\mathcal{C}_{y_0})\leq 1-\gamma'^{2^k-1}$ for every $y_0\in\mathcal{C}_0^c$.
We conclude that $\mathbb{P}_{Y_0}(\mathcal{C}_0^c)<\gamma'$, hence $\mathbb{P}_{Y_0}(\mathcal{C}_0)>1-\gamma'$.

Now let $D_0\in\mathcal{A}$. We have $\mathbb{P}_{Y_0}(\mathcal{Y}_{D_0})\geq \gamma'$ by definition. But we have just shown that $\mathbb{P}_{Y_0}(\mathcal{C}_0)>1-\gamma'$, hence $1\geq \mathbb{P}_{Y_0}(\mathcal{Y}_{D_0}\cup \mathcal{C}_0)=\mathbb{P}_{Y_0}(\mathcal{Y}_{D_0})+\mathbb{P}_{Y_0}(\mathcal{C}_0)-\mathbb{P}_{Y_0}(\mathcal{Y}_{D_0}\cap \mathcal{C}_0)>\gamma'+1-\gamma'-\mathbb{P}_{Y_0}(\mathcal{Y}_{D_0}\cap \mathcal{C}_0)$, thus $\mathbb{P}_{Y_0}(\mathcal{Y}_{D_0}\cap \mathcal{C}_0)>0$. This implies that $\mathcal{Y}_{D_0}\cap \mathcal{C}_0\neq \o$. Therefore, there exists $y_0\in\mathcal{C}_0$ such that $A_{y_0}=D_0$.
\end{proof}

\begin{mylem}
\label{lemtemlemtem2}
$\mathcal{A}$ is an $\mathcal{X}$-cover.
\end{mylem}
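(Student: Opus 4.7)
To show $\mathcal{A}$ is an $\mathcal{X}$-cover, the plan is to prove that every $x\in\mathcal{X}$ belongs to at least one $D_0\in\mathcal{A}$. Fix $x\in\mathcal{X}$. The starting point is that $X_0$ is uniform in $\mathcal{X}$, so
\[
\frac{1}{|\mathcal{X}|}=\mathbb{P}_{X_0}(x)=\sum_{y\in\mathcal{Y}}\mathbb{P}_{Y_0}(y)\,p_{y}(x).
\]
Splitting the sum according to whether $p_y(x)\geq\gamma'$ or not, and bounding the latter part crudely by $\gamma'$, I would obtain
\[
\sum_{y:\,p_y(x)\geq\gamma'}\mathbb{P}_{Y_0}(y)\;\geq\;\sum_{y:\,p_y(x)\geq\gamma'}\mathbb{P}_{Y_0}(y)\,p_y(x)\;\geq\;\frac{1}{|\mathcal{X}|}-\gamma'.
\]

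Next I would reinterpret the set $\{y:p_y(x)\geq\gamma'\}$ in terms of the $A_y$'s: it is exactly $\{y:x\in A_y\}$, which decomposes as the disjoint union $\bigsqcup_{D_0\subset\mathcal{X},\,D_0\ni x}\mathcal{Y}_{D_0}$. Therefore
\[
\sum_{D_0\subset\mathcal{X},\,D_0\ni x}\mathbb{P}_{Y_0}(\mathcal{Y}_{D_0})\;\geq\;\frac{1}{|\mathcal{X}|}-\gamma'.
\]
There are at most $2^{|\mathcal{X}|}$ subsets of $\mathcal{X}$, so pigeonhole produces some $D_0\ni x$ with $\mathbb{P}_{Y_0}(\mathcal{Y}_{D_0})\geq\frac{1}{2^{|\mathcal{X}|}}\bigl(\frac{1}{|\mathcal{X}|}-\gamma'\bigr)$.

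Finally I would invoke the specific choice $\gamma'\leq\frac{1}{(2^{|\mathcal{X}|}+2)|\mathcal{X}|}$ to conclude $\frac{1}{2^{|\mathcal{X}|}}\bigl(\frac{1}{|\mathcal{X}|}-\gamma'\bigr)\geq\gamma'$, equivalently $\frac{1}{|\mathcal{X}|}\geq(2^{|\mathcal{X}|}+1)\gamma'$, which is exactly what the prescribed bound on $\gamma'$ guarantees. Hence the selected $D_0$ satisfies $\mathbb{P}_{Y_0}(\mathcal{Y}_{D_0})\geq\gamma'$, so $D_0\in\mathcal{A}$ and contains $x$, completing the proof that $\mathcal{A}$ is an $\mathcal{X}$-cover.

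The argument is essentially a one-shot averaging/pigeonhole; there is no genuine obstacle, and in particular Lemma \ref{Lemma3} and the quantities $\mathcal{C},\mathcal{C}_0$ are not needed here. The only point that requires any care is matching the constants: one must verify that the threshold $\gamma'$ was chosen small enough that the bound produced by pigeonhole still exceeds $\gamma'$ itself, and this is precisely the role played by the second term in the definition $\gamma'=\min\{\gamma/(2^{|\mathcal{X}|}+1),\,1/((2^{|\mathcal{X}|}+2)|\mathcal{X}|)\}$.
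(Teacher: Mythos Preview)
Your argument is correct and follows essentially the same route as the paper's proof: both exploit $\frac{1}{|\mathcal{X}|}=\sum_y \mathbb{P}_{Y_0}(y)\,p_y(x)$, partition $\mathcal{Y}$ according to the sets $\mathcal{Y}_{D_0}$, and close the estimate using the bound $\gamma'\leq\frac{1}{(2^{|\mathcal{X}|}+2)|\mathcal{X}|}$. The only cosmetic difference is that the paper argues by contradiction (assuming some $x_0$ lies in no $D_0\in\mathcal{A}$ and bounding $\frac{1}{|\mathcal{X}|}$ strictly below itself) rather than via your direct pigeonhole, but the content is identical.

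One small addition: the definition of $\mathcal{X}$-cover (from Part~I) also requires $\o\notin\mathcal{A}$, which the paper verifies separately and trivially, since each $A_{y}$ contains $\operatorname*{arg\,max}_{x}p_{y}(x)$ and hence $\mathcal{Y}_{\o}=\o$; you should include this one-line check for completeness.
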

\begin{proof}
For every $y_0\in\mathcal{Y}$, let $\displaystyle a_{y_0}=\operatorname*{arg\,max}_{x} p_{y_0}(x)$. Clearly, $\mathbb{P}_{Y_0}(a_{y_0})\geq\frac{1}{|\mathcal{X}|}>\gamma'$. Therefore, $a_{y_0}\in A_{y_0}$ and so $A_{y_0}\neq\o$ for every $y_0\in\mathcal{Y}$. This means that $\mathcal{Y}_{\o}=\o$, hence $\mathbb{P}_{Y_0}(\mathcal{Y}_{\o})=0<\gamma'$. We conclude that $\o\notin \mathcal{A}$.

Suppose that $\mathcal{A}$ is not an $\mathcal{X}$-cover. This means that $\displaystyle\bigcup_{D_0\in\mathcal{A}}D_0\neq \mathcal{X}$. Therefore, there exists $x_0\in\mathcal{X}$ such that $\displaystyle x_0\notin\bigcup_{D_0\in\mathcal{A}} D_0$ and so $x_0\notin D_0$ for every $D_0\in\mathcal{A}$. We have:
\begin{align*}
\frac{1}{|\mathcal{X}|}&=\mathbb{P}_{X_0}(x_0)=\sum_{y_0\in\mathcal{Y}}\mathbb{P}_{Y_0}(y_0)p_{y_0}(x_0)\stackrel{(a)}{=}
\sum_{D_0\subset \mathcal{X}}\sum_{\substack{y_0\in\mathcal{Y}_{D_0}}}\mathbb{P}_{Y_0}(y_0)p_{y_0}(x_0)\\
&=\sum_{D_0\in\mathcal{A}}\sum_{\substack{y_0\in\mathcal{Y}_{D_0}}} \mathbb{P}_{Y_0}(y_0)p_{y_0}(x_0) + \sum_{\substack{D_0\subset \mathcal{X}\\D_0\notin\mathcal{A}}}\sum_{\substack{y_0\in \mathcal{Y}_{D_0}}} \mathbb{P}_{Y_0}(y_0)p_{y_0}(x_0)\\
&\stackrel{(b)}{\leq} \sum_{D_0\in\mathcal{A}}\sum_{\substack{y_0\in\mathcal{Y}_{D_0}}} \mathbb{P}_{Y_0}(y_0)\gamma' + \sum_{\substack{D_0\subset \mathcal{X}\\D_0\notin\mathcal{A}}}\sum_{\substack{y_0\in \mathcal{Y}_{D_0}}} \mathbb{P}_{Y_0}(y_0)= \sum_{D_0\in\mathcal{A}} \mathbb{P}_{Y_0}(\mathcal{Y}_{D_0})\gamma' + \sum_{\substack{D_0\subset \mathcal{X}\\D_0\notin\mathcal{A}}} \mathbb{P}_{Y_0}(\mathcal{Y}_{D_0})\\
&\stackrel{(c)}{\leq} \mathbb{P}_{Y_0}\Big(\bigcup_{D_0\in\mathcal{A}}\mathcal{Y}_{D_0}\Big) \gamma' + \sum_{\substack{D_0\subset \mathcal{X}\\D_0\notin\mathcal{A}}} \gamma'\stackrel{(d)}{\leq} \gamma' + 2^{|\mathcal{X}|}\gamma'\leq (2^{|\mathcal{X}|}+1)\frac{1}{(2^{|\mathcal{X}|}+2)|\mathcal{X}|}<\frac{1}{|\mathcal{X}|},
\end{align*}
where (a) follows from the fact that $\{\mathcal{Y}_{D_0}:\;D_0\subset\mathcal{X}\}$ is a partition of $\mathcal{Y}$. (b) follows from the fact that if $D_0\in \mathcal{A}$ and $y_0\in\mathcal{Y}_{D_0}$, then $A_{y_0}=D_0\in\mathcal{A}$ and so $x_0\notin A_{y_0}$ (since $x_0\notin D_0$ for every $D_0\in\mathcal{A}$) which implies that $p_{y_0}(x_0)<\gamma'$. (c) follows from the fact that $\{\mathcal{Y}_{D_0}:\;D_0\subset\mathcal{X}\}$ is a partition of $\mathcal{Y}$ and from the fact that $\mathbb{P}_{Y_0}(\mathcal{Y}_{D_0})<\gamma'$ for every $D_0\notin \mathcal{A}$. (d) follows from the fact that there are at most $2^{|\mathcal{X}|}$ subsets of $\mathcal{X}$. We conclude that if $\mathcal{A}$ is not an $\mathcal{X}$-cover, then $\frac{1}{|\mathcal{X}|}<\frac{1}{|\mathcal{X}|}$ which is a contradiction. Therefore, $\mathcal{A}$ is an $\mathcal{X}$-cover.
\end{proof}

\vspace*{3mm}
The next three lemmas will be used to show that $\mathcal{A}$ is a stable partition.

\begin{mylem}
\label{propinduced1}
Let $k = 2^{2^{|\mathcal{X}|}}+\scon(\ast)$. For every $x\in\mathcal{X}$ there exists a sequence $\mathfrak{X}=(X_i)_{0\leq i<k}$ of length $k$ such that $X_i\in \mathcal{A}^{i\ast}$ for every $0\leq i<k$, and $x\ast\mathfrak{X}\in\mathcal{A}^{k\ast}$.
\end{mylem}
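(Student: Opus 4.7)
The plan is to prove this by a straightforward induction on $i$, constructing the sequence $X_0, X_1, \ldots, X_{k-1}$ one element at a time while maintaining the invariant that the left-associated partial product lands in some element of the iterated cover $\mathcal{A}^{i\ast}$.

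More precisely, I would show by induction on $i$ that one can choose $X_0,\ldots,X_{i-1}$ with $X_j\in\mathcal{A}^{j\ast}$ for each $0\leq j<i$, such that the left-associated partial product $x_i:=(\cdots((x\ast X_0)\ast X_1)\ast\cdots)\ast X_{i-1}$ lies in some $A_i\in\mathcal{A}^{i\ast}$. The base case $i=0$ is immediate from Lemma \ref{lemtemlemtem2}: since $\mathcal{A}$ is an $\mathcal{X}$-cover, there is $A_0\in\mathcal{A}=\mathcal{A}^{0\ast}$ with $x\in A_0$. For the inductive step, suppose $x_i\in A_i\in\mathcal{A}^{i\ast}$; by the recursive definition of the iterated cover from Part I \cite{RajErgI}, for any $A'_i\in\mathcal{A}^{i\ast}$ the set $A_i\ast A'_i$ is contained in some element of $\mathcal{A}^{(i+1)\ast}$. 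Choosing any such $A'_i$ and any $X_i\in A'_i$ then gives
$$x_{i+1}:=x_i\ast X_i\in A_i\ast A'_i\subseteq A_{i+1}$$
for some $A_{i+1}\in\mathcal{A}^{(i+1)\ast}$, and $X_i\in A'_i\in\mathcal{A}^{i\ast}$ as required. Iterating this up to $i=k$ yields $x\ast\mathfrak{X}=x_k\in A_k\in\mathcal{A}^{k\ast}$.

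The main obstacle I anticipate is confirming that $\mathcal{A}^{i\ast}$ remains a non-empty $\mathcal{X}$-cover for every $0\leq i\leq k$, so that the inductive choice of $A'_i\in\mathcal{A}^{i\ast}$ (and hence of $X_i$) is always available. This should follow from Lemma \ref{lemtemlemtem2} combined with the fact that $\ast$ is uniformity preserving, since the image of a cover under a uniformity-preserving push-forward is again a cover; the relevant lemmas from Part I \cite{RajErgI} on iterated covers should apply directly.

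Finally, the specific value $k=2^{2^{|\mathcal{X}|}}+\scon(\ast)$ is not actually required for the induction itself — any positive integer $k$ would make the construction go through. The reason for this particular $k$ is that there are at most $2^{2^{|\mathcal{X}|}}$ collections of subsets of $\mathcal{X}$, so by the pigeonhole principle the sequence $(\mathcal{A}^{i\ast})_{i\geq 0}$ must enter a cycle within $2^{2^{|\mathcal{X}|}}$ steps, after which $\scon(\ast)$ further steps (using the strong connectivity of $\ast$) suffice to reach a stable partition in the cycle. This extra budget will be exploited by subsequent lemmas to promote $\mathcal{A}$ itself to a stable partition, which is the eventual goal of Proposition \ref{MainProp}.
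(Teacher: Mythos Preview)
Your induction only establishes that $x\ast\mathfrak{X}$ is \emph{contained in} some $A_k\in\mathcal{A}^{k\ast}$, not that $x\ast\mathfrak{X}$ \emph{equals} an element of $\mathcal{A}^{k\ast}$, which is what the lemma asserts. Indeed, starting from the single element $x$ and multiplying on the right by sets $X_i\in\mathcal{A}^{i\ast}$ gives only the chain $x\ast\mathfrak{X}\subseteq A\ast\mathfrak{X}$ for any $A\in\mathcal{A}$ with $x\in A$; while $A\ast\mathfrak{X}\in\mathcal{A}^{k\ast}$ by the definition of the iterated cover, the inclusion $x\ast\mathfrak{X}\subseteq A\ast\mathfrak{X}$ is in general strict. (Also, in your inductive step ``any $X_i\in A_i'$'' should presumably read ``$X_i=A_i'$''; otherwise $X_i$ would be an element of $\mathcal{X}$ rather than a set in $\mathcal{A}^{i\ast}$, contradicting the hypothesis $X_i\in\mathcal{A}^{i\ast}$.)

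Consequently, your claim that the specific value of $k$ plays no role is incorrect: obtaining equality is precisely where the budget $k=2^{2^{|\mathcal{X}|}}+\scon(\ast)$ is used. The paper's proof proceeds in two phases. First, by Theorem~\ref{theinduced} of Part~I there is some $n<2^{2^{|\mathcal{X}|}}$ with $\mathcal{A}^{n\ast}=\langle\mathcal{A}\rangle$; up to that point one multiplies arbitrarily, landing in some $B=A\ast\mathfrak{X}_1\in\langle\mathcal{A}\rangle$. Second, with the remaining $k-n>\scon(\ast)$ steps, strong ergodicity (Theorem~\ref{thestrong} of Part~I) allows one to steer a single element $x'\in x\ast\mathfrak{X}_1$ exactly onto a prescribed target $X\in\langle\mathcal{A}\rangle^{(k-n)\ast}$; the sandwich
\[
X = x'\ast\mathfrak{X}_2 \;\subseteq\; x\ast\mathfrak{X} \;\subseteq\; B\ast\mathfrak{X}_2
\]
together with the fact that $\langle\mathcal{A}\rangle^{(k-n)\ast}$ is a \emph{partition} (so $X$ and $B\ast\mathfrak{X}_2$, both lying in it and intersecting, must coincide) forces $x\ast\mathfrak{X}=X$. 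Without first reaching a partition, no such squeeze argument is available, and mere containment does not close the gap.
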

\begin{proof}
Since $\mathcal{A}$ is an $\mathcal{X}$-cover, we can apply Theorem \ref{theinduced} of Part I \cite{RajErgI}. Therefore, there exists $0\leq n<2^{2^{|\mathcal{X}|}}$ such that $\mathcal{A}^{n\ast}=\langle\mathcal{A}\rangle$. Fix $x\in \mathcal{X}$ and $X\in\mathcal{A}^{k\ast}=\langle\mathcal{A}\rangle^{(k-n)\ast}$, and let $A\in\mathcal{A}$ be such that $x\in A$. Choose an arbitrary sequence $\mathfrak{X}_1=(X_i)_{0\leq i<n}$ such that $X_i\in\mathcal{A}^{i\ast}$ for $0\leq i<n$. Let $B=A\ast \mathfrak{X}_1$. Clearly, $B\in \mathcal{A}^{n\ast}=\langle \mathcal{A}\rangle$.

Since $k = 2^{2^{|\mathcal{X}|}}+\scon(\ast)$ and $0\leq n<2^{2^{|\mathcal{X}|}}$, we have $k-n>\scon(\ast)$. Let $x'\in x\ast\mathfrak{X}_1$. Since $k-n>\scon(\ast)$, we can apply Theorem \ref{thestrong} of Part I \cite{RajErgI} to get a sequence $\mathfrak{X}_2=(X_i')_{0\leq i<k-n}$ such that $X_i'\in\langle\mathcal{A}\rangle^{i\ast}=\mathcal{A}^{(n+i)\ast}$ for every $0\leq i<k-n$, and $x'\ast\mathfrak{X}_2=X$. Since $x'\in x\ast\mathfrak{X}_1\subset A\ast\mathfrak{X}_1= B$, we have $X=x'\ast\mathfrak{X}_2\subset (x\ast\mathfrak{X}_1)\ast\mathfrak{X}_2 \subset B\ast\mathfrak{X}_2$. But both $X$ and $B\ast\mathfrak{X}_2$ are elements of $\langle \mathcal{A}\rangle^{(k-n)\ast}$ which is a partition, so we must have $B\ast\mathfrak{X}_2=X$. Now define $\mathfrak{X}=(\mathfrak{X}_1,\mathfrak{X}_2)$. We have $X=x'\ast\mathfrak{X}_2\subset x\ast\mathfrak{X}\subset B\ast\mathfrak{X}_2=X$. Therefore, $x\ast\mathfrak{X} = X\in\mathcal{A}^{k\ast}$.
\end{proof}

\begin{mylem}
For every $i\geq 0$ and every $X\in\mathcal{A}^{i\ast}$ there exist $2^i$ sets $B_0,\ldots,B_{2^i-1}\in\mathcal{A}$ such that
$$\displaystyle X=\bigg\{g_\ast(\mathbf{x}):\; \mathbf{x}\in\prod_{j=0}^{2^i-1}B_j\bigg\}:=\left\{g_\ast(x_0,\ldots,x_{2^i-1}):\;x_0\in B_0,\ldots,x_{2^i-1}\in B_{2^i-1}\right\}.$$
\label{lemlemtemyem}
\end{mylem}
\begin{proof}
We will show the lemma by induction on $i\geq0$. The lemma is trivial for $i=0$: Take $B_0=X\in\mathcal{A}$, we get 
$$\displaystyle X=\{x:\;x\in B_0\}=\bigg\{g_\ast(\mathbf{x}):\; \mathbf{x}\in\prod_{j=0}^{2^0-1}B_j\bigg\}.$$
Now let $i>0$ and suppose that the lemma is true for $i-1$. Let $X\in\mathcal{A}^{i\ast}$, and let $X',X''\in\mathcal{A}^{(i-1)\ast}$ be such that $X=X'\ast X''$. It follows from the induction hypothesis that there exist $2^{i-1}$ sets $B_0',\ldots,B_{2^{i-1}-1}'\in\mathcal{A}$ and $2^{i-1}$ sets $B_0'',\ldots,B_{2^{i-1}-1}''\in\mathcal{A}$ such that
$$\displaystyle X'=\bigg\{g_\ast(\mathbf{x}'):\; \mathbf{x}'\in\prod_{j=0}^{2^{i-1}-1}B_j'\bigg\}\;\;\;\text{and}\;\;\;X''=\bigg\{g_\ast(\mathbf{x}''):\; \mathbf{x}''\in\prod_{j=0}^{2^{i-1}-1}B_j''\bigg\}.$$
We have:
\begin{align*}
X&=X'\ast X''=\bigg\{g_\ast(\mathbf{x}'):\; \mathbf{x}'\in\prod_{j=0}^{2^{i-1}-1}B_j'\bigg\}\ast \bigg\{g_\ast(\mathbf{x}''):\; \mathbf{x}''\in\prod_{j=0}^{2^{i-1}-1}B_j''\bigg\}\\
&=\bigg\{g_\ast(\mathbf{x}')\ast g_\ast(\mathbf{x}''):\; \mathbf{x}'\in\prod_{j=0}^{2^{i-1}-1}B_j',\; \mathbf{x}''\in\prod_{j=0}^{2^{i-1}-1}B_j''\bigg\}\\
&=\bigg\{ g_\ast(\mathbf{x}):\; \mathbf{x}\in\bigg(\prod_{j=0}^{2^{i-1}-1}B_j'\bigg)\times\bigg( \prod_{j=0}^{2^{i-1}-1}B_j''\bigg)\bigg\}=\bigg\{ g_\ast(\mathbf{x}):\; \mathbf{x}\in \prod_{j=0}^{2^{i}-1}B_j\bigg\},
\end{align*}
where
$$B_j=\begin{cases}B_j'\;&\text{if}\;0\leq j<2^{i-1},\\ B_{j-2^{i-1}}''&\text{if}\;2^{i-1}\leq j< 2^i.\end{cases}$$
\end{proof}

\begin{mylem}
\label{lemtemlemtembem}
Let $\mathfrak{X}=(X_i)_{0\leq i<l}$ be a sequence of length $l>0$ such that $X_i\in\mathcal{A}^{i\ast}$ for every $0\leq i<l$. There exist $2^l-1$ sets $D_1,\ldots,D_{2^l-1}\in\mathcal{A}$ such that for every $x\in\mathcal{X}$, we have
$$x\ast \mathfrak{X} =\bigg\{g_{\ast}\big((x,\mathbf{x})\big):\;  \mathbf{x}\in\prod_{i=1}^{2^l-1}D_i\bigg\}=\bigg\{\pi_{\mathbf{x}}(x):\;  \mathbf{x}\in\prod_{i=1}^{2^l-1}D_i\bigg\}.$$
\end{mylem}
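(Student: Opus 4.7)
The plan is to prove this by induction on $l$, using Lemma \ref{lemlemtemyem} as the key structural tool to expand the last set $X_{l-1}$, and exploiting the recursive (balanced binary tree) definition of $g_{\ast}$ to splice the two pieces together.

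For the base case $l=1$, the sequence $\mathfrak{X}=(X_0)$ has $X_0\in\mathcal{A}^{0\ast}=\mathcal{A}$, so setting $D_1:=X_0$ gives $x\ast\mathfrak{X}=x\ast X_0=\{x\ast y:y\in X_0\}=\{g_{\ast}((x,y)):y\in D_1\}$, since $g_{\ast}$ on a length-2 sequence is just $\ast$.

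For the inductive step, suppose the claim holds for $l-1\geq 1$. Write $\mathfrak{X}=(\mathfrak{X}',X_{l-1})$ with $\mathfrak{X}'=(X_0,\ldots,X_{l-2})$. By the induction hypothesis applied to $\mathfrak{X}'$, there exist $D_1,\ldots,D_{2^{l-1}-1}\in\mathcal{A}$ such that for every $x\in\mathcal{X}$,
\[
x\ast\mathfrak{X}'=\bigg\{g_{\ast}((x,\mathbf{x})):\;\mathbf{x}\in\prod_{i=1}^{2^{l-1}-1}D_i\bigg\}.
\]
Next, since $X_{l-1}\in\mathcal{A}^{(l-1)\ast}$, Lemma \ref{lemlemtemyem} yields $B_0,\ldots,B_{2^{l-1}-1}\in\mathcal{A}$ with
\[
X_{l-1}=\bigg\{g_{\ast}(\mathbf{x}'):\;\mathbf{x}'\in\prod_{j=0}^{2^{l-1}-1}B_j\bigg\}.
\]

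Now I combine the two pieces using $x\ast\mathfrak{X}=(x\ast\mathfrak{X}')\ast X_{l-1}$. Both $(x,\mathbf{x})$ and $\mathbf{x}'$ have length exactly $2^{l-1}$, so by the recursive definition of $g_{\ast}$,
\[
g_{\ast}((x,\mathbf{x}))\ast g_{\ast}(\mathbf{x}')=g_{\ast}((x,\mathbf{x},\mathbf{x}')),
\]
where $(x,\mathbf{x},\mathbf{x}')$ has length $2^l$. Relabeling $D_{2^{l-1}+j}:=B_j$ for $0\leq j<2^{l-1}$ produces $2^{l-1}-1+2^{l-1}=2^l-1$ sets $D_1,\ldots,D_{2^l-1}\in\mathcal{A}$ such that
\[
x\ast\mathfrak{X}=\bigg\{g_{\ast}((x,\tilde{\mathbf{x}})):\;\tilde{\mathbf{x}}\in\prod_{i=1}^{2^l-1}D_i\bigg\},
\]
and the second equality in the lemma follows immediately from the definition $\pi_{\tilde{\mathbf{x}}}(x)=g_{\ast}((x,\tilde{\mathbf{x}}))$.

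This proof is essentially a bookkeeping exercise; there is no conceptual obstacle, only the need to verify that the counts $2^{l-1}-1$ (from induction) plus $2^{l-1}$ (from Lemma \ref{lemlemtemyem}) match the required $2^l-1$, and that concatenation of sequences of equal length $2^{l-1}$ aligns correctly with the balanced binary tree structure of $g_{\ast}$. The crucial observation enabling the induction is that the $D_i$'s produced are independent of $x$, which is true because neither the inductive hypothesis nor Lemma \ref{lemlemtemyem} depends on $x$.
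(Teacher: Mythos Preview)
Your proof is correct and follows essentially the same approach as the paper: induction on $l$, with the base case $l=1$ handled by setting $D_1=X_0$, and the inductive step combining the induction hypothesis for $\mathfrak{X}'=(X_0,\ldots,X_{l-2})$ with Lemma~\ref{lemlemtemyem} applied to $X_{l-1}\in\mathcal{A}^{(l-1)\ast}$, then splicing via $x\ast\mathfrak{X}=(x\ast\mathfrak{X}')\ast X_{l-1}$ and the recursive definition of $g_\ast$. Your explicit remark that the $D_i$'s do not depend on $x$ is a useful clarification that the paper leaves implicit.
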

\begin{proof}
We will show the lemma by induction on $l>0$. If $l=1$, the lemma is trivial: If we take $D_1=X_0\in\mathcal{A}$, then for every $x\in\mathcal{X}$ we have
$$x\ast \mathfrak{X}=\{x\ast x_0: x_0\in X_0\}=\big\{g_{\ast}\big((x,x_0)\big): x_0\in D_1\big\}=\bigg\{g_{\ast}\big((x,\mathbf{x})\big):\;  \mathbf{x}\in\prod_{i=1}^{2^1-1}D_i\bigg\}.$$
Now let $l>1$ and suppose that the lemma is true for $l-1$. Let $\mathfrak{X}=(X_i)_{0\leq i<l}$ and define the sequence $\mathfrak{X}'=(X_i)_{0\leq i<l-1}$. The induction hypothesis implies the existence of $2^{l-1}-1$ sets $D_1',\ldots,D_{2^{l-1}-1}'\in\mathcal{A}$ such that for every $x\in\mathcal{X}$ we have
$$x\ast \mathfrak{X}'=\bigg\{g_{\ast}\big((x,\mathbf{x}')\big):\;  \mathbf{x}'\in\prod_{i=1}^{2^{l-1}-1}D_i'\bigg\}.$$
On the other hand, since $X_{l-1}\in\mathcal{A}^{(l-1)\ast}$, Lemma \ref{lemlemtemyem} shows the existence of $2^{l-1}$ sets $D_0'',\ldots,D_{2^{l-1}-1}''\in\mathcal{A}$ such that
$$\displaystyle X_{l-1}=\bigg\{g_\ast(\mathbf{x}''):\; \mathbf{x}''\in\prod_{i=0}^{2^{l-1}-1}D_i''\bigg\}.$$
Define the $2^l-1$ sets $D_1,\ldots,D_{2^l-1}\in\mathcal{A}$ as follows:
$$D_i=\begin{cases}D_i'\;&\text{if}\; 1\leq i<2^{l-1},\\D_{i-2^{l-1}}''\;&\text{if}\; 2^{l-1}\leq i<2^l.\end{cases}$$
For every $x\in\mathcal{X}$ we have:
\begin{align*}
x\ast\mathfrak{X} &= (x\ast\mathfrak{X}')\ast X_{l-1}=\bigg\{g_{\ast}\big((x,\mathbf{x}')\big):\;  \mathbf{x}'\in\prod_{i=1}^{2^{l-1}-1}D_i'\bigg\}\ast \bigg\{g_\ast(\mathbf{x}''):\; \mathbf{x}''\in\prod_{i=0}^{2^{l-1}-1}D_i''\bigg\}\\
&=\bigg\{g_\ast\big((x,\mathbf{x}')\big)\ast g_\ast(\mathbf{x}''):\; \mathbf{x}'\in\prod_{i=1}^{2^{l-1}-1}D_i,\; \mathbf{x}''\in\prod_{i=2^{l-1}}^{2^l-1}D_i\bigg\}= \bigg\{g_{\ast}\big((x,\mathbf{x})\big):\;  \mathbf{x}\in\prod_{i=1}^{2^l-1}D_i\bigg\}.
\end{align*}
\end{proof}

\begin{mylem}
\label{lemMainlem}
We have the following:
\begin{enumerate}
\item $\mathcal{A}$ is a stable partition of $(\mathcal{X},\ast)$.
\item If $y_0\in\mathcal{C}_{0}$ and $A_{y_0}\in\mathcal{A}$ then $y_0\in\mathcal{Y}_{\mathcal{A},\gamma}(X_0,Y_0)$.
\end{enumerate}
\end{mylem}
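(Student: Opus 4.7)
The plan is to leverage the $\mathcal{C}_0$ condition as a quantitative invariance on the conditional distribution $p_{y_0}$, and to combine it with the combinatorial content of Lemmas \ref{propinduced1} and \ref{lemtemlemtembem} to obtain both parts together.

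First I would perform a selection step. Fix $y_0 \in \mathcal{C}_0$ with $A_{y_0} \in \mathcal{A}$ and let $D_1, \ldots, D_{2^k-1} \in \mathcal{A}$ (to be specified below). Under the product measure on $Y_1^{2^k-1}$ we have $\prod_i \mathbb{P}_{Y_0}(\mathcal{Y}_{D_i}) \geq \gamma'^{2^k-1}$ while $\mathbb{P}_{Y_1^{2^k-1}}(\mathcal{C}_{y_0}^c) < \gamma'^{2^k-1}$, so we may choose $y_1^{2^k-1} \in \mathcal{C}_{y_0}$ with $y_i \in \mathcal{Y}_{D_i}$ for each $i$. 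For such a choice, every $\mathbf{x} \in \prod_i D_i$ satisfies $p_{y_1^{2^k-1}}(\mathbf{x}) \geq \gamma'^{2^k-1}$, so the defining property of $\mathcal{C}$ gives $\|p_{y_0,\mathbf{x}} - p_{y_0,\mathbf{x}'}\|_\infty < \gamma'$ for all $\mathbf{x}, \mathbf{x}' \in \prod_i D_i$. Unpacking $p_{y_0,\mathbf{x}}(z) = p_{y_0}(\pi_\mathbf{x}^{-1}(z))$, this reads $|p_{y_0}(w) - p_{y_0}(w')| < \gamma'$ whenever $\pi_\mathbf{x}(w) = \pi_{\mathbf{x}'}(w')$ for some $\mathbf{x}, \mathbf{x}' \in \prod_i D_i$.

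Next I would bring in the orbit structure. Given $x \in A_{y_0}$, Lemma \ref{propinduced1} supplies $\mathfrak{X} = (X_0, \ldots, X_{k-1})$ with $X_i \in \mathcal{A}^{i\ast}$ and $x \ast \mathfrak{X} \in \mathcal{A}^{k\ast}$, and Lemma \ref{lemtemlemtembem} then produces the specific $D_1, \ldots, D_{2^k-1} \in \mathcal{A}$ (which feed back into the selection step) such that $w \ast \mathfrak{X} = \{\pi_\mathbf{x}(w) : \mathbf{x} \in \prod_i D_i\}$ for every $w \in \mathcal{X}$. Combined with the approximate invariance from above, this forces $p_{y_0}$ to be approximately constant on every "reachability class" $\{w : z \in w \ast \mathfrak{X}\}$, and in particular on the $\mathcal{A}^{k\ast}$-element $x \ast \mathfrak{X}$. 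Because $A_{y_0}$ is the $\gamma'$-significant level set of $p_{y_0}$, the local $\gamma'$-constancy and the sharp threshold separating $A_{y_0}$ from its complement force $A_{y_0}$ to coincide with an element of $\mathcal{A}^{k\ast}$. Invoking Theorem \ref{theinduced} of Part I, which stabilizes the cover $\mathcal{A}$ to its induced stable partition $\langle\mathcal{A}\rangle = \mathcal{A}^{k\ast}$, then yields Part 1: $\mathcal{A}$ itself is a stable partition of $(\mathcal{X},\ast)$.

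Part 2 now follows almost directly. With $\mathcal{A}$ a stable partition and $A_{y_0} \in \mathcal{A}$, the $\gamma'$-constancy of $p_{y_0}$ on $A_{y_0}$, together with the upper bound $p_{y_0}(x) < \gamma'$ for $x \notin A_{y_0}$ and the normalization $\sum_x p_{y_0}(x) = 1$, pins the uniform value to within $\gamma$ of $1/|A_{y_0}|$ thanks to the choice $\gamma' \leq \gamma/(2^{|\mathcal{X}|}+1)$, giving $\|p_{y_0} - \mathbb{I}_{A_{y_0}}\|_\infty < \gamma$ and hence $y_0 \in \mathcal{Y}_{\mathcal{A},\gamma}(X_0, Y_0)$. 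The main obstacle is the transitivity step in Part 1: rigorously showing that the $\pi_\mathbf{x}$-action, parametrized by the constrained set $\prod_i D_i$, sweeps through enough of each $\mathcal{A}^{k\ast}$-element for the local $\gamma'$-invariance to propagate into $\gamma'$-constancy on the whole element, and then that this local uniformity really forces $\mathcal{A}$ to already equal $\mathcal{A}^{k\ast}$ rather than a strictly finer cover. The interplay between the bijectivity of each $\pi_\mathbf{x}$, the equal-block-size property of stable partitions (Lemma \ref{lemBijectNec}), and the cover-stabilization result from Part I is the technically delicate part.
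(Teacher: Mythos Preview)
Your outline is close to the paper's argument and uses the same ingredients, but the ``main obstacle'' you flag is real, and your sketch does not resolve it. Two concrete issues:

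First, the identification $\langle\mathcal{A}\rangle=\mathcal{A}^{k\ast}$ is not valid a priori: Theorem~\ref{theinduced} only gives $\mathcal{A}^{n\ast}=\langle\mathcal{A}\rangle$ for some $n<2^{2^{|\mathcal{X}|}}$, so $\mathcal{A}^{k\ast}=\langle\mathcal{A}\rangle^{(k-n)\ast}$ is an iterate of $\langle\mathcal{A}\rangle$, not $\langle\mathcal{A}\rangle$ itself. The target is to show each $D_0\in\mathcal{A}$ lies in $\langle\mathcal{A}\rangle$, and then the cover property forces $\mathcal{A}=\langle\mathcal{A}\rangle$.

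Second, and more importantly, the paper does \emph{not} establish transitivity of the $\pi_{\mathbf{x}}$-action on blocks. Instead it bypasses this entirely by the following device. One applies Lemma~\ref{propinduced1} not to an arbitrary $x\in A_{y_0}$ but to the anchor $a_{y_0}=\arg\max_x p_{y_0}(x)$, which satisfies $p_{y_0}(a_{y_0})\geq 1/|\mathcal{X}|$. With $B=a_{y_0}\ast\mathfrak{X}\in\mathcal{A}^{k\ast}$ and the $D_i$'s from Lemma~\ref{lemtemlemtembem}, one then \emph{fixes a single} $\mathbf{x}'\in\prod_i D_i$ and shows that every $x\in\pi_{\mathbf{x}'}^{-1}(B)$ satisfies $|p_{y_0}(x)-p_{y_0}(a_{y_0})|<\gamma'$: indeed, $\pi_{\mathbf{x}'}(x)\in B$ means $\pi_{\mathbf{x}'}(x)=\pi_{\mathbf{x}}(a_{y_0})$ for some $\mathbf{x}\in\prod_i D_i$, and one compares directly to the anchor in a single step. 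Since $p_{y_0}(a_{y_0})\geq 1/|\mathcal{X}|$, this forces $\pi_{\mathbf{x}'}^{-1}(B)\subset A_{y_0}=D_0$. The conclusion is then pure cardinality: using $\mathcal{A}\preceq\langle\mathcal{A}\rangle$, pick $C\in\langle\mathcal{A}\rangle$ with $D_0\subset C$; then
\[
\|\langle\mathcal{A}\rangle\|=|B|=|\pi_{\mathbf{x}'}^{-1}(B)|\leq|D_0|\leq|C|=\|\langle\mathcal{A}\rangle\|,
\]
so $D_0=C\in\langle\mathcal{A}\rangle$. No propagation or transitivity argument is needed. Your Part~2 sketch is fine once Part~1 is in place; the paper does exactly what you describe there, again comparing everything to the single anchor $a_{y_0}$.
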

\begin{proof}
1) Let $D_0\in\mathcal{A}$. By Lemma \ref{lemtemlemtem1}, there exists $y_0\in \mathcal{C}_0$ such that $D_0=A_{y_0}$. Let $\displaystyle a_{y_0}=\operatorname*{arg\,max}_{x\in\mathcal{X}} p_{y_0}(x)$. Clearly, $p_{y_0}(a_{y_0})\geq \frac{1}{|\mathcal{X}|}>\gamma'$ and so $a_{y_0}\in A_{y_0}=D_0$.

Since $\mathcal{A}$ is an $\mathcal{X}$-cover (Lemma \ref{lemtemlemtem2}), Theorem \ref{theinduced} of Part I \cite{RajErgI} implies the existence of an integer $n$ satisfying $0\leq n<2^{2^{|\mathcal{X}|}}$ and $\mathcal{A}^{n\ast}=\langle\mathcal{A}\rangle$. Moreover, Lemma \ref{propinduced1} shows the existence of a sequence $\mathfrak{X}=(X_i)_{0\leq i<k}$ such that $X_i\in\mathcal{A}^{i\ast}$ for all $0\leq i< k$ and $a_{y_0}\ast\mathfrak{X}\in\mathcal{A}^{k\ast}= \langle\mathcal{A}\rangle^{(k-n)\ast}$. Let
\begin{equation}
\label{eqteqleq123r1}
B=a_{y_0}\ast\mathfrak{X}\in\mathcal{A}^{k\ast}=\langle\mathcal{A}\rangle ^{(k-n)\ast}.
\end{equation}

Lemma \ref{lemtemlemtembem} shows the existence of $2^k-1$ sets $D_1,\ldots,D_{2^k-1}\in\mathcal{A}$ such that
\begin{equation}
B=\Big\{g_{\ast}\big((a_{y_0},\mathbf{x})\big):\;  \mathbf{x}\in\prod_{i=1}^{2^k-1}D_i\Big\}=\Big\{\pi_{\mathbf{x}}(a_{y_0}):\;  \mathbf{x}\in\prod_{i=1}^{2^k-1}D_i\Big\}.
\label{equBd}
\end{equation}

Define $$\mathcal{C}_{y_0}'=\big\{y_1^{2^k-1} \in \mathcal{Y}^{2^k-1}:\; \forall 1\leq i< 2^k,\;A_{y_i}=D_i\big\}=\prod_{i=1}^{2^k-1}\mathcal{Y}_{D_i}.$$

Since $D_1,\ldots,D_{2^k-1}\in\mathcal{A}$, we have $$\mathbb{P}_{Y_1^{2^k-1}}(\mathcal{C}_{y_0}')=\prod_{i=1}^{2^k-1}\mathbb{P}_{Y_i}(\mathcal{Y}_{D_i})=\prod_{i=1}^{2^k-1}\mathbb{P}_{Y_0}(\mathcal{Y}_{D_i})\geq \gamma'^{2^k-1}.$$

On the other hand, since $y_0\in\mathcal{C}_0$, we have $\mathbb{P}_{Y_1^{2^k-1}}(\mathcal{C}_{y_0})>1-\gamma'^{2^k-1}$ from the definition of $\mathcal{C}_0$. Therefore, $\mathbb{P}_{Y_1^{2^k-1}}(\mathcal{C}_{y_0}\cap \mathcal{C}_{y_0}')>0$ which implies that $\mathcal{C}_{y_0}\cap \mathcal{C}_{y_0}'\neq \o$. Hence, there exists a sequence $(y_1,\ldots,y_{2^k-1})\in\mathcal{C}_{y_0}$ such that $A_{y_i}=D_i$ for all $1\leq i<2^k$.

Now fix a sequence
\begin{equation}
\label{eqteqleq123r2}
\mathbf{x}'=(x_i')_{1\leq i<2^k}\text{\;such\;that\;}x_i'\in D_i\text{\;for\;all\;}1\leq i<2^k.
\end{equation}

Let $x\in \pi_{\mathbf{x}'}^{-1}(B)$, then there exists $x'\in B$ such that $x'=\pi_{\mathbf{x}'}(x)$. Now from \eqref{equBd}, since $x'\in B$, there exists a sequence $\mathbf{x}=(x_i)_{1\leq i<2^k}$ such that $x_i\in D_i$ for all $1\leq i<2^k$ and $x'=\pi_{\mathbf{x}}(a_{y_0})$. We have:
\begin{itemize}
\item $(y_i)_{0\leq i<2^k}\in\mathcal{C}$ since $(y_1,\ldots,y_{2^k-1})\in\mathcal{C}_{y_0}$.
\item For every $1\leq i< 2^k$, we have $p_{y_i}(x_i)\geq \gamma'$ and $p_{y_i}(x_i')\geq \gamma'$ since $x_i,x_i'\in D_i=A_{y_i}$. Therefore, $\displaystyle p_{y_1^{2^k-1}}(\mathbf{x})=\prod_{i=1}^{2^k-1}p_{y_i}(x_i)\geq\gamma'^{2^k-1}$ and similarly $p_{y_1^{2^k-1}}(\mathbf{x}')\geq \gamma'^{2^k-1}$.
\end{itemize} 
From the definition of $\mathcal{C}$, we get $\|p_{y_0,\mathbf{x}}-p_{y_0,\mathbf{x}'}\|_{\infty}<\gamma'$ which implies that $|p_{y_0,\mathbf{x}}(x')-p_{y_0,\mathbf{x}'}(x')|<\gamma'$. Therefore, $$|p_{y_0}(a_{y_0})-p_{y_0}(x)|=\big|p_{y_0}\big(\pi_{\mathbf{x}}^{-1}(x')\big)-p_{y_0}\big(\pi_{\mathbf{x}'}^{-1}(x')\big)\big|=|p_{y_0,\mathbf{x}}(x')-p_{y_0,\mathbf{x}'}(x')|<\gamma'.$$
We conclude that
\begin{equation}
\forall x\in\pi_{\mathbf{x}'}^{-1}(B),\; |p_{y_0}(a_{y_0})-p_{y_0}(x)|<\gamma',
\label{eqlalaeqlala}
\end{equation}
and so $p_{y_0}(x)>p_{y_0}(a_{y_0})-\gamma' \geq \frac{1}{|\mathcal{X}|}-\gamma' \geq \frac{1}{|\mathcal{X}|}-\frac{1}{|\mathcal{X}|(2^{|\mathcal{X}|}+2)}> \frac{1}{|\mathcal{X}|}-\frac{1}{2|\mathcal{X}|} =\frac{1}{2|\mathcal{X}|}>\frac{1}{(2^{|\mathcal{X}|}+2)|\mathcal{X}|}\geq \gamma'$	 which implies that $x\in A_{y_0}=D_0$. But this is true for every $x\in \pi_{\mathbf{x}'}^{-1}(B)$. We conclude that $\pi_{\mathbf{x}'}^{-1}(B)\subset D_0$. On the other hand, since $D_0\in\mathcal{A}\preceq\langle\mathcal{A}\rangle$, there exists $C\in \langle\mathcal{A}\rangle$ such that $D_0\subset C$. Therefore, $\pi_{\mathbf{x}'}^{-1}(B)\subset D_0\subset C$ and 
$$\|\langle\mathcal{A}\rangle\|=\|\langle\mathcal{A}\rangle^{(k-n)\ast}\|=|B|\stackrel{(a)}{=}|\pi_{\mathbf{x}'}^{-1}(B)|\leq |D_0|\leq |C|=\|\langle\mathcal{A}\rangle\|,$$ where (a) follows from the fact that $\pi_{\mathbf{x}'}$ is a bijection. We conclude that $\|\langle\mathcal{A}\rangle\|=|\pi_{\mathbf{x}'}^{-1}(B)|=|D_0|=|C|$. But  $\pi_{\mathbf{x}'}^{-1}(B)\subset D_0\subset C$, so we must have

\begin{equation}
\label{eqteqleq123r3}
\pi_{\mathbf{x}'}^{-1}(B)=D_0=C\in \langle\mathcal{A}\rangle.
\end{equation}

Now since this is true for every $D_0\in\mathcal{A}$, we conclude that $\mathcal{A}\subset\langle\mathcal{A}\rangle$. But $\mathcal{A}$ is an $\mathcal{X}$-cover and $\langle\mathcal{A}\rangle$ is a partition, so we must have $\mathcal{A}=\langle\mathcal{A}\rangle$. We conclude that $\mathcal{A}$ is a stable partition.

\vspace*{3mm}
2) Let $y_0\in\mathcal{C}_{y_0}$ and suppose that $D_0=A_{y_0}\in\mathcal{A}$. Define $\displaystyle a_{y_0}=\operatorname*{arg\,max}_{x\in\mathcal{X}} p_{y_0}(x)$. Let $B\in \mathcal{A}^{k\ast}$ and $\mathbf{x}'\in\mathcal{X}^{2^k-1}$ be defined as in equations \eqref{eqteqleq123r1} and \eqref{eqteqleq123r2} respectively. Equation \eqref{eqteqleq123r3} shows that $D_0=\pi_{\mathbf{x}'}^{-1}(B)$. By replacing $\pi_{\mathbf{x}'}^{-1}(B)$ by $D_0$ in equation \eqref{eqlalaeqlala}, we conclude that for every $x\in D_0$ we have $|p_{y_0}(a_{y_0})-p_{y_0}(x)|<\gamma'$, which means that 
\begin{equation}
p_{y_0}(a_{y_0})-\gamma'< p_{y_0}(x)< p_{y_0}(a_{y_0})+\gamma'.
\label{eqlala1}
\end{equation}
On the other hand, for every $x\in\mathcal{X}\setminus D_0=\mathcal{X}\setminus A_{y_0}$, we have 
\begin{equation}
0\leq p_{y_0}(x)< \gamma'.
\label{eqlala2}
\end{equation}
By adding up the inequalities \eqref{eqlala1} for all $x\in D_0$ with the inequalities \eqref{eqlala2} for all $x\in \mathcal{X}\setminus D_0$, we get $|D_0|\cdot p_{y_0}(a_{y_0})-|D_0|\cdot \gamma'< 1 <|D_0|\cdot p_{y_0}(a_{y_0})+|\mathcal{X}|\cdot \gamma'$, from which we get $|p_{y_0}(a_{y_0})-\frac{1}{|D_0|}|< \frac{|\mathcal{X}|}{|D_0|}\gamma'\leq |\mathcal{X}|\gamma'$. We conclude that for every $x\in D_0$, we have $$\left|p_{y_0}(x)-\frac{1}{|D_0|}\right|\leq |p_{y_0}(x)-p_{y_0}(a_{y_0})|+\left|p_{y_0}(a_{y_0})-\frac{1}{|D_0|}\right|< \gamma'+|\mathcal{X}|\gamma'< (2^{|\mathcal{X}|}+1)\gamma'\leq\gamma,$$ and for every $x\in\mathcal{X}\setminus D_0=\mathcal{X}\setminus A_{y_0}$, we have $p_{y_0}(x)<\gamma'< \gamma$. Therefore, $\|p_{y_0}-\mathbb{I}_{D_0}\|_{\infty}\leq \gamma$ and so $y_0\in\mathcal{Y}_{\mathcal{A},\gamma}(X_0,Y_0)$.
\end{proof}

\vspace*{3mm}
Now we are ready to prove Proposition \ref{MainProp}:
\begin{proof}[Proof of Proposition \ref{MainProp}]
According to Lemma \ref{lemMainlem}, $\mathcal{A}$ is a stable partition. Moreover, for every $y_0\in\mathcal{C}_0$ satisfying $A_{y_0}\in\mathcal{A}$, we have $y_0\in\mathcal{Y}_{\mathcal{A},\gamma}(X_0,Y_0)$. Therefore, if we define $$\mathcal{Y}_{\mathcal{A}}'=\big\{y\in\mathcal{Y}:\; A_y\in \mathcal{A}\big\},$$ then $\mathcal{Y}_{\mathcal{A}}'\cap\mathcal{C}_0\subset \mathcal{Y}_{\mathcal{A},\gamma}(X_0,Y_0)$.

We have $\displaystyle \mathcal{Y}_{\mathcal{A}}'^c=\bigcup_{\substack{D\subset \mathcal{X}\\D\notin\mathcal{A}}}\mathcal{Y}_{D}$. Now since $\mathbb{P}_{Y_0}(\mathcal{Y}_{D})<\gamma'$ for every $D\notin\mathcal{A}$, we have:

$$\mathbb{P}_{Y_0}(\mathcal{Y}_{\mathcal{A}}'^c)\leq \sum_{\substack{D\subset \mathcal{X}\\D\notin\mathcal{A}}}\mathbb{P}_{Y_0} (\mathcal{Y}_{D})< 2^{|\mathcal{X}|}\gamma'.$$

But $\mathbb{P}_{Y_0}(\mathcal{C}_0)>1-\gamma'$ by Lemma \ref{lemtemlemtem1}, so we have $\mathbb{P}_{Y_0}(\mathcal{Y}_{\mathcal{A}}'\cap\mathcal{C}_0)>1-(2^{|\mathcal{X}|}+1)\gamma'\geq 1-\gamma$, which implies that $\mathbb{P}_{Y_0}(\mathcal{Y}_{\mathcal{A},\gamma}(X_0,Y_0))>1-\gamma$ since $\mathcal{Y}_{\mathcal{A}}'\cap\mathcal{C}_0\subset \mathcal{Y}_{\mathcal{A},\gamma}(X_0,Y_0)$. By letting $\mathcal{H}=\mathcal{A}$, which is a stable partition, we get $\mathcal{P}_{\mathcal{H},\gamma}(X_0,Y_0)=\mathbb{P}_{Y_0}(\mathcal{Y}_{\mathcal{H},\gamma}(X_0,Y_0))>1-\gamma$.
\end{proof}

\section{Proof of Proposition \ref{propBhat}}
\label{appB}
Inequalities 1) and 2) are proved in Proposition 3.3 of \cite{SasogluThesis}, and the upper bound of 3) is shown in Proposition 3.2 of \cite{SasogluThesis}. It remains to show the lower bound of 3).

Let $D^{\mathrm{ML}}_{W}:\mathcal{Y}\rightarrow\mathcal{X}$ be the ML decoder of the channel $W:\mathcal{X}\longrightarrow\mathcal{Y}$. I.e., for every $y\in\mathcal{Y}$, $\displaystyle D^{\mathrm{ML}}_{W}(y)=\operatorname*{arg\,max}_{x\in\mathcal{X}} W(y|x)$. For every $x\in\mathcal{X}$, let $\mathbb{P}_{e,x}(W)$ be the probability of error of $D^{\mathrm{ML}}_{W}$ given that $x$ was sent through $W$. Clearly, $\displaystyle\mathbb{P}_e(W)=\frac{1}{|\mathcal{X}|}\sum_{x\in\mathcal{X}}\mathbb{P}_{e,x}(W)$.

Now fix $x,x'\in\mathcal{X}$ such that $x\neq x'$ and define $\mathbb{P}_{e,x,x'}(W):=\frac{1}{2}\mathbb{P}_{e,x}(W)+\frac{1}{2}\mathbb{P}_{e,x'}(W)$. Consider the channel $W_{x,x'}:\{0,1\}\longrightarrow\mathcal{Y}$. We can use $D^{\mathrm{ML}}_{W}$ to construct a decoder for $W_{x,x'}$ as follows:
\begin{itemize}
\item If $D^{\mathrm{ML}}_{W}(y)=x$, the decoder output is $0$.
\item If $D^{\mathrm{ML}}_{W}(y)=x'$, the decoder output is $1$.
\item If $D^{\mathrm{ML}}_{W}(y)\notin\{x,x'\}$ for $y\in\mathcal{Y}$, we consider that an error has occurred.
\end{itemize}
It is easy to see that the probability of error of the constructed decoder (assuming uniform binary input to the channel $W_{x,x'}$) is equal to $\frac{1}{2}\mathbb{P}_{e,x}(W)+\frac{1}{2}\mathbb{P}_{e,x'}(W)=\mathbb{P}_{e,x,x'}(W)$. But since the ML decoder of $W_{x,x'}$ has the minimal probability of error among all decoders, we conclude that:

\begin{equation}
\label{eqe123}
\mathbb{P}_{e,x,x'}(W)\geq \mathbb{P}_e(W_{x,x'})= \frac{1}{2}\sum_{y\in\mathcal{Y}}\min\big\{W_{x,x'}(y|0),W_{x,x'}(y|1)\big\}=\frac{1}{2}\sum_{y\in\mathcal{Y}}\min\big\{W(y|x),W(y|x')\big\}.
\end{equation}
On the other hand, we have:
\begin{align*}
Z(W_{x,x'})&=\sum_{y\in\mathcal{Y}}\sqrt{W(y|x)W(y|x')}=\sum_{y\in\mathcal{Y}}\sqrt{\Big(\min\big\{W(y|x),W(y|x')\big\}\Big)\Big(\max\big\{W(y|x),W(y|x')\big\}\Big)}\\
&\stackrel{(a)}{\leq} \Big(\sum_{y\in\mathcal{Y}}\min\big\{W(y|x),W(y|x')\big\}\Big)^{1/2}\Big(\sum_{y\in\mathcal{Y}}\max\big\{W(y|x),W(y|x')\big\}\Big)^{1/2}\\
&\stackrel{(b)}{\leq} \sqrt{2\mathbb{P}_{e,x,x'}(W)}\Big(\sum_{y\in\mathcal{Y}}W(y|x)+W(y|x')\Big)^{1/2}= \sqrt{2\mathbb{P}_{e,x,x'}(W)}.\sqrt{2}=2\sqrt{\mathbb{P}_{e,x,x'}(W)},
\end{align*}
where (a) follows from the Cauchy-Schwartz inequality. (b) follows from \eqref{eqe123} and from the fact that $\max\big\{W(y|x),W(y|x')\big\}\leq W(y|x)+W(y|x')$. We conclude that:
\begin{equation}
\mathbb{P}_{e,x,x'}(W)\geq \frac{1}{4}Z(W_{x,x'})^2.
\label{eqe432}
\end{equation}

Now since $\displaystyle \mathbb{P}_{e}(W)=\frac{1}{|\mathcal{X}|}\sum_{x\in\mathcal{X}}\mathbb{P}_{e,x}(W)$, we have:
\begin{align*}
\sum_{\substack{x,x'\in\mathcal{X}\\x\neq x'}}\mathbb{P}_{e,x,x'}(W)&=\frac{1}{2}\Big(\sum_{\substack{x,x'\in\mathcal{X}\\x\neq x'}}\mathbb{P}_{e,x}(W)\Big)+\frac{1}{2}\Big(\sum_{\substack{x,x'\in\mathcal{X}\\x\neq x'}}\mathbb{P}_{e,x'}(W)\Big)\\
&=\frac{1}{2}\Big(\sum_{x\in\mathcal{X}}(|\mathcal{X}|-1)\mathbb{P}_{e,x}(W)\Big)+\frac{1}{2}\Big(\sum_{x'\in\mathcal{X}}(|\mathcal{X}|-1)\mathbb{P}_{e,x'}(W)\Big)\\
&=\frac{1}{2}(|\mathcal{X}|-1)|\mathcal{X}|\mathbb{P}_{e}(W)+\frac{1}{2}(|\mathcal{X}|-1)|\mathcal{X}|\mathbb{P}_{e}(W)=(|\mathcal{X}|-1)|\mathcal{X}|\mathbb{P}_{e}(W).
\end{align*}

Therefore,
\begin{align*}
\mathbb{P}_{e}(W)&=\frac{1}{(|\mathcal{X}|-1)|\mathcal{X}|}\sum_{\substack{x,x'\in\mathcal{X}\\x\neq x'}}\mathbb{P}_{e,x,x'}(W)\stackrel{(a)}{\geq} \frac{1}{(|\mathcal{X}|-1)|\mathcal{X}|} \sum_{\substack{x,x'\in\mathcal{X}\\x\neq x'}} \frac{1}{4} Z(W_{x,x'})^2 
\\
&\stackrel{(b)}{\geq} \frac{1}{4} \Big(\frac{1}{(|\mathcal{X}|-1)|\mathcal{X}|} \sum_{\substack{x,x'\in\mathcal{X}\\x\neq x'}} Z(W_{x,x'})\Big)^2=\frac{1}{4} Z(W)^2,
\end{align*}
where (a) follows from \eqref{eqe432} and (b) follows from the convexity of the mapping $t\rightarrow t^2$.

\section*{Acknowledgment}
I would like to thank Emre Telatar for enlightening discussions and for his helpful feedback on the paper. 

\bibliographystyle{IEEEtran}
\bibliography{bibliofile}
\end{document}